\newtheorem{Lemma}{Lemma}[section]
\newtheorem{proposition}{Proposition}[section]
\newtheorem{theorem}{Theorem}[section]
\begin{document}
\numberwithin{equation}{section}
\title{Self-Gravitating Relativistic Fluids:\\The Formation of a Free Phase Boundary\\in the Phase Transition from Hard to Soft}
\author{Demetrios Christodoulou, Andr\'e Lisibach\\ ETH Z\"urich}
\maketitle

\section{The Phase Transition from Hard to Soft\label{section1}}

In \cite{I} we introduced a relativistic fluid model with two phases, a soft phase which holds when the density of mass-energy $\rho$ is below a certain constant $\rho_0$ and in which the sound speed is zero, and a hard phase, which holds when $\rho$ is above $\rho_0$ and in which the sound speed is equal to the speed of light. The model is of relevance in the study of the gravitational collapse of the degenerate cores of massive stars, the associated supernova explosions and the formation of neutron stars and black holes. The soft phase corresponds to degenerate stellar matter below nuclear density while the hard phase corresponds to homogeneous nuclear matter at supernuclear densities. The constant $\rho_0$, which by an appropriate choice of units we set equal to unity, corresponds to the nuclear saturation density (see Section 2 of \cite{I}).

In \cite{I} we began the study of the dynamics of the model in the spherically symmetric case. The problem then reduces to one on a 2-dimensional quotient space-time manifold $Q$. Starting with initial conditions which correspond entirely to the soft phase, we showed that predictions based solely on the soft phase break down beyond an achronal boundary $\partial \mathscr{J}^+(\mathscr{K})$, consisting of smooth spacelike segments $\Sigma_i$ along which $\rho=1$ and which form the spacelike part of the phase boundary, joined by pairs $C_i^+$, $C_{i+1}^-$ of outgoing and incoming null segments (see Section 5 of \cite{I}). The end points $N_i^+$, $N_i^-$ of the spacelike segments $\Sigma_i$ at which $\Sigma_i$ turns null, and which are, at the same time, the past end points of the null segments $C_i^+$ and $C_{i+1}^-$ respectively, we called \textit{boundary null points}. The data induced by the soft phase along the spacelike segments $\Sigma_i$ provide the initial conditions for a subsequent hard phase, determined in $\mathscr{D}^+(\Sigma_i)$, the future domain of dependence of $\Sigma_i$. In Section 6 of \cite{I}, we studied the associated hard-phase Cauchy problem in the large and analyzed the behavior near the null points, which, as we showed, are analogous to the \textit{branch points} of minimal surface theory.

In \cite{II} we began the investigation of the problem of extending the solution into the causal future of the boundary null points $N_i^+$, $N_i^-$. We formulated the problems of the formation and continuation of a free phase boundary and, after considering first the case of a vacuum free boundary, we solved the continuation problem in general. We then turned to more global aspects of the free-boundary problem, in particular to the study of how a free phase boundary terminates. This led to the consideration of four cases, the last of which was proved to be non-generic, while the first corresponds to a spacelike transition from hard to soft with a null end point lying on the boundary of the causal past of the point of termination, the second case corresponds to a point of intersection with a spacelike transition from soft to hard, from which point the discontinuity propagates in the hard phase as a null shock, and the third case which corresponds to a point of intersection with a spacelike transition from hard to soft, from which point the discontinuity propagates as a contact discontinuity in the soft phase.

In \cite{III} we solved the problem of formation of a free phase boundary in the phase transition from soft to hard.

However there was one remaining problem in the above program which was not addressed. This was the problem of the formation of a free phase boundary in the phase transition from hard to soft. The formulation and solution of this problem is the content of the present paper.

Let $\Sigma_0$ be a piecewise smooth curve in the quotient manifold $Q$ containing no pair of timelike related points. $\Sigma_0$ then consists of spacelike and null segments. Let initial data for the hard phase equations be given along $\Sigma_0$ so that
\begin{align}
  -g^{ab}\p_a\phi\p_b\phi\geq 1\quad:\quad\textrm{along $\Sigma_0$}.\notag
\end{align}
We denote by $\mathscr{D}^+(\Sigma_0)$ the future domain of dependence of $\Sigma_0$ (this refers to the underlying conformal geometry which is that of the Minkowski plane). Corresponding to the given initial data there is a solution of the hard phase equations defined on a maximal future domain of development $\mathscr{D}^+_\ast (\Sigma_0)\subset \mathscr{D}^+(\Sigma_0)$. We remark that, as a consequence of prop.\ 6.1 of \cite{I}, $\inf_{\mathscr{D}^+_\ast(\Sigma_0)}r>0$ implies $\mathscr{D}^+_\ast(\Sigma_0)=\mathscr{D}^+(\Sigma_0)$.
Recalling that the hard phase is limited by the condition $-g^{ab}\p_a\phi\p_b\phi:=\sigma^2\geq 1$, the set
\begin{align}
  \mathscr{L}=\left\{x\in\mathscr{D}_\ast^+(\Sigma_0):-g^{ab}\p_a\phi\p_b\phi<1\right\}\notag
\end{align}
must be excised from $\mathscr{D}_\ast^+(\Sigma_0)$. This is however not all; for, causality demands that for each excluded point the domain of its influence, i.e.\ its causal future, be also excluded. Thus we must excise from $\mathscr{D}_\ast^+(\Sigma_0)$ the set $\mathscr{J}^+(\mathscr{L})$, the causal future of $\mathscr{L}$ in $\mathscr{D}_\ast^+(\Sigma_0)$. The boundary $\p\mathscr{J}^+(\mathscr{L})$ of $\mathscr{J}^+(\mathscr{L})$ in $\mathscr{D}_\ast^+(\Sigma_0)$ is a Lipschitz curve containing no pair of timelike related points. It consists in general of a finite number or a denumerable infinity of spacelike segments $\Sigma_i:i=1,2,\ldots$ contained with their end points $\p \Sigma_i=\{N_i^-,N_i^+\}$ in $\p \mathscr{L}$, and joined by pairs $C_i^+$, $C_{i+1}^-$ where $C_i^+$ is either empty or an outgoing null segment with past end point $N_i^+$ and $C_{i+1}^-$ is either empty or an incoming null segment with past end point $N_{i+1}^-$. At least one of $C_i^+$, $C_{i+1}^-$ is non-empty. If $C_i^+$ is empty then $N_i^+$ is the future end point of $C_{i+1}^-$ and $\Sigma_i$ is strictly spacelike at $N_i^+$. If $C_{i+1}^-$ is empty then $N_{i+1}^-$ is the future end point of $C_i^+$ and $\Sigma_{i+1}$ is strictly spacelike at $N_{i+1}^-$. If $C_i^-$ is non-empty then $\Sigma_i$ is tangent at $N_i^-$ to $C_i^-$ and if $C_i^+$ is non-empty then $\Sigma_i$ is tangent at $N_i^+$ to $C_i^+$. A point of $\bar{\Sigma}_i$ at which $\bar{\Sigma}_i$ is tangent to a null curve is a \textit{null point}. When both $C_i^-$, $C_i^+$ are non-empty, the boundary points $N_i^-$, $N_i^+$ are both null points. If we assume that $\p\mathscr{L}\cap\p\mathscr{J}^+(\mathscr{L})$ does not contain any critical points of the function $\sigma$, which is true generically, the spacelike segments $\Sigma_i$ are smooth. In the following we shall consider only the case where the boundary points are null points, for otherwise no peculiar difficulties are encountered.

Consider a given spacelike segment $\Sigma_i$. For convenience of notation we omit in the following the suffix $i$. Then $\Sigma$ is described by an equation of the form
\begin{align}
  \label{eq:1}
  v=h(u)
\end{align}
where $h$ is a smooth strictly decreasing function on the interval $(u^+,u^-]$, $h'<0$ on $(u^+,u^-)$, $h'(u^-)=0$. The point $(u^-,v^-=h(u^-))$ is the inner null end point $N^-$. We may also describe $\Sigma$ by the equation
\begin{align}
  u=h^{-1}(v)
\end{align}
and $h^{-1}$ is a smooth strictly decreasing function on the interval $(v^-,v^+]$, $(h^{-1})'<0$ on $(v^-,v^+)$, $(h^{-1})'(v^+)=0$. The point $(u^+=h^{-1}(v^+),v^+)$ is the outer null end point $N^+$. The former description is regular in a neighborhood of $N^-$, while the latter description is regular in a neighborhood of $N^+$. In the following we restrict attention to the former.

The hard phase solution defines the functions $\Omega$, $\phi$, $r$, $\dot{r}=Ur$, along $\Sigma$
\begin{align}
  \Omega&=\Omega_\ast(u)\\
  \phi&=\phi_\ast(u)\\
  r&=r_\ast(u),\qquad \dot{r}=\dot{r}_\ast(u)
\end{align} where
\begin{align}
    \Omega_\ast(u)&=\Omega(u,h(u))\\
    \phi_\ast(u)&=\phi(u,h(u))\\
\label{eq:2}
    r_\ast(u)&=r(u,h(u)),\qquad \dot{r}_\ast(u)=(Ur)(u,h(u)).
\end{align}
These functions constitute initial data for a soft phase solution in the future of $\Sigma$. The soft phase solution is described in comoving coordinates $\tau$, $\chi$ where we set
\begin{align}
  \label{eq:3}
  \chi=-u\,:\,\textrm{along $\Sigma$}
\end{align}
and $\tau$ is defined by the conditions that it coincides with $\phi_\ast$ along $\Sigma$. Thus $\Sigma$ is given in terms of the coordinates $\tau$, $\chi$ by
\begin{align}
  \label{eq:4}
  \tau=-f(\chi),
\end{align}
where
\begin{align}
  \label{eq:5}
  f(\chi)  =-\phi_\ast(-\chi).
\end{align}

We must require that the arc length along $\Sigma$ is that given by the prior hard phase solution
\begin{align*}
    ds^2=-\Omega^2_\ast h'du^2,
\end{align*}
the restriction to $v=h(u)$ of the metric $-\Omega^2dudv$ in null coordinates. This must coincide with
\begin{align*}
    ds^2=\left(-(f')^2+e^{2\omega_\ast}\right)d\chi^2,
\end{align*}
the restriction to $\tau=-f(\chi)$ of the soft phase metric $-d\tau^2+e^{2\omega}d\chi^2$ in comoving coordinates. Hence $\omega_\ast(\chi)=\omega(-f(\chi),\chi)$ is defined by
\begin{align}
  \label{eq:6}
e^{2\omega_\ast}=(f')^2-\Omega_\ast^2h'
\end{align}
where the argument of $\Omega_\ast$ and $h'$ is $u=-\chi$.

The initial data for the soft phase equation (eq. (3.45e) of \cite{I})
\begin{align}
  \label{eq:7}
  \frac{\p^2 r}{\p\tau^2}=-\frac{m}{r^2}
\end{align}
are
\begin{align}
  \label{eq:8}
  r(-f(\chi),\chi)=r_\ast(-\chi),\qquad  \frac{\p r}{\p \tau} (-f(\chi),\chi)=\dot{r}_\ast(-\chi).
\end{align}

We shall show below that the mass function $m$ along $\Sigma$ defined by the above soft phase initial data according to
\begin{align*}
  1-\frac{2m} {r}=-\left(\frac{\p r}{\p\tau}\right)^2+e^{-2\omega}\left(\frac{\p r}{\p \chi}\right)^2
\end{align*}
coincides with the mass function $m$ along $\Sigma$ induced by the prior hard phase,
\begin{align*}
  1-\frac{2m}{r}=-\frac{4}{\Omega^2}\frac{\p r}{\p u}\frac{\p r}{\p v}.
\end{align*}
To accomplish this we extend the null coordinates $u$, $v$ to the soft phase and show that the conformal factor $\Omega$ and the partial derivatives $\p r/\p u$, $\p r/\p v$ extend continuously across $\Sigma$, by virtue of the conditions \eqref{eq:6}, \eqref{eq:8}.

The functions $u(\tau,\chi)$, $v(\tau,\chi)$ are defined by the conditions
\begin{align}
  \label{eq:9}
  \frac{\p u}{\p\tau}+e^{-\omega}\frac{\p u}{\p\chi}&=0,\qquad u(-f(\chi),\chi)=-\chi\\
  \label{eq:10}
    \frac{\p v}{\p\tau}-e^{-\omega}\frac{\p v}{\p\chi}&=0,\qquad v(-f(\chi),\chi)=h(-\chi)
\end{align}
In the coordinates $u$, $v$ the soft phase metric $-d\tau^2+e^{2\omega}d\chi^2$ takes the form $-\Omega^2dudv$ with the conformal factor given by
\begin{align}
  \label{eq:11}
  \Omega^2\frac{\p u}{\p \tau}\frac{\p v}{\p \tau}=1.
\end{align}
The conditions \eqref{eq:9}, \eqref{eq:10} imply that along $\Sigma$
\begin{align}
  \label{eq:12}
  \frac{\p u}{\p \tau}=\frac{1}{f'+e^\omega},\qquad \frac{\p v}{\p \tau}=\frac{h'}{f'-e^\omega}.
\end{align}
Therefore the conformal factor induced by the soft phase along $\Sigma$ is
\begin{align}
  \label{eq:13}
  \Omega^2=\frac{(f')^2-e^{2\omega}}{h'}
\end{align}
in agreement with \eqref{eq:6}. Thus $\Omega$ is continuous across $\Sigma$.

In the soft phase the wave function $\phi$ coincides with $\tau$. Using equations \eqref{eq:9}, \eqref{eq:10} we find that the partial derivatives of the inverse of the transformation $(\tau,\chi)\mapsto(u,v)$ are given by
\begin{align}
  \label{eq:14}
  \frac{\p \tau}{\p u}=\frac{1}{2\p u/\p\tau},\qquad \frac{\p \tau}{\p v}=\frac{1}{2\p v/\p\tau},
\end{align}
\begin{align}
  \label{eq:15}
    \frac{\p \chi}{\p u}=\frac{1}{2\p u/\p\chi},\qquad \frac{\p \chi}{\p v}=\frac{1}{2\p v/\p\chi}.
\end{align}
In view of \eqref{eq:12} we deduce in particular, that along $\Sigma$
\begin{align}
  \label{eq:16}
  \frac{\p \tau}{\p u}=\frac{1}{2}(f'+e^{\omega_\ast}),\qquad   \frac{\p \tau}{\p v}=\frac{1}{2h'}(f'-e^{\omega_\ast}).
\end{align}
Differentiating \eqref{eq:5} we obtain
\begin{align}
  \label{eq:17}
  f'=\phi'_\ast=\phi_u+\phi_vh'.
\end{align}
Here and in the following paragraph we denote by subscripts $u$ and $v$ the partial derivatives with respect to $u$ and $v$ respectively. Using \eqref{eq:6} and the fact that along $\Sigma$
\begin{align}
  \label{eq:18}
  \Omega^2=4\phi_u\phi_v,
\end{align}
we then obtain
\begin{align}
  \label{eq:19}
  e^{2\omega_\ast}=(f')^2-\Omega_\ast^2h'=(\phi_u-\phi_vh')^2.
\end{align}
Hence (since $\phi_u-\phi_vh'>0$)
\begin{align}
  \label{eq:20}
  e^{\omega_\ast}=\phi_u-\phi_vh'.
\end{align}
Substituting \eqref{eq:17} and \eqref{eq:20} yields, along $\Sigma$
\begin{align}
  \label{eq:21}
  \frac{\p \tau}{\p u}=\frac{\p\phi}{\p u},\qquad   \frac{\p \tau}{\p v}=\frac{\p\phi}{\p v},
\end{align}
i.e.\ the partial derivatives of $\phi$ are continuous across $\Sigma$.

We have (see \eqref{eq:2})
\begin{align}
  \label{eq:22}
  r_\ast'=r_u+r_vh'.
\end{align}
In the following, we denote by a subscript $-$ a quantity along $\Sigma$ induced by the prior hard phase, by a subscript $+$ a quantity along $\Sigma$ induced by the subsequent soft phase, and by a prefix $\Delta$ the difference $+$ minus $-$ of corresponding quantities. Then by virtue of the first of conditions \eqref{eq:8} we have
\begin{align}
  \label{eq:23}
  \Delta r_u+h'\Delta r_v=0.
\end{align}
Also by virtue of the second of conditions \eqref{eq:8} we have
\begin{align*}
  \left(\frac{\p r}{\p \tau}\right)_+=(Ur)_-=\left[\frac{2}{\Omega^2}(\phi_v r_u+\phi_u r_u)\right]_-=\frac{1}{2}\left(\frac{r_u}{\phi_u}+\frac{r_v}{\phi_v}\right)_-.
\end{align*}
On the other hand, in view of \eqref{eq:14},
\begin{align*}
  \left(\frac{\p r}{\p\tau}\right)_+=\left(r_u\frac{\p u}{\p \tau}+r_v\frac{\p v}{\p\tau}\right)_+=\frac{1}{2}\left(\frac{r_u}{\tau_u}+\frac{r_v}{\tau_v}\right)_+,
\end{align*}
while, as we have shown above: $\tau_{u+}=\phi_{u-}$, $\tau_{v+}=\phi_{v-}$. It follows that
\begin{align}
  \label{eq:24}
  \frac{\Delta r_u}{\phi_{u-}}+\frac{\Delta r_v}{\phi_{v_-}}=0.
\end{align}
Equations \eqref{eq:23}, \eqref{eq:24} constitute a linear homogeneous system for the differences $\Delta r_u$, $\Delta r_v$. The matrix
\begin{align*}
  M=\left(
    \begin{array}{cc}
      1 & h'\\
      \frac{1}{\phi_u} & \frac{1}{\phi_v}
    \end{array}
\right)
\end{align*}
has determinant
\begin{align*}
  \det M=\frac{1}{\phi_v}-\frac{h'}{\phi_u}>0
\end{align*}
(since $h'\leq 0$). Consequently
\begin{align*}
  \Delta r_u=\Delta r_v=0,
\end{align*}
i.e.\ $dr$ is continuous across $\Sigma$.

Since the mass function $m$ is in general defined by
\begin{align*}
  1-\frac{2m}{r}=-\frac{4}{\Omega^2}\frac{\p r}{\p u}\frac{\p r}{\p v},
\end{align*}
and $\Omega$ as well as $dr$ are continuous across $\Sigma$, so is $m$
\begin{align}
  \label{eq:25}
  m(-f(\chi),\chi)=m_\ast(-\chi),\qquad m_\ast(u)=m(u,h(u)).
\end{align}

Next we shall show that the function $\rho$ determined in the soft phase by $m$ through the equations (eqs. (3.45c), (3.45d) of \cite{I})
\begin{align}
  \label{eq:26}
  \frac{\p m}{\p \tau}=0,\qquad \frac{\p m}{\p\chi}=4\pi r^2\rho\frac{\p r}{\p \chi}
\end{align}
is equal to 1 along $\Sigma$, therefore also $\rho$ is continuous across $\Sigma$. In view of the fact that along $\Sigma$: $\Omega^2=4\phi_u\phi_v$, the hard phase equations
\begin{align}
  \label{eq:27}
  m_u=2\pi r^2(r_u-4\Omega^{-2}\phi_u^2r_v),\qquad m_v=2\pi r^2(r_v-4\Omega^{-2}\phi_v^2r_u),
\end{align}
(eqs. (3.49a), (3.49b) of \cite{I}), reduce along $\Sigma$ to
\begin{align}
  \label{eq:28}
  m_u=\frac{2\pi r^2}{\phi_v}(\phi_v r_u-\phi_u r_v),\qquad m_v=\frac{2\pi r^2}{\phi_u}(\phi_u r_v-\phi_v r_u).
\end{align}
It follows that
\begin{align*}
  \frac{dm_\ast}{du}&=m_u+m_vh'\\
  &=2\pi r^2\left[ r_u+r_v h'-\left(\frac{\phi_u r_v}{\phi_v}+\frac{\phi_v r_u}{\phi_u}h'\right)\right].
\end{align*}
Since, along $\Sigma$,
\begin{align*}
  \frac{1}{2}\left(\frac{r_u}{\phi_u}+\frac{r_v}{\phi_v}\right)=\dot{r}_\ast,
\end{align*}
we can write this in the form
\begin{align*}
  \frac{dm_\ast}{du}&=4\pi r^2[r_u+r_vh'-\dot{r}_\ast(\phi_u+\phi_vh')]\\
  &=4\pi r^2\left(\frac{dr_\ast}{du}-\dot{r}_\ast\frac{d\phi_\ast}{du}\right).
\end{align*}
Therefore, by the soft phase equations \eqref{eq:26} and \eqref{eq:3}, \eqref{eq:4}, \eqref{eq:5}, \eqref{eq:8} together with \eqref{eq:25}, we have along $\Sigma$
\begin{align*}
  4\pi r^2\rho\frac{\p r}{\p\chi}=\frac{\p m}{\p \chi}=\frac{dm}{d\chi}=4\pi r^2\left(\frac{dr}{d\chi}+\frac{\p r}{\p \tau}\frac{df}{d\chi}\right)=4\pi r^2\frac{\p r}{\p \chi}.
\end{align*}
Consequently, $\rho=1$ along $\Sigma$ with respect to the soft phase and $\rho$ is continuous across $\Sigma$.

We have established this far the continuity across $\Sigma$ of $\Omega$, $\phi$, $r$, $d\phi$, $dr$, $m$ and $\rho$. It follows that besides the metric $g$, $r$ and $dr$, the thermodynamic variables $\rho$, $p$ and the fluid velocity $U$ are also continuous across $\Sigma$. This implies the continuity across $\Sigma$ of the energy tensor $T$, $S$ (see equations (3.19a), (3.19b) of \cite{I})
\begin{align*}
  T^{ab}  =(\rho+p)U^aU^b+pg^{ab},\qquad S=p.
\end{align*}
The Hessian system (equation (3.6b) of \cite{I}) reads
\begin{align*}
  \nabla_a\nabla_b r=(1/2r)(1-\p^cr\p_cr)g_{ab}-4\pi r(T_{ab}-g_{ab}\textrm{tr}T).
\end{align*}
The right hand side has been shown to be continuous across $\Sigma$. Therefore so is the left hand side, i.e.~the Hessian of $r$ is continuous across $\Sigma$. Now the derivative with respect to the parameter $u$ of the restriction to $\Sigma$ of $\p r/\p u$, that is
\begin{align*}
 \left( \frac{\p}{\p u}+h'\frac{\p}{\p v}\right)\frac{\p r}{\p u}=(\nabla\nabla r)_{uu}+h'(\nabla\nabla r)_{uv}+\frac{2}{\Omega}\frac{\p \Omega}{\p u}\frac{\p r}{\p u}
\end{align*}
must be continuous across $\Sigma$. It follows that $\p\Omega/\p u$, therefore also $\p\Omega/\p v$, the partial derivative of the metric $g_{ab}$ in the null coordinate system, and the 2nd partial derivative of $r$, are all continuous across $\Sigma$.

The unit future directed normal to $\Sigma$ is given by
\begin{align}
T=\frac{1}{\Omega_\ast\sqrt{-h'}}\left(\frac{\p}{\p u}-h'\frac{\p }{\p v}\right).
\end{align}
Also, the unit outward directed tangent to $\Sigma$ is
\begin{align}
  L=-\frac{1}{\Omega_\ast\sqrt{-h'}}\left(\frac{\p}{\p u}+h'\frac{\p }{\p v}\right).
\end{align}
The continuity of the connection across $\Sigma$ implies that the geodesic curvature of $\Sigma$, $g(\nabla_L T,L)$, induced from the two sides coincide.

Let $V=U^\ast$ be the outward directed spacelike vectorfield tangent to the simultaneous curves of the fluid. The geodesic curvature of the simultaneous curves is given by
\begin{align}
  \kappa=g(\nabla_VU,V).
\end{align}
In the soft phase, we have, in comoving coordinates
\begin{align}
\label{eq:29}
  U=\frac{\p }{\p \tau},\qquad V=e^{-\omega}\frac{\p}{\p \chi}.
\end{align}
Hence in the soft phase
\begin{align}
 \kappa =\frac{\p \omega}{\p \tau}.
\end{align}
Now $g(\nabla_UU,V)$ suffers a jump across $\Sigma$, for $\nabla_UU$ is given along $\Sigma$ on the hard phase side by
\begin{align*}
  (\nabla_UU)_-=-V(\log \|d\phi\|)_-V
\end{align*}
while $\nabla_UU$ vanishes in the soft phase
\begin{align*}
  (\nabla_UU)_+=0.
\end{align*}
Thus
\begin{align*}
  \Delta g(\nabla_UU,V)=g(\nabla_UU,V)_+-g(\nabla_UU,V)_-=V(\log \|d\phi\|)_-
\end{align*}
which is non-zero at a point of $\Sigma$ unless $\Sigma$ is tangent to a simultaneous curve at that point. Expressing, along $\Sigma$,
\begin{align*}
  V=\frac{L+g(L,U)U}{g(L,V)}
\end{align*}
and noting that $\nabla_LU$ is continuous across $\Sigma$, we conclude that the geodesic curvature of the simultaneous curves suffers a jump across $\Sigma$ given by
\begin{align}
  \Delta \kappa=\frac{g(L,U)}{g(L,V)}\Delta g(\nabla_UU,V)=\frac{g(L,U)}{g(L,V)}V(\log \|d\phi\|)_-.
\end{align}

Consider now a point $(-f(\chi_0),\chi_0)$ on $\Sigma$ at which
\begin{align*}
  \frac{\p r}{\p \chi}(-f(\chi_0),\chi_0)=0.
\end{align*}
This means that in the 3-dimensional spacelike hypersurface $H_0$ orthogonal to the fluid flow lines, which corresponds, in the 4-dimensional spacetime manifold $M$, to the simultaneous curve passing through the given point on $\Sigma$, there is a minimal sphere $S_0$ corresponding to that point. The $\tau\chi$- component of the Hessian equations (equations (3.33b) of \cite{I}), which in the soft phase reads
\begin{align*}
 \frac{\p^2 r}{\p\tau\p\chi}-\frac{\p \omega}{\p \tau}\frac{\p r}{\p\chi}=0,
\end{align*}
then implies that along the flow line $\chi=\chi_0$,
\begin{align*}
  \frac{\p r}{\p\chi}(\tau,\chi_0)=0
\end{align*}
as long as that flow line remains in the soft phase. Thus the flow lines through $S_0$ in $M$ intersects each orthogonal hypersurface $H$ in a neighborhood of $H_0$ to the future, at a sphere $S$ which is minimal in $H$.

We now focus attention to the behavior of the soft phase solution in a neighborhood of a boundary null point. We have the following remarkable fact:

\begin{proposition}
The boundary null points of a spacelike curve of transition from hard to soft are critical points of the function $\rho$ with respect to the subsequent soft phase.\label{proposition1}
\end{proposition}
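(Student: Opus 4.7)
The plan is to reduce the statement to showing that $\nabla_aU^a$ on the soft side vanishes at $N^-$, and to obtain this vanishing from the wave equation of the hard phase combined with the null tangency of $\Sigma$ at $N^-$. I treat $N^-$; the argument at the outer null end point $N^+$ is identical after exchanging the roles of $u$ and $v$ and using the description $u=h^{-1}(v)$.

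For the reduction I invoke baryon conservation in the soft phase: $\nabla_a(\rho U^a)=0$ reads in comoving coordinates $\rho_\tau+\rho(2r_\tau/r+\omega_\tau)=0$, where $2r_\tau/r+\omega_\tau=\nabla_aU^a$. Together with $\rho|_\Sigma=1$, the tangential relation $\rho_u+h'\rho_v=0$ on $\Sigma$ already gives $\rho_u|_{N^-}=0$ from $h'(u^-)=0$, so $d\rho|_{N^-}=0$ reduces to $\rho_\tau|_{N^-,+}=-\nabla_aU^a|_{N^-,+}=0$; the remaining component $\rho_v|_{N^-}$ is then recovered from $\rho_\tau=\rho_u u_\tau+\rho_v v_\tau$ with $v_\tau\neq 0$.

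To produce the required vanishing I use the wave equation. The hard-phase Lagrangian $\mathcal{L}=(\sigma^2-1)/2$ yields the Euler--Lagrange equation $\Box\phi=0$; in the soft phase $\phi=\tau$ and a direct comoving computation gives $\Box\tau=-(2r_\tau/r+\omega_\tau)=-\nabla_aU^a$. Since in null coordinates $\Box\phi=-(4/(\Omega^2 r))(r\phi_{uv}+r_u\phi_v+r_v\phi_u)$, equating the hard and soft expressions on $\Sigma$ and using the continuity across $\Sigma$ of $r$, $r_u$, $r_v$, $\phi_u$, $\phi_v$ and $\Omega$ established in the excerpt, I obtain the jump identity
\begin{align*}
[\phi_{uv}]:=\phi_{uv}|_+-\phi_{uv}|_-=\tfrac{\Omega^2}{4}\nabla_aU^a|_+\qquad\textrm{on }\Sigma.
\end{align*}
Exploiting also the continuity of $\phi_v$ itself across $\Sigma$, differentiation in the parameter $u$ along $\Sigma$ must agree from both sides, so $[\phi_{uv}]+h'[\phi_{vv}]=0$ on $\Sigma$, and substituting,
\begin{align*}
h'\,[\phi_{vv}]=-\tfrac{\Omega^2}{4}\nabla_aU^a|_+\qquad\textrm{on }\Sigma.
\end{align*}

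The step I expect to require the most care is the boundedness of $[\phi_{vv}]$ as one approaches $N^-$, since then taking $u\to u^-$ in the identity above and using $h'(u^-)=0$ will force $\nabla_aU^a|_{N^-,+}=0$. On the hard side this boundedness is immediate from smoothness of the hard solution. On the soft side, at $N^-$ the Jacobian $u_\tau v_\chi-u_\chi v_\tau$ of the transformation $(\tau,\chi)\mapsto(u,v)$ evaluates, by \eqref{eq:12}, \eqref{eq:17} and \eqref{eq:20} together with $h'(u^-)=0$, to $1/(2\phi_v)$, which is non-zero since $\Omega^2=4\phi_u\phi_v>0$; consequently the inverse map, and hence $\tau(u,v)$ and $\phi_{vv}|_+=\tau_{vv}$, is smooth on the soft side in a neighbourhood of $N^-$. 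This completes the plan.
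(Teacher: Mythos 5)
Your reduction of the proposition to the single statement $\bigl(\nabla_aU^a\bigr)\big|_{+}=0$ at $N^-$ is sound and is in fact the same target the paper aims at (its conclusion $\nabla\cdot(r^2U)=0$ at $N^-$ coming from the soft phase, combined with \eqref{eq:42} and the tangency of $\partial/\partial u$ to $\Sigma$ at $N^-$). The jump identity $h'\,[\phi_{vv}]=-\tfrac{\Omega^2}{4}\nabla_aU^a|_{+}$ on $\Sigma\setminus\{N^-\}$ is also correct. The gap is exactly where you anticipated it: the boundedness of $[\phi_{vv}]$ near $N^-$. The non-vanishing of the Jacobian of $(\tau,\chi)\mapsto(u,v)$ in the limit along $\Sigma$ does not give $C^2$ regularity of that map up to $N^-$, because the map is itself only defined by solving the transport equations \eqref{eq:9}, \eqref{eq:10}, and $\Sigma$ is \emph{characteristic} for the $v$-equation at $N^-$ (it is tangent there to the incoming null direction along which $v$ is constant). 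Writing $v=V(\tilde v)$ with $\tilde v$ a smooth soft-phase null coordinate, the data relation $V(\tilde v|_\Sigma(\chi))=h(-\chi)$ with $\tilde v|_\Sigma\sim\tfrac{a}{2}\chi^2$ and $h(-\chi)-v^-\sim\tfrac{k}{2}\chi^2$ forces $V(s)=v^-+(k/a)s+O(s^{3/2})$, with a generically non-zero $s^{3/2}$ coefficient; hence $V''(s)\sim s^{-1/2}$ and $\phi_{vv}|_{+}\sim(u^--u)^{-1}$ along $\Sigma$. This is precisely the rate at which $h'$ vanishes, so $h'[\phi_{vv}]$ tends to a finite limit whose vanishing is exactly the content of the proposition — your argument becomes circular at the decisive step. (This degeneracy is the ``branch point'' behavior of the null coordinates at null points that the paper alludes to; it is why all soft-phase Taylor expansions in the paper, e.g.\ \eqref{eq:101}--\eqref{eq:103}, are carried out in the comoving coordinates $\tau$, $\chi$ rather than in $u$, $v$.)

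The idea you are missing is the one the paper uses to sidestep any transversal $v$-derivative of the soft-phase solution at $N^-$: since $U$ is geodesic in the soft phase, the $v$-component of $\nabla_UU=0$ together with the normalization $\Omega^2U^uU^v=1$ expresses the transversal quantity $(\nabla_{\partial/\partial v}U)^v$ algebraically in terms of the \emph{tangential} quantity $(\nabla_{\partial/\partial u}U)^v$, and at $N^-$ the latter is determined entirely by the hard-phase data along $\Sigma$ (computed there from $\partial\sigma/\partial u=0$ and the wave equation, equations \eqref{eq:32}--\eqref{eq:38}). Adding \eqref{eq:37} and \eqref{eq:40} then gives $\nabla\cdot U=-2r^{-1}Ur$ at $N^-$ by a finite computation, with no regularity claim about $\phi_{vv}|_{+}$ needed. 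If you want to keep your wave-equation formulation, you would have to replace the boundedness claim by an actual evaluation of $\lim_{u\to u^-}h'\phi_{vv}|_{+}$, which amounts to the same computation.
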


\begin{proof}
Consider the case of an incoming boundary null point $N^-$. Along the spacelike curve $\Sigma$ we have $\sigma=1$ and $\Sigma$ is tangent at $N^-$ to the incoming null curve through $N^-$. Hence
\begin{align}
  \label{eq:30}
  \left(\frac{\p\sigma}{\p u}\right)_{N^-}=0.
\end{align}
In view of the fact that $\sigma$ is given in the hard phase by
\begin{align}
  \label{eq:31}
  \sigma^2=\|d\phi\|^2=\frac{4}{\Omega^2}\frac{\p\phi}{\p u}\frac{\p\phi}{\p v},
\end{align}
equation \eqref{eq:30} reads
\begin{align}
 \frac{\p \phi}{\p v}\frac{\p}{\p u}\left(\frac{1}{\Omega^2}\frac{\p\phi}{\p u}\right)+\frac{1}{\Omega^2}\frac{\p\phi}{\p u}\frac{\p^2\phi}{\p u\p v}=0\quad \textrm{:}\quad \textrm{at $N^-$}.
\end{align}
Substituting for $\p^2\phi /\p u\p v$ from the wave equation (equation (3.47) of \cite{I})
\begin{align}
  \frac{\p^2\phi}{\p u\p v}+\frac{1}{r}\frac{\p r}{\p u}\frac{\p \phi}{\p v}+\frac{1}{r}\frac{\p r}{\p v}\frac{\p \phi}{\p u}=0,
\end{align}
and using \eqref{eq:31} we obtain, in view of the fact that $\sigma_{N^-}=1$,
\begin{align}
  \label{eq:32}
  \frac{\p }{\p u}\left(\frac{1}{\Omega^2}\frac{\p\phi}{\p u}\right)=\frac{1}{\Omega^2}\frac{\p\phi}{\p u}\left[\frac{1}{r}\frac{\p r}{\p u}+\frac{4}{\Omega^2}\frac{1}{r}\frac{\p r}{\p v}\left(\frac{\p \phi}{\p u}\right)^2\right]\quad\textrm{:}\quad\textrm{at $N^-$}.
\end{align}

Since $\sigma=1$ along $\Sigma$ the fluid velocity is given by
\begin{align}
  \label{eq:33}
  U^a=-g^{ab}\frac{\p \phi}{\p x^b}
\end{align}
that is
\begin{align}
  \label{eq:34}
  U=\frac{2}{\Omega^2}\left(\frac{\p \phi}{\p v}\frac{\p}{\p u}+\frac{\p\phi}{\p u}\frac{\p}{\p v}\right).
\end{align}
The covariant derivative of $U$ along $\Sigma$ is given by the data along $\Sigma$ induced by the prior hard phase. Since $\p/\p u$ is tangent to $\Sigma$ at $N^-$, the derivative $\nabla_{\p/\p u}U$ at $N^-$ is given by the prior hard phase. Taking into account the fact that the only non-vanishing connection coefficients are
\begin{align*}
\Gamma_{uu}^u=\frac{2}{\Omega}\frac{\p\Omega}{\p u},\qquad \Gamma_{vv}^v=\frac{2}{\Omega}\frac{\p\Omega}{\p v}
\end{align*}
we find
\begin{align}
  \label{eq:35}
  (\nabla_{\p/\p u}U)^u&=\frac{\p U^u}{\p u}+\Gamma_{uu}^uU^u\notag\\
&=\frac{2}{\Omega^2}\frac{\p^2\phi}{\p u\p v}=-\frac{2}{r\Omega^2}\left(\frac{\p r}{\p u}\frac{\p \phi}{\p v}+\frac{\p r}{\p v}\frac{\p \phi}{\p u}\right)
\end{align}
at $N^-$, while, by virtue of \eqref{eq:32},
\begin{align}
  \label{eq:36}
  (\nabla_{\p/\p u}U)^v&=\frac{\p U^v}{\p u}\notag\\
&=2\frac{\p}{\p u}\left(\frac{1}{\Omega^2}\frac{\p\phi}{\p u}\right)=\frac{2}{\Omega^2}\frac{\p  \phi}{\p u}\left[\frac{1}{r}\frac{\p r}{\p u}+\frac{4}{\Omega^2}\frac{1}{r}\frac{\p r}{\p v}\left(\frac{\p \phi}{\p u}\right)^2\right]
\end{align}
at $N^-$. In view of \eqref{eq:34} we may write \eqref{eq:35} and \eqref{eq:36} in the form
\begin{align}
  (\nabla_{\p/\p u}U)^u&=-r^{-1}Ur\quad \textrm{:}\quad\textrm{at $N^-$,}\label{eq:37}\\
(\nabla_{\p/\p u}U)^v&=\frac{U^v}{r}\left[\frac{\p r}{\p u}+\Omega^2(U^v)^2\frac{\p r}{\p v}\right]\quad \textrm{:}\quad\textrm{at $N^-$.}\label{eq:38}
\end{align}

Now in the subsequent soft phase the velocity vectorfield $U$ is geodesic
\begin{align}
  \label{eq:39}
  \nabla_UU=0.
\end{align}
The $v$-component of this equation reads
\begin{align*}
  U^v(\nabla_{\p/\p v}U)^v=-U^u(\nabla_{\p/\p u}U)^v.
\end{align*}
Multiplying by $\Omega^2U^u$ and using the normalization condition
\begin{align*}
  \|U\|^2=\Omega^2 U^uU^v=1
\end{align*}
yields
\begin{align*}
  (\nabla_{\p/\p v}U)^v=-\Omega^2(U^u)^2(\nabla_{\p/\p u}U)^v.
\end{align*}
Thus, by \eqref{eq:38} at $N^-$ we have, coming from the soft phase
\begin{align*}
  (\nabla_{\p/\p v}U)^v=-\frac{U^u}{r}\left[\frac{\p r}{\p u}+\Omega^2(U^v)^2\frac{\p r}{\p v}\right]
\end{align*}
that is, in view of the normalization condition, simply
\begin{align}
  \label{eq:40}
  (\nabla_{\p/\p v}U)^v=-r^{-1}Ur\quad\textrm{:}\quad\textrm{at $N^-$}
\end{align}
coming from the soft phase. In conjunction with \eqref{eq:37} we then deduce
\begin{align*}
  \nabla\cdot U=(\nabla_{\p/\p u}U)^u+(\nabla_{\p/\p v}U)^v=-2r^{-1}Ur
\end{align*}
i.e.
\begin{align}
  \nabla\cdot (r^2 U)=0\quad\textrm{:}\quad\textrm{at $N^-$}\label{eq:41}
\end{align}
coming from the soft phase.

Now in the soft phase the function $\rho$ satisfies the equation
\begin{align}
  U\rho+\rho r^{-2}\nabla\cdot (r^2U)=0.\label{eq:42}
\end{align}
Thus \eqref{eq:41} implies that
\begin{align}
  U\rho=0\quad\textrm{:}\quad\textrm{at $N^-$}\label{eq:43}
\end{align}
coming from the soft phase. Since $\rho=1$ along $\Sigma$ and $\p/\p u$ is tangent to $\Sigma$ at $N^-$ we also have
\begin{align}
  \frac{\p \rho}{\p u}=0\quad\textrm{:}\quad\textrm{at $N^-$}.\label{eq:44}
\end{align}
We conclude that
\begin{align*}
  d\rho=0\quad\textrm{:}\quad\textrm{at $N^-$}
\end{align*}
coming from the soft phase, i.e.~$N^-$ is a critical point of the function $\rho$ with respect to the subsequent soft phase.
\end{proof}

We now define null coordinates $u$, $v$ which are canonical in a neighborhood of the point $N^-$, by requiring that
\begin{enumerate}
\item [1.]$u=v=0$ at $N^-$,
\item [2.]$u$ is an affine parameter along $C^{\ast -}$, the incoming null curve through $N^-$, and $v$ is an affine parameter along $C^-$, the outgoing null curve through $N^-$,
\item[3.]At $N^-$: $Uu=Uv=1$.
\end{enumerate}
These requirements determine the coordinates $u$, $v$ uniquely. The second requirement is equivalent to
\begin{align*}
  \Omega(u,0)=\Omega(0,v)=c
\end{align*}
where $c$ is a constant. The third requirement reads
\begin{align}
  \label{eq:45}
  U_{N^-}=\left(\frac{\p}{\p u}+\frac{\p}{\p v}\right)_{N^-}.
\end{align}
Since $\Omega^2U^uU^v=1$, this is equivalent to
\begin{align}
  \label{eq:46}
  \Omega(0,0)=1
\end{align}
therefore $c=1$ and we have
\begin{align}
  \label{eq:47}
  \Omega(u,0)=\Omega(0,v)=1.
\end{align}
Also, comparing \eqref{eq:34} with \eqref{eq:45} we obtain, in view of \eqref{eq:46},
\begin{align}
  \label{eq:48}
  \left(\frac{\p\phi}{\p u}\right)(0,0)=\left(\frac{\p\phi}{\p v}\right)(0,0)=\frac{1}{2}.
\end{align}

\footnote{It is not possible to define the geodesic curvature of $\Sigma$ at $N^-$, because there is no unit normal at a null point.}The canonical null coordinates allow us to define a notion of curvature of $\Sigma$ at $N^-$. $\Sigma$ is given by equation \eqref{eq:1}, where, since $N^-=(u^-,v^-)=(0,0)$, $h$ is a smooth function on $(u^+,0]$, $h'<0$ on $(u^+,0)$, $h'(0)=0$. We then define the curvature of $\Sigma$ at $N^-$ to be
\begin{align}
  \label{eq:49}
  k:=h''(0).
\end{align}
Note that $k\geq 0$, and we shall assume in the following that the generic case holds, i.e.
\begin{align}
  \label{eq:50}
  k>0.
\end{align}

By a transformation of the form $\phi \mapsto \phi-c$ we can set
\begin{align}
  \label{eq:51}
  \phi(0,0)=0.
\end{align}
Then the point $N^-$ is also the origin of the system of comoving coordinates $\tau$, $\chi$ and
\begin{align}
  \label{eq:52}
  f(0)=0.
\end{align}
Recalling that $\Sigma$ is given by
\begin{align*}
  \tau=-f(\chi)\,:\,\chi\in (\chi^-,\chi^+),\quad\chi^-=-u^-=0,\quad\chi^+=-u^+,
\end{align*}
we define (compare equation (5.13) of \cite{I})
\begin{align}
  \label{eq:53}
  \delta:=e^{-\omega_\ast}\frac{d f}{d \chi}.
\end{align}
We have
\begin{align}
  \label{eq:54}
  \delta^2<1\,:\,\textrm{in $(0,\chi^+)$},\quad \delta(0)=1,\quad \delta(\chi^+)=-1.
\end{align}
We also define (compare equation (6.39) of \cite{I})
\begin{align}
  \label{eq:55}
  q:=-e^{-\omega_\ast}\frac{d\delta}{d\chi}.
\end{align}

The vectorfields $U-V$, $U+V$ are both null and future directed, $U-V$ is incoming, $U+V$ is outgoing. Thus $U-V$ is in the direction of $\p/\p u$ and $U+V$ is in the direction of $\p/\p v$. Consequently,
\begin{align}
  \label{eq:56}
  \frac{r_u}{\phi_u}=\frac{Ur-Vr}{U\phi-V\phi},\qquad \frac{r_v}{\phi_v}=\frac{Ur+Vr}{U\phi+V\phi}.
\end{align}
Now,
\begin{align}
  \label{eq:57}
  U\phi=\|d\phi\|,\qquad V\phi=0.
\end{align}
Defining as in section 5 of \cite{I},
\begin{align}
  \label{eq:58}
  a_-:=-Ur+Vr,\qquad a_+:=Ur+Vr
\end{align}
we have
\begin{align}
  \label{eq:59}
  a_-a_+=-(Ur)^2+(Vr)^2=1-\mu,\qquad \mu=\frac{2m}{r}
\end{align}
and, by proposition 4.1 of \cite{I}
\begin{align}
  \label{eq:60}
  a_->0.
\end{align}

Since
\begin{align*}
  Ur=\dot{r},
\end{align*}
we can express, according to \eqref{eq:59},
\begin{align}
  \label{eq:61}
  Vr=\sqrt{1-\mu+\dot{r}^2}\quad\textrm{if}\quad Vr\geq 0,\qquad Vr=-\sqrt{1-\mu+\dot{r}^2}\quad\textrm{if}\quad Vr\leq 0.
\end{align}
Therefore, assuming that the first case holds, we can write
\begin{align}
  \label{eq:62}
  a_-=-\dot{r}+\sqrt{1-\mu+\dot{r}^2},\qquad a_+=\dot{r}+\sqrt{1-\mu+\dot{r}^2}.
\end{align}
By \eqref{eq:56}, \eqref{eq:57}, along $\Sigma$ we have
\begin{align}
  \label{eq:63}
  r_u=-a_{-\ast}\phi_u,\qquad r_v=a_{+\ast}\phi_v.
\end{align}
In terms of the functions
\begin{align}
\label{eq:64}
  \zeta=\frac{1}{\nu}\frac{\p \phi}{\p u},\quad\eta=\frac{1}{\kappa}\frac{\p\phi}{\p v},\quad\nu=-\frac{\p r}{\p u},\quad\kappa=\frac{1}{1-\mu}\frac{\p r}{\p v}
\end{align}
these equations read
\begin{align}
  \label{eq:65}
  \zeta_\ast=\frac{1}{a_{-\ast}},\qquad\eta_\ast=a_{-\ast}
\end{align}
(compare with equations 6.11c of \cite{I}).

According to the definition \eqref{eq:53},
\begin{align}
  \label{eq:66}
  f'=e^{\omega_\ast}\delta.
\end{align}
Substituting in equation \eqref{eq:6} yields
\begin{align}
  \label{eq:67}
  e^{\omega_\ast}=\frac{\Omega_\ast\sqrt{-h'}}{\sqrt{1-\delta^2}}.
\end{align}
From \eqref{eq:16}, by virtue of the continuity of the first derivatives of $\phi$, we have along $\Sigma$
\begin{align}
  \label{eq:68}
  \phi_u=\frac{1}{2}(f'+e^{\omega_\ast}),\qquad\phi_v=\frac{1}{2h'}(f'-e^{\omega_\ast})
\end{align}
that is
\begin{align}
  \label{eq:69}
  f'=\phi_u+h'\phi_v,\qquad e^{\omega_\ast}=\phi_u-h'\phi_v.
\end{align}
Note that in equations \eqref{eq:69}, when the argument of the left hand sides is $\chi$, the argument of the right hand sides is $(u=-\chi,v=h(-\chi))$.

At $N^-$, in view of \eqref{eq:48} we obtain
\begin{align}
  \label{eq:70}
  f'(0)=e^{\omega_\ast(0)}=\frac{1}{2}.
\end{align}
Also differentiating \eqref{eq:69} and evaluating the result at $N^-$, taking into account \eqref{eq:49}, yields
\begin{align}
  \label{eq:71}
  f''(0)=-\phi_{uu}(0,0)-\frac{k}{2},\qquad \frac{\omega_\ast'(0)}{2}=-\phi_{uu}(0,0)+\frac{k}{2}.
\end{align}
Hence, since by \eqref{eq:53},
\begin{align*}
  \delta'=e^{-\omega_\ast}(f''-\omega_\ast'f'),
\end{align*}
we have
\begin{align*}
  \delta'(0)=2\left(f''(0)-\frac{\omega_\ast'(0)}{2}\right)=-2k
\end{align*}
therefore, by \eqref{eq:55}, at $N^-$
\begin{align}
  \label{eq:72}
  q(0)=4k.
\end{align}

The Hessian equations in the null coordinate system (equations 3.48a, 3.48b, 3.48c of \cite{I}) read
\begin{align}
  r_{uu}-2\Omega^{-1}\Omega_ur_u&=-4\pi r\phi_u^2,  \label{eq:73}\\
 rr_{uv}+r_ur_v&=(1/4)(4\pi r^2-1)\Omega^2,\label{eq:74}\\
  r_{vv}-2\Omega^{-1}\Omega_vr_v&=-4\pi r\phi_v^2.\label{eq:75}
\end{align}
These equations are to be considered together with the wave equation in the same system (equation 3.47 of \cite{I})
\begin{align}
  \label{eq:76}
  r\phi_{uv}+r_u\phi_v+r_v\phi_u=0.
\end{align}

By \eqref{eq:47} we have
\begin{align}
  \label{eq:77}
  \Omega_u(0,0)=\Omega_v(0,0)=0.
\end{align}
Let us denote
\begin{align}
  \label{eq:78}
  r(0,0)&=r_0,& m(0,0)&=m_0,& \mu(0,0)&=\mu_0=\frac{2m_0}{r_0},\\
  \label{eq:79}
a_-(0,0)&=a_{-0},& a_+(0,0)&=a_{+0},& \dot{r}(0,0)&=\dot{r}_0=\frac{1}{2}(a_{+0}-a_{-0}).
\end{align}
Then by \eqref{eq:48} and \eqref{eq:63} we have at $N^-$
\begin{align}
  \label{eq:80}
  r_u(0,0)=-\frac{a_{-0}}{2},\qquad r_v(0,0)=\frac{a_{+0}}{2}.
\end{align}
Equations \eqref{eq:73}, \eqref{eq:74}, \eqref{eq:75} together with \eqref{eq:48} \eqref{eq:77} and \eqref{eq:80} yield the following expression for the Hessian of $r$ at the point $N^-$
\begin{align}
  r_{uu}(0,0)&=r_{vv}(0,0)=-\pi r_0,  \label{eq:81}\\
  r_{uv}(0,0)&=(1/4 r_0)(4\pi r_0^2-\mu_0).  \label{eq:82}
\end{align}

From \eqref{eq:32} we obtain
\begin{align}
  \label{eq:83}
  \phi_{uu}(0,0)=\frac{\dot{r}_0}{2r_0}.
\end{align}
Also,
\begin{align}
  \label{eq:84}
  \phi_{uv}(0,0)=-\frac{\dot{r}_0}{2r_0}.
\end{align}
From \eqref{eq:71} we then obtain
\begin{align}
  \label{eq:85}
  f''(0)=-\frac{\dot{r}_0}{2r_0}-\frac{k}{2},\qquad \frac{\omega_\ast'(0)}{2}=-\frac{\dot{r}_0}{2r_0}+\frac{k}{2}.
\end{align}
As a consequence of proposition \ref{proposition1}, the Taylor expansion of the soft phase energy density function $\rho$ at the point $N^-=(0,0)$ begins with quadratic terms. According to \eqref{eq:26} $\rho$ is given by
\begin{align}
  \label{eq:86}
  \rho=\frac{\p m/\p \chi}{4\pi r^2\p r/\p\chi}
\end{align}
and by \eqref{eq:26},
\begin{align}
  \label{eq:87}
  m(\tau,\chi)=m_\ast(-\chi),
\end{align}
therefore we can write
\begin{align}
  \label{eq:88}
  \rho=\frac{dm_\ast/d\chi}{4\pi r^2 \p r/\p \chi}.
\end{align}
In view of the fact that $\rho(-f(\chi),\chi)=1$, we have, along $\Sigma$
\begin{align}
  \label{eq:89}
  \frac{dm_\ast}{d\chi}=4\pi r^2\left(\frac{dr_\ast}{d\chi}+\dot{r}_\ast\frac{df}{d\chi
}\right).
\end{align}
It follows from the above that to obtain the 2nd partial derivatives of $\rho$ at $N^-$ we must determine at $N^-$, relative to the soft phase, the partial derivatives of $r$ up to the 3rd order, with the exception of $r_{\chi\chi\chi}$. Here and in the following paragraph we denote by the subscripts $\tau$ and $\chi$ the partial derivatives with respect to $\tau$ and $\chi$ respectively.

Differentiating the equation \eqref{eq:2}
\begin{align}
  \label{eq:90}
  r_\ast(u)=r(u,h(u)),
\end{align}
with respect to $\chi=-u$ twice, we obtain, in turn,
\begin{align}
  \label{eq:91}
  -\frac{dr_\ast}{d\chi}=r_u+r_vh',\qquad  \frac{d^2r_\ast}{d\chi^2}=r_{uu}+2r_{uv}h'+r_{vv}h'^2+r_vh''.
\end{align}
Evaluating the results at $N^-=(0,0)$ using \eqref{eq:80}, \eqref{eq:81}, \eqref{eq:82}, \eqref{eq:49} yields
\begin{align}
  \label{eq:92}
  \left(\frac{dr_\ast}{d\chi}\right)(0)=\frac{a_{-0}}{2},\qquad  \left(\frac{d^2r_\ast}{d\chi^2}\right)(0)=-\pi r_0+\frac{a_{+0}}{2}k.
\end{align}
Differentiating the equation \eqref{eq:2}
\begin{align}
  \label{eq:93}
  \dot{r}_\ast(u)=(Ur)(u,h(u))=(2\Omega^{-2}(\phi_vr_u+\phi_u r_v))(u,h(u))
\end{align}
with respect to $\chi=-u$, yields
\begin{align}
  \label{eq:94}
  -\frac{d\dot{r}_\ast}{d\chi}&=\,2\Omega^{-2}\Big[\phi_v(r_{uu}-2\Omega^{-1}\Omega_ur_u)+h'\phi_u(r_{vv}-2\Omega^{-1}\Omega_vr_v)\notag\\
&\qquad\qquad+r_v(\phi_{uu}-2\Omega^{-1}\Omega_u\phi_u)+h'r_u(\phi_{vv}-2\Omega^{-1}\Omega_v\phi_v)\notag\\
&\hspace{40mm}+\phi_{uv}(r_u+h'r_v)+r_{uv}(\phi_u+h'\phi_v)\Big].
\end{align}
Evaluating the result at $N^-$ we obtain, by \eqref{eq:77}, \eqref{eq:80}, \eqref{eq:81}, \eqref{eq:82}, \eqref{eq:83}, \eqref{eq:84},

\begin{align}
  \left(\frac{d\dot{r}_\ast}{d\chi}\right)(0)&=-2(\phi_v r_{uu}+\phi_u r_{uv}+r_v \phi_{uu}+r_u\phi_{uv})(0,0)\notag\\
&=\frac{1}{4r_0}(\mu_0-a_{+0}^2+a_{-0}^2).  \label{eq:95}
\end{align}

Differentiating the equation
\begin{align}
  \label{eq:96}
  r(-f(\chi),\chi)=r_\ast(-\chi),
\end{align}
we obtain, in turn,
\begin{align}
  \label{eq:97}
  (-r_\tau f'+r_\chi)(-f(\chi),\chi)=\frac{dr_\ast}{d\chi}(-\chi),
\end{align}
and differentiating once more, in view of equation \eqref{eq:7},
\begin{align}
  \label{eq:98}
  \left(-\frac{m}{r^2}f'^2-2r_{\tau\chi}f'+r_{\chi\chi}-r_\tau f''\right)(-f(\chi),\chi)=\frac{d^2r_\ast}{d\chi^2}(-\chi).
\end{align}
Also, differentiating the equation
\begin{align}
  \label{eq:99}
  r_\tau(-f(\chi),\chi)=\dot{r}_\ast(-\chi)
\end{align}
we obtain, in view of equation \eqref{eq:7},
\begin{align}
  \label{eq:100}
  \left(\frac{m}{r^2}f'+r_{\tau\chi}\right)(-f(\chi),\chi)=\frac{d\dot{r}_\ast}{d\chi}(-\chi).
\end{align}
Evaluating the above equations at $N^-=(0,0)$ and using \eqref{eq:92}, \eqref{eq:95}, \eqref{eq:85} yields
\begin{align}
  r_\tau(0,0)&=\dot{r}_0=\frac{1}{2}(a_{+0}-a_{-0}),\qquad\qquad\qquad r_\chi(0,0)=\frac{1}{4}(a_{+0}+a_{-0}),  \label{eq:101}\\
r_{\tau\tau}(0,0)&=-\frac{\mu_0}{2r_0}=-\frac{1-a_{+0}a_{-0}}{2r_0},\,\qquad\qquad r_{\tau\chi}(0,0)=-\frac{a_{+0}^2-a_{-0}^2}{4r_0},  \label{eq:102}\\
r_{\chi\chi}(0,0)&=\frac{1}{8r_0}(-3a_{+0}^2+a_{+0}a_{-0}+a_{-0}^2+1)-\pi r_0+\frac{k}{4}(a_{+0}+a_{-0}).\label{eq:103}
\end{align}

Differentiating equation \eqref{eq:86} with respect to $\tau$ we obtain
\begin{align}
  \label{eq:104}
  \rho_\tau=-\frac{dm_\ast/d\chi}{4\pi r^3 r_\chi^2}(rr_{\tau\chi}+2r_\tau r_\chi).
\end{align}
Also, differentiating the equation
\begin{align}
  \label{eq:105}
  \rho(-f(\chi),\chi)=1
\end{align}
yields
\begin{align}
  \label{eq:106}
  (-\rho_\tau f'+\rho_\chi)(-f(\chi),\chi)=0.
\end{align}
Therefore
\begin{align}
  \label{eq:107}
  \rho_\tau(0,0)=\rho_\chi(0,0)=0
\end{align}
in agreement with proposition \ref{proposition1}.

Since the factor in parenthesis in \eqref{eq:104} vanishes at $N^-$, we have
\begin{align}
  \label{eq:108}
  \rho_{\tau\tau}(0,0)=-\left(\frac{dm_\ast/d\chi}{4\pi r^3r_\chi^2}\right)(rr_{\tau\tau\chi}+3r_\tau r_{\tau\chi}+2r_\chi r_{\tau\tau})(0,0).
\end{align}
Using \eqref{eq:7}, \eqref{eq:101}, \eqref{eq:102} we find that the second factor on the right is equal to
\begin{align*}
  -\frac{(a_{+0}+a_{-0})}{4r_0}\left[\frac{3}{2}(a_{+0}-a_{-0})^2+4\pi r_0^2\right]
\end{align*}
while according to \eqref{eq:88} evaluated at $N^-$, the first factor is equal to
\begin{align*}
  \frac{4}{r_0(a_{+0}+a_{-0})}.
\end{align*}
Hence
\begin{align}
  \label{eq:109}
  \rho_{\tau\tau}(0,0)=\frac{1}{r_0^2}\left[\frac{3}{2}(a_{+0}-a_{-0})^2+4\pi r_0^2\right].
\end{align}

We have
\begin{align}
  \label{eq:110}
  \rho_{\tau\chi}(0,0)=-\left(\frac{dm_\ast /d\chi}{4\pi r^3 r_\chi^2}\right)(rr_{\tau\chi\chi}+3r_\chi r_{\tau\chi}+2r_\tau r_{\chi\chi})(0,0).
\end{align}
To determine $r_{\tau\chi\chi}(0,0)$ we differentiate equation \eqref{eq:100} obtaining
\begin{align*}
  \left[\frac{2dm_\ast/d\chi}{r^2}-\frac{2m}{r^3}\left(\frac{dr_\ast}{d\chi}+r_\chi\right)\right]\frac{df}{d\chi}+\frac{m}{r^2}\frac{d^2f}{d\chi^2}+r_{\tau\chi\chi}=\frac{d^2\dot{r}_\ast}{d\chi^2}.
\end{align*}
Evaluating this at $N^-=(0,0)$ with the help of \eqref{eq:85} yields
\begin{align}
  \label{eq:111}
  r_{\tau\chi\chi}(0,0)=\frac{d^2\dot{r}_\ast}{d\chi^2}(0)+\frac{(a_{+0}+a_{-0})}{4r_0^2}(\mu_0-4\pi r_0^2)+\frac{\mu_0}{4r_0}k.
\end{align}

To proceed we must determine $(d^2\dot{r}_\ast/d\chi^2)(0)$. This is to be obtained by differentiating equation \eqref{eq:94} and evaluating the result at $N^-$. We first substitute equations \eqref{eq:73}-\eqref{eq:76} into \eqref{eq:94} reducing the latter to the form
\begin{align}
  \label{eq:112}
  -\frac{d\dot{r}_\ast}{d\chi}&=\,2\Omega^{-2}\Big[-4\pi r\phi_v\phi_u^2-4\pi rh'\phi_u\phi_v^2\notag+r_v(\phi_{uu}-2\Omega^{-1}\Omega_u\phi_u)\notag\\
&\qquad\hspace{9mm}+h'r_u(\phi_{vv}-2\Omega^{-1}\Omega_v\phi_v)-r^{-1}(r_u\phi_v+r_v\phi_u)(r_u+h'r_v)\notag\\
&\qquad\hspace{35mm}-r^{-1}\left(r_ur_v+\tfrac{1}{4}\Omega^2(1-4\pi r^2)\right)(\phi_u+h'\phi_v)\Big].
\end{align}
Here the right hand side is evaluated at $v=h(u)$, $u=-\chi$.

Let us define
\begin{align}
  \label{eq:113}
  j=-\frac{\p(\sigma^2)}{\p v}(0,0).
\end{align}
The fact that $\sigma^2=1$ on $\Sigma$ while $\sigma^2\geq 1$ in the past of $\Sigma$ implies $j\geq0$ and we shall assume that the generic case holds, i.e.
\begin{align}
  \label{eq:114}
  j>0.
\end{align}
Differentiating the equation
\begin{align}
  \label{eq:115}
  \sigma^2(u,h(u))=1
\end{align}
twice, we obtain
\begin{align}
  \label{eq:116}
  \left(\frac{\p^2 (\sigma^2)}{\p u^2}+2h'\frac{\p^2 (\sigma^2)}{\p u\p v}+(h')^2\frac{\p^2 (\sigma^2)}{\p v^2}+h''\frac{\p (\sigma^2)}{\p v}\right)(u,h(u))=0.
\end{align}
Evaluating the above at $N^-$ using \eqref{eq:49}, \eqref{eq:113} yields
\begin{align}
  \label{eq:117}
  \frac{\p^2 (\sigma^2)}{\p u^2}(0,0)=jk=:i.
\end{align}

We have (recall \eqref{eq:31})
\begin{align*}
 \frac{\p (\sigma^2)}{\p v}=4\Omega^{-2}[\phi_u(\phi_{vv}-2\Omega^{-1}\Omega_v\phi_v)-r^{-1}\phi_v(r_u\phi_v+r_v\phi_u)].
\end{align*}
Evaluating this at $N^-$ using \eqref{eq:113} yields
\begin{align}
  \label{eq:118}
  \phi_{vv}(0,0)=-\frac{j}{2}+\frac{1}{4r_0}(a_{+0}-a_{-0}).
\end{align}
Also differentiating the equation
\begin{align*}
  \frac{\p(\sigma^2)}{\p u}=4\Omega^{-2}[\phi_v(\phi_{uu}-2\Omega^{-1}\Omega_u\phi_u)-r^{-1}\phi_u(r_u\phi_v+r_v\phi_u)]
\end{align*}
with respect to $u$ and evaluating the result at $N^-$ with the help of \eqref{eq:81} \eqref{eq:82}, \eqref{eq:83}, \eqref{eq:84}, we obtain, in view of \eqref{eq:117}, and the fact that by \eqref{eq:47} $\Omega_{uu}=0,$
\begin{align}
  \label{eq:119}
  \phi_{uuu}(0,0)=\frac{i}{2}+\frac{1}{8 r_0^2}[3a_{+0}(a_{+0}-a_{-0})-\mu_0].
\end{align}

Going back to \eqref{eq:112} we differentiate once again and evaluate the result at $\chi=0$ with the help of \eqref{eq:118} and \eqref{eq:119}. This yields
\begin{align}
  \frac{d^2\dot{r}_\ast}{d\chi^2}(0)=&\,\frac{1}{2}(a_{+0}+a_{-0})i+\frac{k}{4r_0}(a_{-0}^2-a_{+0}^2+a_{-0}a_{+0}-1)\notag\\
&\qquad+\frac{1}{8r_0^2}(3a_{+0}^3+a_{+0}^2a_{-0}+2a_{+0}a_{-0}^2-2a_{-0}^3-3a_{+0}-a_{-0})+\frac{\pi}{2}(3a_{+0}-a_{-0}).\label{eq:120}
\end{align}
Substituting in \eqref{eq:111} we then obtain
\begin{align}
  \label{eq:121}
  r_{\tau\chi\chi}(0,0)&=\frac{i}{2}(a_{+0}+a_{-0})+\frac{k}{4r_0}(a_{-0}^2-a_{+0}^2)\notag\\
&\qquad+\frac{1}{8 r_0^2}(3a_{+0}^3-a_{+0}^2a_{-0}-2a_{-0}^3-a_{+0}+a_{-0})+\frac{\pi}{2}(a_{+0}-3a_{-0}.)
\end{align}
Substituting in turn in \eqref{eq:110} and taking into account also \eqref{eq:102}, \eqref{eq:103}, we arrive at the expression
\begin{align}
  \label{eq:122}
  \rho_{\tau\chi}(0,0)=-2i+\frac{3}{4r_0^2}(a_{+0}-a_{-0})^2+2\pi.
\end{align}
Finally $\rho_{\chi\chi}(0,0)$ is obtained by differentiating equation \eqref{eq:106} and evaluating the result at $\chi=0$. This yields
\begin{align}
  \label{eq:123}
  ((1/4)\rho_{\tau\tau}-\rho_{\tau\chi}+\rho_{\chi\chi})(0,0)=0.
\end{align}
Setting
\begin{align}
  \label{eq:124}
  i_0:=\frac{3}{2}\frac{\dot{r}_0^2}{r_0^2}+\pi=\frac{1}{r_0^2}\left[\frac{3}{8}(a_{+0}-a_{-0})^2+\pi r_0^2\right]
\end{align}
we can express \eqref{eq:109}, \eqref{eq:122} in the form
\begin{align}
  \label{eq:125}
  \rho_{\tau\tau}(0,0)=4i_0,\qquad \rho_{\tau\chi}(0,0)=2i_0-2i,
\end{align}
and, by \eqref{eq:123}, we have
\begin{align}
  \label{eq:126}
  \rho_{\chi\chi}(0,0)=i_0-2i.
\end{align}
The initial data along $\Sigma$ determine a soft phase solution in the domain $\mathcal{U}_0=\{(\tau,\chi):\tau>-f(\chi),0<\chi<\chi_+\}$ in the future of $\Sigma$. Here $\chi_+=-u_+$. By \eqref{eq:107} and \eqref{eq:125}, \eqref{eq:126}, the Taylor expansion of the soft phase energy density function $\rho$ in a neighborhood of the point $N^-=(0,0)$ takes the form
\begin{align}
  \label{eq:127}
  \rho(\tau,\chi)=1+2i_0\left(\tau+\frac{\chi}{2}\right)\left(\tau-\frac{l\chi}{2}\right)+\ldots.
\end{align}
Here,
\begin{align}
  \label{eq:128}
  l:=2(i/i_0)-1
\end{align}
and $\ldots$ denotes terms of degree higher than the second.

\label{discussion_1}It follows that only a subdomain $\mathcal{U}\subset \mathcal{U}_0$ bounded in the past by $\Sigma$ and in the future by a curve tangent at $N^-$ to the line $\tau=l\chi/2$, corresponds to genuine soft phase, i.e.~to $\rho<1$. We note that, in view of \eqref{eq:70}, the line $\tau=-\chi/2$ is tangent to $\Sigma$ at $N^-$, while the line $\tau=l\chi/2$, which lies in its future, is spacelike, null outgoing, or timelike at $N^-$ according as to whether $l<1$, $l=1$, or $l>1$ respectively. Thus in the case $l<1$ the future boundary of $\mathcal{U}$ is a spacelike curve $\Sigma'$ through $N^-$ along which the soft phase changes continuously back to a hard phase and no new difficulties are encountered. Discounting the exceptional case $l=1$, we shall therefore confine our attention in the following to the other generic case, namely the case $l>1$. The future boundary of $\mathcal{U}$ is then a timelike curve $\tau=\hat{\tau}(\chi)$ through $N^-$, $\hat{\tau}_\ast(0)=0$, $\hat{\tau}_\ast'(0)=l/2$, and in the causal future of $N^-$ the following \textit{free boundary problem} is posed:
\vspace{5mm}

\textit{Find a future domain of development $\mathcal{V}$ of $N^-$, a timelike curve $\mathscr{B}$ in $\mathcal{V}$ issuing from $N^-$, and a solution in $\mathcal{V}$ such that in $\mathcal{V}_1$, the part of $\mathcal{V}$ bounded by $\mathscr{B}$ and $C^{\ast-}$, the incoming null curve issuing from $N^-$, it is a hard phase solution taking given initial data along $C^{\ast-}$, in $\mathcal{V}_0=\mathcal{V}\setminus\overline{\mathcal{V}_1}$ it coincides with the above soft phase solution restricted to a subdomain of $\mathcal{U}$ bounded in the past by $C^-$, the outgoing null curve issuing from $N^-$, while across $\mathscr{B}$ the jump conditions are fulfilled.}

\section{Existence}

We recall the barrier function $e$ defined in \cite{III} (equation (1.23a)):
\begin{align}
  \label{eq:129}
  e=\frac{\xi}{2\zeta^2}+\frac{1}{2r}\left[\frac{1}{\zeta}-(1-\mu-4\pi r^2)\zeta\right],
\end{align}
where $\xi$ is given by
\begin{align}
  \label{eq:130}
  \frac{\p \zeta}{\p u}=\nu\xi.
\end{align}
Let $u$, $v$ be the canonical null coordinates in a neighborhood of the point $N^-$ defined in the preceding section.
\begin{proposition}
At the boundary null point $N^-$ the barrier function $e$ vanishes\label{proposition2}
\begin{align}
  \label{eq:131}
  e|_{N^-}=0,
\end{align}
while its partial derivatives with respect to the prior hard phase are given by
\begin{align}
  \label{eq:132}
  \left(\frac{\p e}{\p u}\right)_{N^-}=i,\qquad \left(\frac{\p e}{\p v}\right)_{N^-}=2i_0,
\end{align}
where $i$ and $i_0$ are the positive real numbers defined by \eqref{eq:117} and \eqref{eq:124} respectively.
\end{proposition}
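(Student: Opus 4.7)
The plan is a direct computation. I write $e=A+B$ with
\begin{align*}
A=\frac{\xi}{2\zeta^2},\qquad B=\frac{1}{2r}\left[\frac{1}{\zeta}-(1-\mu-4\pi r^2)\zeta\right],
\end{align*}
and substitute into each piece the values at $N^-=(0,0)$ of $r$, $\mu$, $\zeta$, $\xi$ and of their first $u$- and $v$-derivatives. All of these reduce, via the defining relations $\zeta=\phi_u/\nu$, $\nu=-r_u$, $\xi=\zeta_u/\nu$, $\mu=2m/r$, to derivatives of $r$ and $\phi$ up to third order and of $\Omega$ up to second order at $N^-$. These are supplied by \eqref{eq:47}, \eqref{eq:48}, \eqref{eq:77}, \eqref{eq:80}--\eqref{eq:84}, \eqref{eq:118}, \eqref{eq:119}, the reduced mass equations \eqref{eq:28}, and $\Omega_{uu}(0,0)=\Omega_{vv}(0,0)=0$ (which follow from \eqref{eq:47}), together with the identities $1-\mu_0=a_{-0}a_{+0}$ and $\dot r_0=(a_{+0}-a_{-0})/2$.

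For the value at $N^-$: from \eqref{eq:80}, \eqref{eq:48} one finds $\nu(0,0)=a_{-0}/2$ and $\zeta(0,0)=1/a_{-0}$, and a short calculation using \eqref{eq:81}, \eqref{eq:83} yields
\begin{align*}
\xi(0,0)=\frac{2\dot r_0}{r_0 a_{-0}^2}-\frac{4\pi r_0}{a_{-0}^3}.
\end{align*}
Substituting gives $A(0,0)=\dot r_0/r_0-2\pi r_0/a_{-0}$ and, using $(1-\mu_0-4\pi r_0^2)/a_{-0}=a_{+0}-4\pi r_0^2/a_{-0}$, $B(0,0)=(a_{-0}-a_{+0})/(2r_0)+2\pi r_0/a_{-0}$; the $\pi$-terms cancel, and the remainder vanishes by $2\dot r_0=a_{+0}-a_{-0}$. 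This proves \eqref{eq:131}.

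For the $u$-derivative, $\p_u B$ at $N^-$ requires $\mu_u(0,0)$, obtained from \eqref{eq:28} and the values above, while $\p_u A$ requires $\xi_u(0,0)$. Since \eqref{eq:130} gives $\p_u\zeta=\nu\xi$, $\xi_u=(\zeta_{uu}-\nu_u\xi)/\nu$, and $\zeta_{uu}$ at $N^-$ is expressed through $\zeta=-\phi_u/r_u$ in terms of $\phi_{uuu}(0,0)$, $r_{uuu}(0,0)$ and the already-known second derivatives. Here $r_{uuu}(0,0)$ is computed by differentiating the Hessian \eqref{eq:73} in $u$ and using $\Omega_{uu}(0,0)=0$, and $\phi_{uuu}(0,0)$ is exactly \eqref{eq:119}: this is the single place where $i$ enters. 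An algebraic collection of the remaining terms, analogous to but longer than the cancellation just performed in $e(0,0)$, should leave precisely $i$.

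For the $v$-derivative, $\xi_v$ requires $\zeta_{uv}$, which through $\zeta=-\phi_u/r_u$ reduces to the mixed third derivatives $\phi_{uuv}(0,0)$ and $r_{uuv}(0,0)$; these are obtained by differentiating the wave equation \eqref{eq:76} and the Hessian \eqref{eq:74} once in $u$ and evaluating at $N^-$, and they involve only the second derivatives of $r$ and $\phi$ at $N^-$ already known. In particular $\phi_{vv}(0,0)$, and hence $j$, does \emph{not} enter $\p_v e(0,0)$, in accordance with the claimed answer $2i_0$ being $j$-free; the only combinations that survive the reduction are $\dot r_0^2/r_0^2$ and $\pi$, assembling into $i_0=(3/2)\dot r_0^2/r_0^2+\pi$ of \eqref{eq:124}, multiplied by $2$. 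The main obstacle is bookkeeping: a systematic approach is to expand all quantities as polynomials in $a_{-0},a_{+0}$ with $r_0,\pi$ as coefficients and to use $1-\mu_0=a_{-0}a_{+0}$ at the end to collapse the result into the form stated in \eqref{eq:132}.
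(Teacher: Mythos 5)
Your route is the same as the paper's: a direct evaluation of $e$ and its first derivatives at $N^-$ using the canonical-coordinate values \eqref{eq:47}--\eqref{eq:48}, \eqref{eq:77}, \eqref{eq:80}--\eqref{eq:84}, the differentiated Hessian and wave equations for the third derivatives, and \eqref{eq:119} as the sole entry point of $i$. The paper merely organizes the bookkeeping through the normalized quantities $e_\pm$, $\xi_\pm$ (i.e.\ $\nu^{-1}\p_u$, $\kappa^{-1}\p_v$) and the first-order system of \cite{I} for $\p_u\mu$, $\p_v\mu$, $\p_v\zeta$, whereas you work with the raw null-coordinate derivatives; these are equivalent. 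Your computation of $e|_{N^-}=0$ is complete and your intermediate values ($\zeta(0,0)=1/a_{-0}$, $\xi(0,0)=2\dot r_0/(r_0a_{-0}^2)-4\pi r_0/a_{-0}^3$, and the two pieces $A(0,0)$, $B(0,0)$) agree with \eqref{eq:133}, \eqref{eq:135}. Your structural observations for the derivatives are also correct and match the paper: $i$ enters only through $\phi_{uuu}(0,0)$ in $\p_u e$, and $\phi_{vv}(0,0)$ (hence $j$) genuinely does not appear in $\p_v e(0,0)$, since $\zeta_v$, $\xi_v$, $\mu_v$ all reduce to mixed derivatives only. The one shortfall is that for \eqref{eq:132} you assemble all the required ingredients but do not carry out the final collection of terms --- the statements ``should leave precisely $i$'' and the description of what ``survives'' as $2i_0$ are assertions, not derivations, and since the proposition consists precisely of these two numerical identities, that algebra (the analogue of the paper's \eqref{eq:140}--\eqref{eq:144} and \eqref{eq:148}--\eqref{eq:155}) still has to be done to complete the proof.
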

\begin{proof}
From \eqref{eq:65} at $N^-=(0,0)$ we have
\begin{align}
  \label{eq:133}
  \zeta(0,0)=\frac{1}{a_{-0}}.
\end{align}
Now,
\begin{align}
  \label{eq:134}
  \xi=\frac{1}{r_u}\frac{\p}{\p u}\left(\frac{\phi_u}{r_u}\right)=r_u^{-3}(r_u\phi_{uu}-\phi_ur_{uu}).
\end{align}
Substituting \eqref{eq:48}, \eqref{eq:80}, \eqref{eq:81}, \eqref{eq:83}, we find that at $N^-$
\begin{align}
  \label{eq:135}
  \xi(0,0)=\frac{2}{a_{-0}^3r_0}(a_{-0}\dot{r}_0-2\pi r_0^2).
\end{align}
Evaluating \eqref{eq:129} at $N^-$ using \eqref{eq:133}, \eqref{eq:135}, and taking account of the formulas
\begin{align}
  \label{eq:136}
  \dot{r}=\frac{1}{2}(a_+-a_-),\qquad 1-\mu=a_+a_-,
\end{align}
yields
\begin{align}
  \label{eq:137}
  e(0,0)=0.
\end{align}
We define the function $\xi_-$ by
\begin{align}
  \label{eq:138}
  \frac{\p \xi}{\p u}=\nu\xi_-.
\end{align}
Taking into account the fact that by equations (6.5a), (6.3a) of \cite{I},
\begin{align}
  \label{eq:139}
  \frac{\p\mu}{\p u}=\frac{\nu}{r}\{\mu-4\pi r^2[(1-\mu)\zeta^2+1]\}
\end{align}
we can express, in terms of $\xi_-$,
\begin{align}
  \label{eq:140}
  e_-&:=\frac{1}{\nu}\frac{\p e}{\p u}\notag\\
&\,\,=\frac{\xi_-}{2\zeta^2}-\frac{\xi^2}{\zeta^3}-\frac{\xi}{2r}\left(\frac{1}{\zeta^2}+1-\mu-4\pi r^2\right)+\frac{1}{2r^2}\left(\frac{1}{\zeta}+(2\mu-1-8\pi r^2)\zeta-4\pi r^2(1-\mu)\zeta^3\right).
\end{align}
Differentiating \eqref{eq:134} with respect to $u$ we obtain
\begin{align}
  \label{eq:141}
  \frac{\p \xi}{\p u}=-3r_u^{-1}r_{uu}\xi+r_u^{-3}(r_u\phi_{uuu}-\phi_ur_{uuu}).
\end{align}
At $N^-$, by \eqref{eq:73} and \eqref{eq:83}, and the fact that by \eqref{eq:47} $\Omega_{uu}=0$,
\begin{align}
  \label{eq:142}
  r_{uuu}(0,0)=-4\pi(r_u\phi_u^2+2r\phi_u\phi_{uu})(0,0)=\frac{\pi}{2}(a_{-0}-4\dot{r}_0).
\end{align}
Substituting \eqref{eq:142} as well as \eqref{eq:134} and \eqref{eq:119} in \eqref{eq:141} evaluated at $N^-$ we obtain
\begin{align}
  \label{eq:143}
  \xi_-(0,0)=\frac{4i}{a_{-0}^3}+\frac{1}{r_0^2}\left[\frac{12}{a_{-0}^3}\dot{r}^2_0+\left(\frac{6}{a_{-0}^2}-\frac{40\pi r_0^2}{a_{-0}^4}\right)\dot{r}_0+\frac{1}{a_{-0}^3}\left(-\mu_0+4\pi r_0^2+\frac{48\pi^2 r_0^4}{a_{-0}^2}\right)\right].
\end{align}
Substituting in turn \eqref{eq:143} in \eqref{eq:140} evaluated at $N^-$ and taking account of \eqref{eq:133}, \eqref{eq:135}, yields simply
\begin{align}
  \label{eq:144}
  \frac{\p e}{\p u}(0,0)=i.
\end{align}
We define the function $\xi_+$ by
\begin{align}
  \label{eq:145}
  \frac{\p \xi}{\p v}=\kappa\xi_+.
\end{align}
Taking into account the fact that by equations (6.5b), (6.3b) of \cite{I},
\begin{align}
  \label{eq:146}
  \frac{\p \mu}{\p v}=\frac{\kappa}{r}[(1-\mu)(4\pi r^2-\mu)+4\pi r^2\eta^2]
\end{align}
and according to equation (6.7b) of \cite{I}
\begin{align}
  \label{eq:147}
  \frac{\p \zeta}{\p v}=\frac{\kappa}{r}[\eta-(1-4\pi r^2)\zeta]
\end{align}
we can express, in terms of $\xi_+$
\begin{align}
e_+&:=\frac{1}{\kappa}\frac{\p e}{\p v}\notag\\
&\,\,=\,\frac{\xi_+}{2\zeta^2}-\frac{1}{r}[\eta-(1-4\pi r^2)\zeta]\left\{\frac{\xi}{\zeta^3}+\frac{1}{2r}\left[\frac{1}{\zeta^2}+(1-\mu-4\pi r^2)\right]\right\}\notag\\
&\qquad\,\,-\frac{(1-\mu)}{2r^2}\left[\frac{1}{\zeta}-(1-\mu-4\pi r^2)\zeta\right]+\frac{\zeta}{2r^2}\bigg[(1-\mu)(4\pi r^2-\mu)+4\pi r^2\eta^2+8\pi r^2(1-\mu)\bigg].\label{eq:148}
\end{align}
Differentiating \eqref{eq:134} with respect to $v$ we obtain
\begin{align}
  \label{eq:149}
  \frac{\p\xi}{\p v}=-3r_u^{-1}r_{uv}\xi+r_u^{-3}(r_{uv}\phi_{uu}-\phi_{uv}r_{uu}+r_u\phi_{uuv}-\phi_ur_{uuv}).
\end{align}
To obtain $r_{uuv}$, $\phi_{uuv}$ at $N^-$ we differentiate equations \eqref{eq:74}, \eqref{eq:76} with respect to $u$
\begin{align*}
rr_{uuv}+2r_ur_{uv}+r_{uu}r_v&=2\pi r\Omega^2r_u+(1/2)(4\pi r^2-1)\Omega\Omega_u,\\
r\phi_{uuv}+2r_u\phi_{uv}+r_v\phi_{uu}+r_{uu}\phi_v+r_{uv}\phi_u&=0.
\end{align*}
Evaluating these at $N^-=(0,0)$ with the help of \eqref{eq:77}, \eqref{eq:80}, \eqref{eq:81}, \eqref{eq:82}, \eqref{eq:83}, \eqref{eq:84}, yields
\begin{align}
  \label{eq:150}
r_{uuv}(0,0)&=\frac{1}{4r_0^2}(2\pi r_0^2a_{+0}-a_{-0}\mu_0),\\
\phi_{uuv}(0,0)&=\frac{1}{8r_0^2}(1-a_{+0}^2-2a_{+0}a_{-0}+2a_{-0}^2).
\end{align}
Substituting in \eqref{eq:149} and taking account of \eqref{eq:135} we obtain
\begin{align}
\xi_+(0,0)&=\,\frac{1}{a_{-0}^3r_0^2}\big[-48\pi^2r_0^4+12\pi r_0^2(1-a_{+0}a_{-0})+4\pi r_0^2a_{-0}(4a_{+0}-3a_{-0})\notag\\
&\qquad+a_{-0}(a_{-0}-2a_{+0}+a_{+0}^2a_{-0}-2a_{+0}a_{-0}^2+2a_{-0}^3)\big].  \label{eq:151}
\end{align}

Substituting in turn in \eqref{eq:148} and taking again into account \eqref{eq:135} and the fact that from \eqref{eq:65}
\begin{align}
  \label{eq:152}
  \eta(0,0)=a_{-0},
\end{align}
yields simply
\begin{align}
  \label{eq:153}
  e_+(0,0)=4a_{-0}i_0.
\end{align}
Taking then into account the fact that, by the second of \eqref{eq:80} and the relation \eqref{eq:59},
\begin{align}
  \label{eq:154}
  \kappa(0,0)=\frac{1}{2a_{-0}},
\end{align}
we obtain
\begin{align}
  \label{eq:155}
  \frac{\p e}{\p v}(0,0)=2i_0.
\end{align}

\end{proof}

In the following we use the same coordinate system as in \cite{III} which was introduced at the beginning of section 4 of \cite{II}.

Let us define $\mathscr{V}$ by
\begin{align}
  \label{eq:156}
  \mathscr{V}:=\{(\tau,\chi):\chi\geq 0,\tau_-(\chi)\leq\tau\leq\hat{\tau}(\chi)\}.
\end{align}
The point $\tau=\chi=0$ is the inner null end point $N^-$ of $\Sigma$, $\tau=\tau_-(\chi)$ is the equation of $C^-$, while $\tau=\hat{\tau}(\chi)$ is the curve issuing at $N^-$, where, corresponding to the given soft-phase initial data, $\rho(\tau,\chi)$ along each flow line first becomes equal to 1. Note that the curve $\tau=\hat{\tau}(\chi)$ corresponds to the future boundary of $\mathcal{U}$ (cf. the discussion on page \pageref{discussion_1}). We have
\begin{align}
  \label{eq:157}
  \tau_-(0)=\hat{\tau}(0)=0.
\end{align}
We cannot apply theorem 3.1 of \cite{II} to conclude the local existence of a solution to the free boundary problem. For, according to \eqref{eq:133}
\begin{align}
  \label{eq:158}
  \gamma(0)=\frac{1}{\zeta_\ast(0)a_{-\ast}(0)}=1.
\end{align}
In view of \eqref{eq:157} and \eqref{eq:158}, the hypotheses of theorem 3.1 of \cite{II} are not fulfilled.

To prove the existence of an expansion shock we construct an appropriate sequence of regular problems to which the continuation argument of \cite{II} applies. The solution to the formation problem will then be shown to be the limit of this sequence.

\begin{center}
\begin{tikzpicture}
\draw [->, thick](-0.5,0) -- (5,0);
\draw  [->, thick](0,-0.5) -- (0,5);
\node at (5.4,0) {$\chi$};
\node at (0,5.4) {$\tau$};
\draw [dashed] (0,0) -- (4,4);
\draw [dashed] (0,0) -- (-4,4);
\node at (-3,2.5) {$C^{\ast -}$};
\node at (3,2.5) {$C^-$};
\node at (4.4,4.4) {$\tau=\tau_-(\chi)$};
\draw [very thick] (0,0) .. controls (2,-2) and (4,-1.8) .. (4.5,-2);
\node at (5,-2) {$\Sigma$};
\draw (0,0) .. controls (0.8,2).. (1,4.5);
\node at (1,4.75) {$\tau=\hat{\tau}(\chi)$};
\filldraw (0.1,0.15) circle (1pt);
\filldraw (0.16,0.24) circle (1pt);
\filldraw (0.256,0.384) circle (1pt);
\filldraw (0.4096, 0.6144) circle (1pt);
\filldraw (0.65536, 0.98304) circle (1pt);
\filldraw (1.04858, 1.57286) circle (1pt);
\filldraw (1.67772, 2.51658) circle (1pt);
\node at (1.8,3) {\small{$(\tau_{0,n},\chi_{0,n})$}};
\node at (3.4,1) {Soft};
\node at (3.4,-1) {Soft};
\node at (0,3) [fill=white] {Hard};
\node at (-2,-1) {Hard};
\filldraw (0,0) [fill=white] circle (2pt);
\node at (-0.4,-0.3) {$N^-$};
\end{tikzpicture}
\end{center}

We pick a sequence of points $(\tau_{0,n},\chi_{0,n})$, monotone in both arguments, contained in the interior $\mathscr{V}$. Joining each point $(\tau_{0,n},\chi_{0,n})$ to $(0,0)$ by a timelike geodesic, we define $\beta_{0,n}$ to be the velocity at $(0,0)$ of the corresponding geodesic relative to the flow lines. We then have $0<\beta_{0,n}<1$. Furthermore, we arrange the way the sequence of points approach $N^-$ such that the following condition is satisfied.
\begin{align}
  \label{eq:159}
  \lim_{n\rightarrow\infty}\beta_{0,n}=\beta_0:=\frac{5+l}{5l+1}.
\end{align}
We note that, since $l>1$,
\begin{align}
  \label{eq:160}
  1<\frac{1}{\beta_0}<l,
\end{align}
i.e.~the above conditions are not in conflict with each other.

Let $\tau=\tau_{-n}(\chi)$ be the equation of $C_n^-$, the outgoing null curve issuing from the point $(\tau_{0,n},\chi_{0,n})$. We define the domains
 \begin{align}
   \label{eq:161}
   \mathscr{V}_n:=\{(\tau,\chi):\chi\geq \chi_{0,n},\tau_{-n}(\chi)\leq \tau\leq\hat{\tau}(\chi)\}.
 \end{align}
The domains $\mathscr{V}_n\subset \mathscr{V}$ are the images under the translation $(\tau,\chi)\mapsto (\tau+\tau_{0,n},\chi+\chi_{0,n})$ of the domains
\begin{align}
  \label{eq:162}
  \tilde{\mathscr{V}}_n:=\{(\tau,\chi):(\tau+\tau_{0,n},\chi+\chi_{0,n})\in\mathscr{V}_n\}.
\end{align}
On the domains $\tilde{\mathscr{V}}_n$ we define the soft-phase solution $(r_n,\omega_n,\rho_n)$ to be the corresponding pullbacks of the restriction of the soft-phase solution $(r,\omega,\rho)$ to the domains $\mathscr{V}_n$
\begin{align}
  \label{eq:163}
  r_n(\tau,\chi)&:=r(\tau+\tau_{0,n},\chi+\chi_{0,n}),\\
\omega_n(\tau,\chi)&:=\omega(\tau+\tau_{0,n},\chi+\chi_{0,n}),\\
\rho_n(\tau,\chi)&:=\rho(\tau+\tau_{0,n},\chi+\chi_{0,n}).\quad
\end{align}

We also must construct the corresponding hardphase initial data. This is accomplished by constructing the sequence $R_n(t)$, which defines $r$ as a function of $\phi$ along $C^{\ast -}$ for each regularized problem. This sequence must satisfy the conditions
\begin{align}
  \label{eq:164}
  R_n(0)=r_{0,n},\qquad \gamma_n(0)<1,\qquad e_{0,n}>0.
\end{align}
The first condition makes sure that $R_n(0)\rightarrow r_0$ as $n\rightarrow \infty$. The second condition is a hypothesis of theorem 3.1 of \cite{II}. The hypothesis $\hat{\tau}(\chi_{0,n}) > \tau_{0,n}$ of theorem 3.1 of \cite{II} is fulfilled because of the way the sequence of points converging to $N^-$ is chosen. The third condition provides a barrier which is used to prove lemma \ref{lemma_analogue} below.

We need three constants as we have three conditions. Let us set
\begin{align}
  \label{eq:165}
R_n(t):=R(t)+c_n+k_nt+\frac{1}{2}l_nt^2,
\end{align}
with
\begin{align}
  \label{eq:166}
c_n\rightarrow 0,\quad k_n\rightarrow 0 ,\quad l_n \rightarrow 0\quad\textrm{as}\quad n\rightarrow \infty.  
\end{align}
The first condition yields 
\begin{align}
    c_n=r_{0,n}-r_0.\label{eq:167}
\end{align}
The second condition reads
\begin{align} \gamma_n(0)=-\frac{\dot{R}_n(0)}{a_{-0,n}}<1.\end{align}
Therefore this condition becomes 

\begin{align}
    k_n>a_{-0}-a_{-0,n}.\label{eq:168}
\end{align}
The third condition can be written as
\begin{align}
  0<2\zeta_{0,n}^2e_{0,n}=\xi_{0,n}+\frac{\zeta_{0,n}}{r_{0,n}}[1-(1-\mu_{0,n}-4\pi r_{0,n}^2)\zeta_{0,n}^2].\label{eq:169}
  \end{align}
Here
\begin{align}
  \zeta_{0,n}=-\frac{1}{\dot{R}_n(0)}=-\frac{1}{\dot{R}(0)+k_n},\qquad \xi_{0,n}=-\frac{\ddot{R}_n(0)}{\dot{R}_n^3(0)}=-\frac{\ddot{R}(0)+l_n}{(\dot{R}(0)+k_n)^3}.
\end{align}
Since $e_0=\lim_{n\rightarrow\infty}e_{0,n}=0$, $R(0)=r_0$, $\dot{R}(0)=-a_{-0}$ it follows
\begin{align}
  \ddot{R}(0)=-\frac{1}{r_0}[a_{-0}^2-(1-\mu_0-4\pi r_0^2)].\label{eq:170}
\end{align}
Rewriting equation \eqref{eq:169} using equation \eqref{eq:170} we get
\begin{align}
  2\zeta_{0,n}^2e_{0,n}&=\,\frac{\ddot{R}(0)+l_n}{(a_{-0}-k_n)^3}-\frac{\ddot{R}(0)}{a_{-0}^3}\notag\\
&\qquad+\frac{1}{r_{0,n}(a_{-0}-k_n)}\left[1-\frac{1-\mu_{0,n}-4\pi r_{0,n}^2}{(a_{-0}-k_n)^2}\right]
-\frac{1}{r_0a_{-0}}\left[1-\frac{1-\mu_0-4\pi r_0^2}{a_{-0}^2}\right]>0.
\end{align}
Thus \eqref{eq:169} is equivalent to
\begin{align}
  l_n&>\left\{\frac{1}{r_0a_{-0}}\left[1-\frac{1-\mu_0-4\pi r_0^2}{a_{-0}^2}\right]-\frac{1}{r_{0,n}(a_{-0}-k_n)}\left[1-\frac{1-\mu_{0,n}-4\pi r_{0,n}^2}{(a_{-0}-k_n)^2}\right]\right\}(a_{-0}-k_n)^3\notag\\
&\qquad+\frac{\ddot{R}(0)}{a_{-0}^3}(a_{-0}-k_n)^3-\ddot{R}(0)=:F(r_{0,n},\mu_{0,n},k_n).
\end{align}
The third condition is then
\begin{align}
  l_n>F(r_{0,n},\mu_{0,n},k_n).\label{eq:171}
\end{align}
We have
\begin{align}
F(r_{0,n},\mu_{0,n},k_n)\rightarrow 0 \quad\textrm{as}\quad n\rightarrow \infty\quad \textrm{i.e.~as}\quad r_{0,n}\rightarrow r_0,\,\,\mu_{0,n}\rightarrow \mu_0,\,\,k_{0,n}\rightarrow 0.
\end{align}
Adjusting now the three constants $c_n$, $k_n$ and $l_n$ for each member of the sequence such that \eqref{eq:167}, \eqref{eq:168} and \eqref{eq:171} together with \eqref{eq:166} hold, which is certainly possible, implies the three conditions \eqref{eq:164} and therefore completes the construction of the hardphase initial data.

In the next few paragraphs are stated things that will be used in the proof of lemma \ref{lemma_analogue}.

On $C^{\ast -}$ for the $n$'th regularized problem we have
\begin{align}
  \label{eq:172}
  \phi&=t,& r(u,0)&=R_n(\phi(u,0)),& \zeta_n(u,0)&=Z_n(\phi(u,0)),\\
  m_n(u,0)&=M_n(\phi(u,0)),&\xi_n(u,0)&=\Xi_n(\phi(u,0)),& \xi_{-n}(u,0)&=\Xi_{-n}(\phi(u,0)).\label{eq:173}
\end{align}
By construction of $R_n(t)$ we have $R_n\rightarrow R$ uniformly as $n\rightarrow\infty$ with its derivatives up to third order. Since $Z_n=-1/\dot{R}_n$, it follows that $Z_n\rightarrow Z$ uniformly as $n\rightarrow\infty$. Similarly for $\Xi_n=-\dot{Z}_n/\dot{R}_n$ it follows that $\Xi_n \rightarrow \Xi$ uniformly as $n\rightarrow \infty$. Since
\begin{align}
  M_n(0)=m_{0,n},\qquad \dot{M}_n=2\pi R_n^2 \dot{R}_n[(1-2M_nR_n^{-1})Z_n^2+1],
\end{align}
(cf. equation (6.5a) of \cite{I}) it follows that $M_n\rightarrow M$ uniformly as $n\rightarrow\infty$. Since $\Xi_{-n}=-\dot{\Xi}_n/\dot{R}_n$, it follows that $\Xi_{-n}\rightarrow \Xi_-$ uniformly as $n\rightarrow \infty$. Along $C^{\ast -}$ we have $e_n=E_n(t)$, where
\begin{align}
  E_n=\frac{\Xi_n}{2Z_n^2}+\frac{1}{2R_n}\left[\frac{1}{Z_n}-\left(1-\frac{2M_n}{R_n}-4\pi R_n^2\right)Z_n\right].
\end{align}
It follows that $E_n\rightarrow E$ uniformly in $t$ as $n\rightarrow \infty$. By \eqref{eq:140} $e_-=e_-(\xi,\xi_-,\zeta,r,m)$ so along $C^{\ast -}$ we have $E_{-n}=E_{-n}(\Xi_n,\Xi_{-n},Z_n,R_n,M_n)$. It follows therefore from the form of $E_{-n}$ and from the uniform convergence of the arguments of $E_{-n}$ that $E_{-n}\rightarrow E_-$ uniformly in $t$ as $n\rightarrow \infty$.

We have
\begin{align}
    \xi_+=\frac{1}{\kappa}\frac{\p \xi}{\p v}=\frac{1}{\kappa}\frac{\p}{\p v}\left(\frac{1}{\nu}\frac{\p \zeta}{\p u}\right).
\end{align}
Using equations (6.6a), (6.6b), (6.7a) and (6.7b) yields
\begin{align}
    \xi_+=-\frac{\xi}{r}(1+\mu-9\pi r^2)+\frac{2\eta}{r^2}-\frac{\zeta}{r^2}[1-\mu+1+4\pi r^2-4\pi r^2\zeta^2(1-4\pi r^2)].\label{eq:174}
\end{align}

Having appropriately constructed the approximating sequence, we can apply theorem 3.1 of \cite{II} to conclude that there is a positive integer $n_0$ such that for each $n\geq n_0$ the $n$'th regularized problem has a solution with a maximal existence interval $\bar{\tau}_n>0$.

\begin{Lemma}\label{lemma_analogue}
\begin{align} \liminf_{n\rightarrow\infty} \bar{\tau}_n=:\bar{\tau}>0.\end{align}
\end{Lemma}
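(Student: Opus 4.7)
The plan is to argue by contradiction. Suppose $\liminf_{n\to\infty}\bar{\tau}_n=0$; then along a subsequence the regularized solutions would fail to extend past an arbitrarily small time. The continuation criterion underlying theorem 3.1 of \cite{II} says that a hard-phase solution adjacent to an expansion boundary can be continued so long as (i) the barrier $e_n$ remains positive along the free boundary, (ii) the quantities $\zeta_n$, $\eta_n$, $\xi_n$, $r_n$, $m_n$ stay in a bounded set with $\nu_n$, $\kappa_n$ bounded away from zero, and (iii) the jump relations admit a solution with $\gamma_n<1$. It therefore suffices to establish uniform-in-$n$ a priori bounds of this type on a spacetime region of fixed positive size.

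First I would exploit proposition \ref{proposition2} together with the uniform convergence of the initial data on $C^{\ast -}$ established just before \eqref{eq:174}: the sequences $R_n$, $Z_n$, $\Xi_n$, $\Xi_{-n}$, $M_n$, $E_n$, $E_{-n}$ all converge uniformly on any fixed $t$-interval. Because $e|_{N^-}=0$ while $(\partial e/\partial u)|_{N^-}=i>0$ and $(\partial e/\partial v)|_{N^-}=2i_0>0$, the limiting barrier satisfies $e(u,v)\geq c_0(u+v)$ on a fixed neighbourhood of $N^-$ in $\bar{\mathscr{V}}_1$ for some $c_0>0$. Since $(\tau_{0,n},\chi_{0,n})\to N^-$ and the hard-phase initial data converge, this produces, for all large $n$, a uniform pointwise lower bound on $e_n$ along $C_n^{\ast-}$ that depends only on distance from $N^-$.

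Next I would propagate this positivity into the interior. The expressions \eqref{eq:140}, \eqref{eq:148} and \eqref{eq:174}, together with the transport equations for $\zeta$, $\eta$, $\xi$, $r$, $m$ from section~6 of \cite{I}, form a closed first-order system along the two families of null curves. Regarded as a transport equation for $e_n$ with coefficients that depend on the basic unknowns, a Gronwall-type argument shows that as long as the unknowns stay in a fixed bounded set, $e_n$ remains bounded below by a positive function of the initial lower bound and the spacetime region. Concurrently the basic unknowns are bounded by integrating their own transport equations using the uniform convergence of the initial data on $C^{\ast-}_n$ and the uniform bounds on the pullback soft-phase data $(r_n,\omega_n,\rho_n)$ obtained from the fixed solution on $\mathscr{V}$. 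Thus on a spacetime slab $\{\tau\leq\bar{\tau}\}$ with $\bar{\tau}>0$ independent of $n$, none of the continuation criteria fail, contradicting $\bar{\tau}_n\to 0$.

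The main obstacle is the degeneration at $N^-$: by \eqref{eq:158}, $\gamma(0)=1$, so the hypothesis $\gamma(0)<1$ of theorem 3.1 of \cite{II} is only marginally satisfied in the limit, and naive local-existence estimates shrink to zero as $n\to\infty$. This is precisely why the three-parameter adjustment of $R_n$ in \eqref{eq:165} was necessary: the $k_n$-shift keeps $\gamma_n(0)<1$ and the $l_n$-shift guarantees $e_{0,n}>0$, supplying the extra barrier room that compensates for the loss in $\gamma_n$. The delicate point in executing the plan is verifying that the right-hand sides of \eqref{eq:140} and \eqref{eq:148} do not cause $e_n$ to decay faster than the initial positivity furnished by proposition \ref{proposition2}; since $\partial e/\partial u$ and $\partial e/\partial v$ are strictly positive at $N^-$, the relevant coefficient structure is favourable, so the uniform positivity of $e_n$ propagates on a region of fixed size and yields the desired uniform existence time $\bar{\tau}>0$.
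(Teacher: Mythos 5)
Your proposal correctly identifies the central input -- the positivity of the barrier $e$ near $N^-$ furnished by proposition \ref{proposition2} together with the uniform convergence of the regularized initial data $R_n, Z_n, \Xi_n, \Xi_{-n}, M_n, E_n, E_{-n}$ on $C^{\ast -}$ -- and the two-step structure (lower bound on the initial null curve, then propagation inward) matches the paper. However, there are two genuine gaps. First, you never explain \emph{why} positivity of $e_n$ prevents breakdown. The actual breakdown mechanism, via lemma 4.3 of \cite{II}, is that $\gamma_n\to 1$ or $\rho_{\ast n}\to 1$ at the free boundary as $\tau\to\bar{\tau}_n$; the paper organizes the proof around these four cases. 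Positivity of $e_{n}$ enters through the identity of proposition \ref{proposition11} (i.e.\ $-\gamma^{-1}d\gamma/d\tau=(1-\beta)E+\beta B$, with $E$ reducing to the boundary barrier as $\gamma\to 1$), which forces $\gamma_n$ to decrease and so rules out $\bar{\gamma}_n=1$. Your list of ``continuation criteria'' (boundedness, $\nu,\kappa$ away from zero, etc.) does not capture this: interior a priori bounds alone do not exclude $\gamma_n\to1$ at the boundary. Second, you omit entirely the case $\bar{\rho}_n=1$, $\bar{\gamma}_n<1$, which has nothing to do with the barrier; it is excluded by the geometric choice \eqref{eq:159} of the approach directions $\beta_{0,n}$ relative to the slope $l/2$ of the curve $\hat{\tau}(\chi)$, handled as in \cite{III}.

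A smaller but still substantive point: the ``Gronwall-type'' interior propagation is not what is needed and would be hard to close. The paper instead uses the exact integral representation $e_{n_i}(u,v)=\int_0^v\kappa_{n_i}e_{+n_i}\,dv'+e_{n_i}(u,0)$ and reduces everything to the three inequalities \eqref{eq:175}. The second of these, $\inf e_{+n_i}>0$, is obtained not by a differential inequality but by showing that the oscillations of $r_{n_i}$, $m_{n_i}$, $\zeta_{n_i}$, $\xi_{n_i}$, $\eta_{n_i}$ over the \emph{shrinking} domains $\mathscr{U}(\bar{\tau}_{n_i})$ tend to zero, so that $e_{+n_i}\to e_{+0}=4a_{-0}i_0>0$ by \eqref{eq:148}, \eqref{eq:174} and proposition \ref{proposition2}. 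Note that the contradiction argument works precisely because the domains shrink to the point $N^-$; your stated goal of proving uniform bounds ``on a spacetime region of fixed positive size'' inverts the logic and would require controlling quantities far from $N^-$, where proposition \ref{proposition2} gives no information. As written, the proposal would not close without supplying the case analysis, the link between $e>0$ and the monotonicity of $\gamma$, and the oscillation estimates.
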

\begin{proof}
As in the proof of lemma 1.1 of \cite{III} we assume that there exists a subsequence $(n_i:i=1,2,3,\ldots)$ $n_i\rightarrow \infty$ as $i\rightarrow \infty$ such that $\lim_{i\rightarrow\infty}\bar{\tau}_{n_i}=0$. According to lemma 4.3 of \cite{II}, for each $n\geq n_0$, $\gamma_n(\tau)\rightarrow \overline{\gamma}_n\in(0,1]$ as well as $\rho_{\ast n}(\tau)\rightarrow \overline{\rho}_n\in[0,1]$ as $\tau\rightarrow\overline{\tau}_n$, and we have the following four cases to consider:

\begin{center} Case 1: $\overline{\rho}_n<1,\overline{\gamma}_n<1$.

Case 2: $\overline{\rho}_n=1,\overline{\gamma}_n<1$.

Case 3: $\overline{\rho}_n<1,\overline{\gamma}_n=1$.

Case 4: $\overline{\rho}_n=1,\overline{\gamma}_n=1$.\end{center}

We shall show that under the hypothesis $\lim_{i\rightarrow \infty}\overline{\tau}_{n_i}=0$ each of the four cases leads to a contradiction.

The second case can be treated in the exact same way as in the proof of lemma 1.1 of \cite{III}. For the other cases we show that $\inf_{\mathscr{U}(\bar{\tau}_{n_i})}e_{n_i}>0$ for large enough $i$ (recall from p. 340 of \cite{II} that $\mathscr{U}(t):=\left\{(u,v):t\geq u\geq v\geq 0\right\}$). Once we have this, equations (1.34a) and (1.36g) of \cite{III} hold and the contradictions can be established in the same way as in the proof of lemma 1.1 of \cite{III}.
Looking at 
\begin{align} e_{n_i}(u,v)=\int_0^v\kappa_{n_i}(u,v')e_{+n_i}(u,v')dv'+e_{n_i}(u,0),\end{align}
where $e_+=\frac{1}{\kappa}\frac{\p e}{\p v}$, we see that the desired inequality for $e_{n_i}$ holds, once the following three inequalities hold
\begin{align} \inf_{\mathscr{U}(\bar{\tau}_{n_i})}\kappa_{n_i}>0,\qquad \inf_{\mathscr{U}(\bar{\tau}_{n_i})}e_{+n_i}>0,\qquad\inf_{u\in [0,\bar{\tau}_{n_i}]}e_{n_i}(u,0)>0\label{eq:175},\end{align}
for all sufficiently large $i$. Since $\kappa(u,v)=\kappa_\ast(v)\textrm{e}^{-K(u,v)}$ (cf. (6.12a) of \cite{I}), the first inequality follows from the second of (4.5a) of \cite{II}. For the third inequality we consider $e_{n_i}$ along the incoming null curve $C^{\ast -}$. Along $C^{\ast -}$ we have
\begin{align}e_{n_i}(u,0)=E_{n_i}(\phi(u,0))\qquad e_{-n_i}(u,0)=E_{-n_i}(\phi(u,0)),\qquad r_{n_i}(u,0)=R_{n_i}(\phi(u,0)),\qquad \phi_{n_i}=t.\end{align}
Since $E_{-n_i}=-\dot{E}_{n_i}/\dot{R}_{n_i}$, we have  $\dot{E}_{n_i}=-\dot{R}_{n_i}E_{-n_i}$. We also have
\begin{align} \lim_{n\rightarrow\infty}-\dot{R}_{n}(0)=a_{-0}>0,\qquad \lim_{n\rightarrow\infty}E_{-n}(0)=e_{-0}>0,\end{align}
For the latter inequality, $e_{-0}>0$, we use $e_{-0}=\left.\frac{1}{\nu}\frac{\p e}{\p u}\right|_{N^-}$, $\nu>0$ (by proposition 4.1 of \cite{I}) and $\left.\p e/\p u\right|_{N^-}>0$ (by proposition \ref{proposition2}). Hence,
\begin{align} -\dot{R}_n(0)\geq\frac{a_{-0}}{2}>0, \qquad E_{-n}(0)\geq\frac{e_{-0}}{2}>0,\end{align}
for all sufficiently large $n$. Since $E_{-n}$ and $\dot{R}_n$ are both equicontinuous (They converge uniformly as discussed in the previous section. The uniform convergence is equivalent to being equicontinuous and converging pointwise (corollary of Ascoli's theorem)), it follows that there exists an $\varepsilon>0$ independent of $n$ such that
\begin{align} -\dot{R}_n(t)\geq\frac{a_{-0}}{4},\qquad E_{-n}(t)\geq\frac{e_{-0}}{4}\qquad  :\forall t\in[0,\varepsilon].\end{align}
Thus,
\begin{align} \dot{E}_n(t)\geq \frac{a_{-0}e_{-0}}{16}>0\qquad:\forall t\in[0,\varepsilon],\end{align}
and since $E_n(0)=e_{0,n}>0$ (confer the third condition put on the initial data), it follows that
\begin{align}
  E_n(t)\geq\frac{a_{-0}e_{-0}}{16}t+e_{0,n}>0\qquad:\forall t\in[0,\varepsilon].\label{eq:176}
\end{align}
Now because of the fact that $\phi$ is a time function, we have that $\sup_{\mathscr{U}(\bar{\tau}_n)}\phi_n<\bar{\tau}_n$ (cf. (4.15b) of \cite{II}). By our assumption that $\lim_{i\rightarrow\infty}\bar{\tau}_{n_i}=0$ we have that $\bar{\tau}_{n_i}\leq \varepsilon$ for all sufficiently large $i$. Taking now the infimum on $[0,\varepsilon]$ of \eqref{eq:176} and taking into account that $\phi_{n_i}=t$ along $C^{\ast -}$, it follows, in view of the third of \eqref{eq:164}, that $\inf_{u\in[0,\bar{\tau}_{n_i}]}e_{n_i}(u,0)>0$. Therefore the third inequality of \eqref{eq:175} is shown.

To show the second inequality of \eqref{eq:175} we first show that the oscillations of $r_{n_i}$, $m_{n_i}$, $\zeta_{n_i}$, $\xi_{n_i}$ and  $\eta_{n_i}$ in $\mathscr{U}(\bar{\tau}_{n_i})$ tend to 0 as $i\rightarrow\infty$. Since $e_{+n_i}$ is given in terms of those quantities (cf. \eqref{eq:148} together with \eqref{eq:174}), it then follows that $\lim_{i\rightarrow \infty}e_{+n_i}= e_+=\textrm{const.}=e_{+0}=4a_{-0}i_0>0$ on $\mathscr{U}(\bar{\tau})$ (see proposition \ref{proposition2}). The right hand sides of the equations (1.17a) and (1.17e) of \cite{III} with $n=n_i$ approach $r_0$ as $i\rightarrow\infty$. This shows that 
\begin{align}
\lim_{i\rightarrow \infty}\osc_{\mathscr{U}(\bar{\tau}_{n_i})}r_{n_i}=0.
\end{align}
Also
\begin{align}\label{osci_m}
\lim_{i\rightarrow\infty}\osc_{u\in[0,\bar{\tau}_{n_i}]}m_{n_i}(u,0)=0.   
\end{align}
Using the bounds of page 103 of \cite{III} and equation (6.5b) of \cite{I}, it follows that
\begin{align}
\sup_{\mathscr{U}(\bar{\tau}_{n_i})}\left|\frac{\p m_{n_i}}{\p v} \right|\leq C.\end{align}
This inequality, together with (\ref{osci_m}),  then yields 
\begin{align}
  \lim_{i\rightarrow\infty}\osc_{\mathscr{U}(\bar{\tau}_{n_i})}m_{n_i}=0.\end{align}
Similarly
\begin{align}\label{osci_zeta}
\lim_{i\rightarrow\infty}\osc_{u\in[0,\bar{\tau}_{n_i}]}\zeta_{n_i}(u,0)=0
\end{align}
Using the bounds of page 103 of \cite{III} and equation (6.7b) of \cite{I}, it follows that
\begin{align}
\sup_{\mathscr{U}(\bar{\tau}_{n_i})}\left|\frac{\p \zeta_{n_i}}{\p v}\right|\leq C.
  \end{align}
This inequality, together with (\ref{osci_zeta}), then yields
\begin{align}
  \lim_{i\rightarrow \infty}\osc_{\mathscr{U}(\bar{\tau}_{n_i})}\zeta_{n_i}=0.
\end{align}
Also, since by \eqref{eq:173}
\begin{align}
  \label{eq:177}
  \xi_n(u,0)=-\frac{\ddot{R}(\phi(u,0))}{(\dot{R}(\phi(u,0)))^3},
\end{align}
we have, in view of the fact that the range of $\phi_{n_i}$ is contained in $[0,\overline{\tau}_{n_i}]$ and $\overline{\tau}_{n_i}\rightarrow 0$,
\begin{align}
\lim_{i\rightarrow\infty}\osc_{u\in[0,\bar{\tau}_{n_i}]}\xi_{n_i}(u,0)=0.\label{osci_xi}
\end{align}
Since $\frac{\p \xi_n}{\p v}=\kappa_n\xi_{+n}$, it follows by the expression \eqref{eq:174} for $\xi_+$ and the bound (1.29c) of \cite{III} for the difference
\begin{align}
  \label{eq:178}
  \xi_{n_i}(u,v)-\exp(-2N_{n_i}(u,v))\xi_{n_i}(u,0),
\end{align}
in conjunction with (\ref{osci_xi}), that
\begin{align}
  \lim_{i\rightarrow\infty}\osc_{\mathscr{U}(\bar{\tau}_{n_i})}\xi_{n_i}=0.
  \end{align}
Let us finally consider $\eta_{n_i}$. From equation (6.7a) of \cite{I} and the bounds on page 103 of \cite{III} we see that
\begin{align}
  \sup_{\mathscr{U}(\bar{\tau}_{n_i})}\left|\frac{\p \eta_{n_i}}{\p u}\right|\leq C.  
\end{align}
So we only need to show that
\begin{align}
\label{eq:179}
  \lim_{i\rightarrow \infty}\osc_{\tau\in [0,\bar{\tau}_{n_i}]}\eta_{\ast n_i}(\tau)=0.
\end{align}
Now, according to equation (4.5e) of \cite{II}, we can express
\begin{align}
  \label{eq:180}
  \eta_{\ast n}(\tau)=a_{-\ast n}(\tau)\sqrt{q_n(\tau)},
\end{align}
where
\begin{align}
  \label{eq:181}
  q_n(\tau)=\frac{1-2z_n(\tau)}{1-z_n(\tau)}
\end{align}
and
\begin{align}
  \label{eq:182}
  x_n(\tau):=2(1-\rho_{\ast n}(\tau)),\qquad y_n(\tau):=1-\gamma_n^2(\tau),\qquad z_n(\tau):=\frac{x_n(\tau)y_n(\tau)}{x_n(\tau)+2y_n(\tau)}.
  \end{align}
Since
\begin{align}
  0\leq z_n(\tau)  \leq \frac{1}{2}x_n(\tau)=1-\rho_{\ast n}(\tau)
\end{align}
and
\begin{align}
  \lim_{i\rightarrow\infty}\sup_{\tau\in[0,\bar{\tau}_{n_i}]}(1-\rho_{\ast n_i}(\tau))=0
\end{align}
it follows that
\begin{align}
  \lim_{i\rightarrow \infty}\osc_{\tau\in[0,\bar{\tau}_{n_i}]}q_{n_i}(\tau)=0,
  \end{align}
which, together with
\begin{align}
  \label{eq:183}
  \lim_{i\rightarrow \infty}\osc_{\tau\in[0,\bar{\tau}_{n_i}]}a_{-\ast n_i}(\tau)=0,
\end{align}
implies \eqref{eq:179}.
\end{proof}

In the arguments to follow we shall make use of the following proposition which is identical to proposition 1.1 of \cite{III}.
\begin{proposition}
  Let $\mathscr{B}$, $(r,m,\nu,\kappa,\zeta,\eta)$, defined on $(\tau_1,\tau_2)$, $\{(u,v):u\in (\tau_1,\tau_2), v\in (\tau_1,u]\}$ respectively, be a solution of the free-boundary problem corresponding to a soft phase solution $(r,\omega,\rho)$, such that $0<\beta<1$, $\rho_\ast<1$, whence also $0<\gamma<1$. Then at each $\tau\in(\tau_1,\tau_2)$,
  \begin{align}
    \label{eq:184}
    -\frac{1}{\gamma}\frac{d\gamma}{d\tau}=(1-\beta)E+\beta B,
  \end{align}
where $E$ and $B$ are the functions given in the statement of proposition 1.1 of \cite{III}.
\label{proposition11}
\end{proposition}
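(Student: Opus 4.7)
The statement is flagged as identical to proposition 1.1 of \cite{III}, so my strategy is to transcribe that proof in the present setting. The central idea is to parametrize $\mathscr{B}$ by the fluid proper time $\tau$ and decompose its tangent vector $T = d/d\tau$ first in the basis $(U, V)$ of fluid four-velocity and its spatial normal $V=U^\ast$, where $\beta$ is \emph{by definition} the velocity of $\mathscr{B}$ relative to $U$ (so that $T$ is proportional to $U+\beta V$), and then re-expressing $T$ in the null coordinate basis $\partial_u$, $\partial_v$. The weights $(1-\beta)$ and $\beta$ appearing on the right-hand side of the claimed identity are a direct reflection of this decomposition.

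\textbf{Key steps.} First, I would express $\gamma$ along $\mathscr{B}$ in terms of the hard-phase quantities $\zeta$, $a_-$, consistent with $\gamma(0)=1/(\zeta_\ast(0)a_{-\ast}(0))=1$ from \eqref{eq:158}. Second, compute $d\gamma/d\tau$ by the chain rule, obtaining a linear combination of the $\partial_u$- and $\partial_v$-derivatives of $\zeta$ and $a_-$ with coefficients determined by the tangent decomposition above. Third, substitute from the hard-phase structure equations (6.5a,b), (6.7a,b) of \cite{I} together with \eqref{eq:130} defining $\xi$, and use the jump relations of Section 1 to convert soft-phase terms into hard-phase ones. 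Fourth, collect: the part of $d\gamma/d\tau$ carrying the weight $(1-\beta)$ reassembles into $E$, and the part carrying $\beta$ reassembles into $B$, in the form stated in proposition 1.1 of \cite{III}.

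\textbf{Main obstacle.} The content is largely algebraic bookkeeping, but two points must be monitored. First, the hypotheses $0<\beta<1$, $\rho_\ast<1$, whence $0<\gamma<1$, must be used to guarantee that every denominator appearing in the decomposition of $T$ and in the expressions for $E$ and $B$ stays strictly positive; in particular they prevent $T$ from aligning with either null direction, so that neither null component of $T$ vanishes. Second, the continuity of $\Omega$, $d\phi$, $dr$, $m$ and $\rho$ across $\mathscr{B}$ established in Section 1 is exactly what allows the hard-phase structure equations to be applied to derivatives along $\mathscr{B}$; one must verify that each substitution uses only quantities known to be smooth on the hard-phase side of $\mathscr{B}$. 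With these verifications in place, the proof transcribes verbatim from that of proposition 1.1 of \cite{III}.
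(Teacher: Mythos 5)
The paper itself offers no proof of this proposition: it is imported verbatim from proposition 1.1 of \cite{III}, so there is no in-paper argument against which to measure your transcription. Your overall plan --- parametrize $\mathscr{B}$ by fluid proper time, differentiate $\gamma=1/(\zeta_\ast a_{-\ast})$ along it, and substitute the hard-phase structure equations --- has the right shape and is consistent with everything the present paper actually uses from that proposition (the weights in \eqref{eq:201} and the remark that $E\to\bar e$, $B\to\bar b$ as $\gamma\to1$, $\rho_\ast\to1$). One small correction to your first step: decomposing the tangent $U+\beta V$ into the two \emph{null} directions gives weights $\tfrac{1}{2}(1\mp\beta)$ (this is \eqref{eq:201}), not $(1-\beta)$ and $\beta$; the latter pair arises from the split $U+\beta V=(1-\beta)U+\beta(U+V)$, i.e.\ into the flow direction and the outgoing null direction, which is why $E$ limits to the \emph{boundary} barrier and $B$ to $\bar b$.

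The more substantive problem is your second ``point to monitor,'' which rests on a misidentification. The continuity of $\Omega$, $d\phi$, $dr$, $m$, $\rho$ established in Section \ref{section1} is continuity across the \emph{spacelike} transition curve $\Sigma$, not across the timelike free boundary $\mathscr{B}$. Across $\mathscr{B}$ these quantities are precisely what jump --- $\mathscr{B}$ is a shock --- so there is no such continuity to invoke. The ingredient that actually closes the computation is the set of shock jump conditions of \cite{II} (its equations (4.5a)--(4.5c), quoted in this paper as \eqref{eq:320}, \eqref{eq:321}, \eqref{eq:189} and the second of \eqref{eq:549}), which express $\nu_\ast$, $\kappa_\ast$, $\zeta_\ast$, $\eta_\ast$ on the hard side in terms of $\beta$, $\rho_\ast$ and the soft-phase datum $a_{-\ast}$. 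It is these relations, not the Section \ref{section1} continuity results, that convert the soft-phase derivative of $a_{-\ast}$ and the hard-phase derivative of $\zeta_\ast$ into common variables so that the combination $(1-\beta)E+\beta B$ emerges. With that substitution in place of your appeal to continuity, your outline goes through.
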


\textit{Remark.} In the limiting case $\gamma\rightarrow 1$ the function $E$ reduces to the boundary barrier $\bar{e}$ given by (1.23c) of \cite{III}. If also $\rho_\ast\rightarrow 1$, the function $B$ reduces to $\bar{b}$ (see lemma 4.4 of \cite{II}).

Now let $\bar{\tau}$ be as in the statement of lemma \ref{lemma_analogue}. Take any $\hat{\tau}\in(0,\bar{\tau})$. Then $\bar{\tau}_n\geq\hat{\tau}$ if $n$ is large enough. Let
\begin{align}
  \label{eq:185}
  \mathscr{V}^a:=\{(\tau,\chi)\in\mathscr{V}:\tau\leq a\},
\end{align}
where $\mathscr{V}$ is the wedge-shaped region as defined by \eqref{eq:156}. In the following we will make use of the functions $N_n$, $K_n$. Those are given by (1.41b) and (1.42a) of \cite{III} and correspond to the functions $N$ and $K$ of the $n$'th member of the sequence of regular problems, defined by (2.4b) and (2.3b) of \cite{II}.
\begin{Lemma}
  There is a subsequence $(n_i:i=1,2,\ldots)$ with the following properties:
  \begin{enumerate}
  \item The corresponding sequence of curves $(\mathscr{B}_{n_i})$ converges uniformly on $[0,\hat{\tau}]$ to a non-spacelike curve $\mathscr{B}$ contained in $\mathscr{V}^{\hat{\tau}}$.
\item The sequence of functions $(r_{n_i},\phi_{n_i},m_{n_i})$ converges uniformly in $\mathscr{U}(\hat{\tau})$ to $(r,\phi,m)$, Lipschitz functions on $\mathscr{U}(\hat{\tau})$.
\item The sequence of functions $(N_{n_i},K_{n_i})$ converges uniformly in $\mathscr{U}(\hat{\tau})$ to $(N,K)$, Lipschitz functions on $\mathscr{U}(\hat{\tau})$.
\item The sequence of functions $(\zeta_{n_i},\eta_{n_i})$ converges uniformly in $\mathscr{U}(\hat{\tau})$ to $(\zeta,\eta)$, continuous functions on $\mathscr{U}(\hat{\tau})$.
  \end{enumerate}
\label{lemma_sequence}
\end{Lemma}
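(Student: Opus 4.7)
My plan is to prove Lemma \ref{lemma_sequence} by an Arzelà--Ascoli argument combined with a diagonal extraction of subsequences. The starting point is Lemma \ref{lemma_analogue}, which ensures that for all sufficiently large $n$ we have $\bar\tau_n\geq\hat\tau$, so that each regularized solution is defined on the common domain $\mathscr{U}(\hat\tau)$, and hence all the quantities $(r_n,\phi_n,m_n,N_n,K_n,\zeta_n,\eta_n)$ together with the free-boundary curves $\mathscr{B}_n$ are available on a fixed compact set. The goal is then to produce uniform $n$-independent bounds on each of these objects and their first derivatives (or first divided differences), from which Arzelà--Ascoli yields convergent subsequences; passing to further subsequences four times (or using a standard diagonalization) gives a single subsequence $(n_i)$ realizing all four statements simultaneously.

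For assertion (1), each $\mathscr{B}_n$ is a timelike curve through the origin, hence admits a parametrization $\chi=\chi_n(\tau)$ with $|\chi_n'|\leq 1$; the family $\{\chi_n\}$ is therefore uniformly Lipschitz and pointwise bounded (contained in $\mathscr{V}^{\hat\tau}$). Arzelà--Ascoli then extracts a uniformly convergent subsequence whose limit $\mathscr{B}$ inherits the Lipschitz bound $|\chi'|\leq 1$, i.e.\ it is non-spacelike. For assertions (2) and (3), the uniform bounds on $\p r_n/\p u$, $\p r_n/\p v$, $\p\phi_n/\p u$, $\p\phi_n/\p v$ collected on p.~103 of \cite{III} (which apply verbatim to the present regularized problems because the hard-phase initial data $R_n,Z_n,\Xi_n,M_n$ converge uniformly with their relevant derivatives by construction, cf.\ \eqref{eq:166}--\eqref{eq:171}) give a uniform Lipschitz constant for $r_n$ and $\phi_n$ on $\mathscr{U}(\hat\tau)$. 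Inserting these into the hard-phase equations (6.5a), (6.5b) of \cite{I} and the soft-phase version \eqref{eq:26} one obtains a uniform Lipschitz bound for $m_n$; the functions $N_n$, $K_n$, being integrals of bounded integrands built from $r_n,\zeta_n,\eta_n$ (as in (1.41b), (1.42a) of \cite{III}), are uniformly Lipschitz as well. Arzelà--Ascoli, applied to each of these equibounded equicontinuous families in turn, produces Lipschitz limits $r,\phi,m,N,K$.

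For assertion (4) the situation is slightly more delicate since we need only continuity of the limits, but we do need equicontinuity of $(\zeta_n,\eta_n)$. Here the uniform bounds of \cite{III} give $\sup|\p\zeta_n/\p v|\leq C$ and $\sup|\p\eta_n/\p u|\leq C$ (via equations (6.7a), (6.7b) of \cite{I}), which control the variation of $\zeta_n$, $\eta_n$ in one of the two null directions. In the other null direction we use that along $C^{\ast-}$ we have $\zeta_n(u,0)=Z_n(\phi(u,0))$, and $Z_n\to Z$ uniformly with $\dot Z_n\to\dot Z$ uniformly, giving equicontinuity of $\zeta_n(\cdot,0)$; integrating $\p\zeta_n/\p v$ outward propagates this equicontinuity to all of $\mathscr{U}(\hat\tau)$. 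The analogous argument works for $\eta_n$ using $\eta_{\ast n}=a_{-\ast n}\sqrt{q_n}$ from \eqref{eq:180}--\eqref{eq:182} and the uniform convergence of the soft-phase data. Arzelà--Ascoli then gives the continuous limits $\zeta$, $\eta$.

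I expect the main obstacle to be bookkeeping rather than substance: one must verify that the uniform bounds from \cite{III} are in fact $n$-independent, which requires checking that all constants in those bounds depend only on the soft-phase data and on the limits of $R_n,\dot R_n,\ddot R_n,M_n,Z_n,\Xi_n,\Xi_{-n}$, together with the barrier function $E_n$, all of which converge uniformly as shown in the paragraphs preceding Lemma \ref{lemma_analogue}. A secondary subtlety is that the natural equicontinuity estimates control variation separately in the $u$- and $v$-directions; combining them (to get equicontinuity as a function on $\mathscr{U}(\hat\tau)$) uses the triangle inequality along null segments, which works because $\mathscr{U}(\hat\tau)$ has diameter comparable to $\hat\tau$ in both null directions.
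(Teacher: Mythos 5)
Your proposal is correct and follows essentially the same route as the paper: the paper's proof of parts 1--3 is a verbatim citation of the contraction case in \cite{III} (i.e.\ precisely the Arzel\`a--Ascoli argument from the $n$-independent bounds you describe), and for part 4 the only genuinely new ingredient is the explicit formula $\zeta_n(u,0)=-1/(\dot R(\phi_n(u,0))+k_n+l_n\phi_n(u,0))$ coming from the modified initial data \eqref{eq:165} (cf.\ \eqref{eq:187}), whose uniform convergence you obtain equivalently from $Z_n\to Z$, $\dot Z_n\to\dot Z$ and the uniform convergence of $\phi_{n_i}(\cdot,0)$; your treatment of $\eta$ via \eqref{eq:180}--\eqref{eq:182} and the uniform convergence of $(x_{n_i},y_{n_i})$ is likewise the paper's. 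One harmless slip: timelikeness of $\mathscr{B}_n$ in the metric $-d\tau^2+e^{2\omega}d\chi^2$ gives $|d\chi_n/d\tau|\le e^{-\omega}$ (close to $2$ near $N^-$, since $e^{\omega_0}=\tfrac12$), not $\le 1$, but this only changes the uniform Lipschitz constant and not the argument.
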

\begin{proof}
The proof of the first three statements is identical to the contraction case. The proof of the uniform convergence of the sequence of functions $(\zeta_{n_i},\eta_{n_i})$ involves the functions $\zeta_n(\cdot,0)$ on $[0,\hat{\tau}]$. We have
\begin{align}
  \label{eq:186}
  \zeta_n(u,0)=Z_n(\phi_n(u,0))=-\frac{1}{\dot{R}_n(\phi_n(u,0))}.
\end{align}
Using \eqref{eq:165} yields
\begin{align}
  \label{eq:187}
  \zeta_n(u,0)=-\frac{1}{\dot{R}(\phi_n(u,0))+k_n+l_n\phi_n(u,0)}.
\end{align}
Since $\dot{R}$ is continuous and the functions $\phi_{n_i}(\cdot,0)$ converge uniformly on $[0,\hat{\tau})$ to $\phi(\cdot,0)$ while the constants $k_n,l_n\rightarrow 0$, it follows that the functions $\zeta_{n_i}(\cdot,0)$ converge uniformly on $[0,\hat{\tau})$ to $\zeta(\cdot,0)$, a continuous function on $[0,\hat{\tau}]$. All the other parts of the proof of the uniform convergence of $(\zeta_{n_i},\eta_{n_i})$ can be copied from the contraction case. An intermediate result of the proof (cf. (1.45f) of \cite{III}) is the uniform convergence of the sequence $(x_{n_i},y_{n_i})$, i.e.
\begin{align}
  \label{eq:188}
  (x_{n_i},y_{n_i})\rightarrow (x,y)\quad \textrm{as}\quad i\rightarrow \infty
\end{align}
uniformly in $[0,\hat{\tau}]$.
\end{proof}

To proceed further we must show that we can select a subsequence $(N_i)$ such that the corresponding functions $\beta_{n_i}$ converge uniformly on $[0,\hat{\tau}]$. This will be accomplished in two steps, lemma \ref{lemma_every_point} and lemma \ref{lemma_sufficiently}. According to equation (4.5c) of \cite{II}, $\beta_n$ is given by
\begin{align}
  \label{eq:189}
  \beta_n=\frac{1-\gamma_n^2}{1+\gamma_n^2-2\gamma_n^2\rho_{\ast n}}.
\end{align}
In terms of the variables $x_n$, $y_n$, (cf. \eqref{eq:182}) we can write this in the form
\begin{align}
  \label{eq:190}
  \beta_n=\frac{y_n}{(1-y_n)x_n+y_n}.
\end{align}
Now the function
\begin{align}
  \label{eq:191}
  g(x,y):=\frac{y}{(1-y)x+y}
\end{align}
is continuous on $\bar{Q}_\varepsilon\setminus (0,0)$ where
\begin{align}
  \label{eq:192}
  \bar{Q}_\varepsilon :=\{(x,y):x\geq 0,1-\varepsilon^2\geq y\geq 0\},
\end{align}
but is not continuous at the point $(0,0)$. Therefore \eqref{eq:188} does not imply the uniform convergence of the $\beta_n$.

Let us first make the following general remarks. The functions $\beta_n$ belong to $C^0[0,\hat{\tau}]\subset L^\infty [0,\hat{\tau}]$ and satisfy $0<\beta_n<1$. By Alaoglu's theorem, $\bar{B}_1(L^\infty[0,\hat{\tau}])$, the closed unit ball of $L^\infty[0,\hat{\tau}]=L^1[0,\hat{\tau}]^\star$, is weak-star compact. We can therefore select a subsequence from the subsequence $(n_i)$, which for convenience of notation we still denote by $(n_i)$, such that the corresponding functions $\beta_{n_i}$ converge weak-star to a function $\beta\in \bar{B}_1(L^\infty[0,\hat{\tau}])$, that is
\begin{align}
  \label{eq:193}
  \int_0^{\hat{\tau}}\beta_{n_i}fd\tau\rightarrow\int_0^{\hat{\tau}}\beta fd\tau\quad \forall f\in L^1[0,\hat{\tau}].
\end{align}
Since $\beta\in \bar{B}_1(L^\infty[0,\hat{\tau}])$, we have $|\beta|\leq 1$ almost everywhere. We shall show that also $\beta\geq 0$ almost everywhere. For, let
\begin{align}
  \label{eq:194}
  N:=\{\tau\in[0,\hat{\tau}]:\beta(\tau)<0\}.
\end{align}
Then
\begin{align}
  \label{eq:195}
  N=\bigcup_{m=1}^\infty N_m,\qquad \textrm{where}\qquad N_m:=\left\{\tau\in[0,\hat{\tau}]:\beta(\tau)\leq-\frac{1}{m}\right\}.
\end{align}
Let $\chi_{N_m}$ denote the characteristic function of $N_m$. Setting $f=\chi_{N_m}$ in \eqref{eq:193}, we obtain
\begin{align}
  \label{eq:196}
  0\leq \int_{N_m}\beta_{n_i}d\tau\rightarrow \int_{N_m}\beta d\tau\leq -\frac{1}{m} \textrm{meas}(N_m).
\end{align}
Hence, $\textrm{meas}(N_m)=0$ and $N$, being the countable union of sets of measure zero, is itself of measure zero. We conclude that $0\leq\beta\leq 1$ almost everywhere and we can adjust $\beta$ on a set of measure zero to achieve $0\leq\beta\leq 1$ everywhere. Now, suppose that $M$ is a measurable subset of $[0,\hat{\tau}]$ and suppose that
\begin{align}
  \label{eq:197}
  \beta_{n_i}\rightarrow \tilde{\beta}\quad\textrm{pointwise on $M$}.
\end{align}
Then $\beta|_M=\tilde{\beta}$ almost everywhere. For, on one hand, by the dominated convergence theorem, for any $f\in L^1[0,\hat{\tau}]$,
\begin{align}
  \label{eq:198}
  \int_M\beta_{n_i}fd\tau\rightarrow\int_M\tilde{\beta}fd\tau\quad \forall f\in L^1[0,\hat{\tau}],
\end{align}
while on the other hand, replacing $f$ in \eqref{eq:193} by $f\chi_M$, where $\chi_M$ is the characteristic function of $M$, we obtain
\begin{align}
  \label{eq:199}
  \int_M\beta_{n_i}fd\tau\rightarrow \int_M\beta fd\tau\quad \forall f\in L^1[0,\hat{\tau}].
\end{align}
Consequently,
\begin{align}
  \label{eq:200}
  \int_M(\beta-\tilde{\beta})fd\tau=0\quad\forall f\in L^1[0,\hat{\tau}].
\end{align}
In particular, setting $f=\chi_M(\beta-\tilde{\beta})$, we obtain the conclusion $\beta|_M=\tilde{\beta}$ almost everywhere. Thus we can adjust $\beta$ on a subset of measure zero to coincide with $\tilde{\beta}$ on $M$. In view of the fact that $0\leq\beta\leq 1$, this does not affect our previous adjustment.

As a consequence of the above, the functions
\begin{align}
  \label{eq:201}
  \nu_\ast:=\frac{1}{2}a_{-\ast}(1-\beta),\qquad\kappa_\ast:=\frac{1}{2a_{-\ast}}(1+\beta)
\end{align}
belong to $L^\infty[0,\hat{\tau}]$ and we have
\begin{align}
  \label{eq:202}
  \nu_{\ast n_i}\rightarrow \nu_\ast,\qquad \kappa_{\ast n_i}\rightarrow \kappa_\ast
\end{align}
in the weak-star sense. We set
\begin{align}
  \label{eq:203}
  \nu(u,v)=\nu_\ast(u)e^{N(u,v)-N(u,u)},\qquad \kappa(u,v)=\kappa_\ast(v)e^{-K(u,v)}
\end{align}
where $N$, $K$ are the uniform limits of the 
functions of $N_{n_i}$, $K_{n_i}$ as in the statement of lemma \ref{lemma_sequence}. We shall show that these functions are in fact given by the expressions (2.4b), (2.3b) of \cite{II}, i.e.
\begin{align}
  \label{eq:204}
    N(u,v)=\int_0^v\left((\mu-4\pi r^2)\frac{\kappa}{r}\right)(u,v')dv',\qquad K(u,v)=4\pi \int_v^u(r\nu\zeta^2)(u',v)du'.
\end{align}

The proof relies on the following lemma, which is identical to lemma 1.3 of \cite{III}
\begin{Lemma}\label{lemma_convergence}
  Let $(f_{n_i})$ be a sequence of functions in $C^0(\mathscr{U}(\hat{\tau}))$ converging uniformly to a function $f$ and let
  \begin{align}
    \label{eq:205}
    g_{n_i}(u,v):=\int_0^v(f_{n_i}\kappa_{n_i})(u,v')dv',\qquad g(u,v):=\int_0^v(f\kappa)(u,v')dv',
  \end{align}
  \begin{align}
    \label{eq:206}
    h_{n_i}(u,v):=\int_v^u(f_{n_i}\nu_{n_i})(u',v)du',\qquad h(u,v):=\int_v^u(f\nu)(u',v)du',
  \end{align}
Then
\begin{align}
  \label{eq:207}
  g_{n_i}\rightarrow g,\quad h_{n_i}\rightarrow h\quad \textrm{uniformly on $\mathscr{U}(\hat{\tau})$}.
\end{align}
\end{Lemma}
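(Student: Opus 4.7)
The key difficulty is that $\kappa_{n_i}$ and $\nu_{n_i}$ do not converge uniformly to $\kappa$ and $\nu$; only their ``$\ast$-parts'' $\kappa_{\ast n_i}$ and $\nu_{\ast n_i}$ converge in the weak-star sense (cf.\ \eqref{eq:202}), while the exponential factors $e^{-K_{n_i}}$ and $e^{N_{n_i}-N_{n_i}(u,u)}$ converge uniformly, thanks to the uniform convergence of $N_{n_i}, K_{n_i}$ established in lemma \ref{lemma_sequence}. The plan is to factor this structure out of the integrand and then convert weak-star convergence into uniform convergence by proving that the relevant family of test functions is compact in $L^1$.

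I would start with $g_{n_i}$. Writing $\kappa_{n_i}(u,v')=\kappa_{\ast n_i}(v')e^{-K_{n_i}(u,v')}$ and similarly for the limit, I would insert and subtract the natural intermediate quantity to split
\begin{align*}
g_{n_i}(u,v)-g(u,v)=\int_0^v\bigl[f_{n_i}e^{-K_{n_i}}-fe^{-K}\bigr](u,v')\,\kappa_{\ast n_i}(v')\,dv'+\int_0^v \bigl[fe^{-K}\bigr](u,v')\,\bigl(\kappa_{\ast n_i}(v')-\kappa_\ast(v')\bigr)\,dv'.
\end{align*}
The first integral is bounded by $\hat{\tau}\,\|f_{n_i}e^{-K_{n_i}}-fe^{-K}\|_{C^0(\mathscr{U}(\hat{\tau}))}\sup_i\|\kappa_{\ast n_i}\|_{L^\infty}$, and it tends to zero uniformly in $(u,v)$ by the uniform convergence $f_{n_i}\to f$ and $K_{n_i}\to K$, together with the uniform boundedness of $\kappa_{\ast n_i}$ given by the Banach--Steinhaus theorem.

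The second integral is the genuinely delicate part. For each $(u,v)\in\mathscr{U}(\hat{\tau})$, set $\phi_{u,v}(v'):=f(u,v')e^{-K(u,v')}\chi_{[0,v]}(v')\in L^1[0,\hat{\tau}]$. Pointwise weak-star convergence of $\kappa_{\ast n_i}\to\kappa_\ast$ gives, for each fixed $(u,v)$, convergence of the second integral to zero; the task is to upgrade this to uniform convergence in $(u,v)$. For this I would verify that the map $(u,v)\mapsto\phi_{u,v}\in L^1$ is continuous on $\mathscr{U}(\hat{\tau})$: in the $u$-variable it follows from uniform continuity of $f$ and $K$ on the compact set $\mathscr{U}(\hat{\tau})$, and in the $v$-variable from continuity of $v\mapsto\chi_{[0,v]}$ in the $L^1$-norm. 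Since $\mathscr{U}(\hat{\tau})$ is compact, the image $\mathcal{K}:=\{\phi_{u,v}\}$ is compact in $L^1[0,\hat{\tau}]$. A standard argument (cover $\mathcal{K}$ by finitely many $L^1$-balls of radius $\varepsilon$, apply pointwise weak-star convergence at each center, and use the uniform $L^\infty$-bound on $\kappa_{\ast n_i}-\kappa_\ast$ to control the remainder) then shows
\begin{align*}
\sup_{(u,v)\in\mathscr{U}(\hat{\tau})}\Bigl|\int_0^{\hat{\tau}}\phi_{u,v}(v')\,(\kappa_{\ast n_i}-\kappa_\ast)(v')\,dv'\Bigr|\longrightarrow 0,
\end{align*}
which handles the second integral and concludes $g_{n_i}\to g$ uniformly.

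The argument for $h_{n_i}\to h$ is entirely symmetric after writing $\nu_{n_i}(u',v)=\nu_{\ast n_i}(u')e^{N_{n_i}(u',v)-N_{n_i}(u',u')}$, splitting off the uniformly convergent exponential factor, and applying the same compactness-in-$L^1$ plus weak-star-convergence combination to the test functions $(u',v)\mapsto f(u',v)\exp(N(u',v)-N(u',u'))\chi_{[v,u]}(u')$ parametrized by $(u,v)\in\mathscr{U}(\hat{\tau})$. The main obstacle in the whole argument is precisely this passage from weak-star to uniform convergence; once the family of test functions is shown to be a compact subset of $L^1$ the conclusion follows from a routine $\varepsilon$/finite-cover estimate.
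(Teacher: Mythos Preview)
Your argument is correct. The paper does not give its own proof here; it simply states that the lemma is identical to lemma 1.3 of \cite{III} and relies on that reference. Your approach---factoring $\kappa_{n_i}=\kappa_{\ast n_i}e^{-K_{n_i}}$ (resp.\ $\nu_{n_i}=\nu_{\ast n_i}e^{N_{n_i}-N_{\ast n_i}}$), handling the uniformly convergent exponential factor directly, and then upgrading the weak-star convergence of $\kappa_{\ast n_i}$ (resp.\ $\nu_{\ast n_i}$) to uniform convergence of the integrals by showing the family of $L^1$ test functions $\{\phi_{u,v}\}_{(u,v)\in\mathscr{U}(\hat{\tau})}$ is compact---is the natural one and is presumably what the proof in \cite{III} does as well, since the factorization \eqref{eq:203} and the weak-star convergence \eqref{eq:202} are exactly the inputs available. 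One small remark: the uniform $L^\infty$ bound on $\kappa_{\ast n_i}$ does not require Banach--Steinhaus; it follows directly from the explicit formula $\kappa_{\ast n}=\tfrac{1}{2a_{-\ast n}}(1+\beta_n)$ with $0\le\beta_n\le 1$.
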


Applying lemma \ref{lemma_convergence} to the cases
\begin{align}
  \label{eq:208}
  f_{n_i}=\frac{1}{r_{n_i}}(\mu_{n_i}-4\pi r_{n_i}^2),\qquad f=\frac{1}{r}(\mu-4\pi r^2)
\end{align}
and
\begin{align}
  \label{eq:209}
  f_{n_i}=4\pi r_{n_i}\nu_{n_i}\zeta_{n_i}^2,\qquad f=4\pi r\nu\zeta^2
\end{align}
we obtain (using the conclusion for $g_{n_i}$, $g$)
\begin{align}
  \label{eq:210}
  N_{n_i}\rightarrow \int_0^v(\mu-4\pi r^2)\frac{\kappa}{r}dv\quad\textrm{uniformly on $\mathscr{U}(\hat{\tau})$}
\end{align}
and (using the conclusion for $h_{n_i}$, $h$)
\begin{align}
  \label{eq:211}
  K_{n_i}\rightarrow 4\pi \int_v^ur\nu\zeta^2du\quad\textrm{uniformly on $\mathscr{U}(\hat{\tau})$}.
\end{align}
Since $N_{n_i}\rightarrow N$ and $K_{n_i}\rightarrow K$ by lemma \ref{lemma_sequence}, the functions $N$ and $K$ are thus given by \eqref{eq:204}. It follows that $\nu$ and $\kappa$ satisfy equations (6.6b) and (6.6a) of \cite{I} respectively.

Applying lemma \ref{lemma_convergence} to the cases
\begin{align}
  \label{eq:212}
  (f_{n_i},f)&=(1-\mu_{n_i},1-\mu),(1,1),\notag\\
  &\hspace{4.7mm}(\eta_{n_i},\eta),(\zeta_{n_i},\zeta),\notag\\
  &\hspace{4.7mm}\left(2\pi r_{n_i}^2[\eta_{n_i}^2+(1-\mu_{n_i})],2\pi r^2[\eta^2+(1-\mu)]\right),\notag\\
  &\hspace{4.7mm}\left(-2\pi r_{n_i}^2[(1-\mu_{n_i})\zeta_{n_i}^2+1],-2\pi r^2[(1-\mu)\zeta^2+1]\right),\notag\\
  &\hspace{4.7mm}\left(\frac{1}{r_{n_i}}[\eta_{n_i}-(1-4\pi r_{n_i}^2)\zeta_{n_i}],\frac{1}{r}[\eta-(1-4\pi r^2)\zeta]\right),\notag\\
  &\hspace{4.7mm}\left(\frac{1}{r_{n_i}}[(1+4\pi r_{n_i}^2\zeta_{n_i}^2)\eta_{n_i}-(1-\mu_{n_i})\zeta_{n_i}],\frac{1}{r}[(1+4\pi r^2\zeta^2)\eta-(1-\mu)\zeta]\right),
\end{align}
we obtain the result that the hard phase system (cf. (6.3a)\ldots  (6.7b) of \cite{I}) is satisfied and therefore $(r,\phi,m,\nu,\kappa,\zeta,\eta)$ (this is the limit of the sequence of regular problems) is a solution of the hard-phase equations. The details are identical to the treatment in  \cite{III}.

Let us now define the closed subsets
\begin{align}
  \label{eq:213}
  X:=\{\tau\in[0,\hat{\tau}]:x(\tau)=0\},\qquad Y:=\{\tau\in[0,\hat{\tau}]:y(\tau)=0\}.
\end{align}
We then define the subsets
\begin{align}
  \label{eq:214}
  C_1:=X^c\cap Y^c.\qquad C_2:=X\cap Y^c,\qquad C_3:=X^c\cap Y,\qquad C_4:=X\cap Y.
\end{align}
Since
\begin{align}
  \label{eq:215}
  x=2(1-\rho_\ast),\qquad y=1-\gamma^2,
\end{align}
the subsets $C_1$, $C_2$, $C_3$, $C_4$ correspond to the four cases of lemma 4.3 of \cite{II}. The subset $C_1$ is open and, by virtue of \eqref{eq:188} and \eqref{eq:191}, $\beta_{n_i}\rightarrow \beta$ pointwise on $C_1$ and
\begin{align}
  \label{eq:216}
  0<\beta|_{C_1}<1.
\end{align}
The convergence is uniform in closed subsets of $C_1$. Thus $\beta$ is continuous on $C_1$. By \eqref{eq:188} and \eqref{eq:191}, $\beta_{n_i}\rightarrow \beta$ pointwise on $C_2$ and $C_3$, and
\begin{align}
  \label{eq:217}
  \beta|_{C_2}=1,\qquad \beta|_{C_3}=0.
\end{align}
Thus only on the closed subset
\begin{align}
  \label{eq:218}
  C_4=\{\tau\in[0,\hat{\tau}]:(x(\tau),y(\tau))=(0,0)\},
\end{align}
which corresponds to the point of non-continuity of the function $g$ on $\bar{Q}_\varepsilon$, does \eqref{eq:188} fail to give pointwise convergence. The argument instead rests on the following lemma.
\begin{Lemma}\label{lemma_every_point}
  For every $\tau_1\in(0,\hat{\tau}]$, there exists a $\tau_0\in(0,\tau_1)$ such that $\tau_0\in C_1$.
\end{Lemma}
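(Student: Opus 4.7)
The plan is to argue by contradiction. Suppose some $\tau_1\in(0,\hat\tau]$ satisfies $(0,\tau_1)\cap C_1=\emptyset$, so at every $\tau\in(0,\tau_1)$ one has $\rho_*(\tau)=1$ or $\gamma(\tau)=1$. I will derive two incompatible estimates on the asymptotic slope of the limit free boundary $\mathscr{B}$ at $N^-$ in comoving coordinates, and the resulting contradiction will establish the lemma.

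Write $\mathscr{B}$ as the graph $\chi=\chi_{\mathscr{B}}(\tau)$ with $\chi_{\mathscr{B}}(0)=0$. From the uniform convergence $\mathscr{B}_{n_i}\to\mathscr{B}$ of lemma \ref{lemma_sequence}, the uniform convergence of $\omega_{*n_i}\to\omega_*$, and the weak-star convergence $\beta_{n_i}\to\beta$ established in the preceding paragraphs, I obtain
\begin{align*}
  \chi_{\mathscr{B}}(\tau)=\int_0^\tau\beta(s)\,e^{-\omega_*(s)}\,ds,\qquad e^{-\omega_*(0)}=2
\end{align*}
(the second identity by \eqref{eq:70}). The identifications $\beta|_{C_2}=1$ and $\beta|_{C_3}=0$, made just before the lemma, pin down $\beta$ pointwise off $C_4$.

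Next I extract the upper bound. On $(0,\tau_1)\cap(C_2\cup C_4)$ the constraint $\rho_*=1$ together with the Taylor expansion \eqref{eq:127} (with $i_0>0$ by \eqref{eq:124}) traps the graph on the only branch of $\{\rho=1\}$ lying in the quadrant $\tau,\chi>0$, namely $\chi=2\tau/l+O(\tau^2)$. On $(0,\tau_1)\cap C_3$ one has $\beta=0$, so $\chi_{\mathscr{B}}$ is locally constant there. The extreme case $(0,\tau_1)\subset C_3$ is ruled out immediately, since $\chi_{\mathscr{B}}\equiv 0$ would place $\mathscr{B}$ on the flow line $\chi=0$, whereas \eqref{eq:127} gives $\rho(\tau,0)=1+2i_0\tau^2+\cdots>1$ for small $\tau>0$, contradicting $\rho_*<1$ on $C_3$. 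In the remaining mixed configurations the continuity of $\chi_{\mathscr{B}}$ at $\tau=0$ together with the interlacing of the $\rho_*=1$ and flow-parallel regimes yields $\limsup_{\tau\to 0^+}\chi_{\mathscr{B}}(\tau)/\tau\le 2/l$.

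For the lower bound and the final contradiction: by the construction \eqref{eq:159} together with $e^{-\omega_*(0)}=2$, we have $\chi_{0,n_i}/\tau_{0,n_i}\to 2\beta_0$, while \eqref{eq:160} gives $2\beta_0>2/l$. Uniform convergence of $\mathscr{B}_{n_i}$ to $\mathscr{B}$ evaluated along the diagonal $\tau=\tau_{0,n_i}$ yields $|\chi_{\mathscr{B}}(\tau_{0,n_i})-\chi_{0,n_i}|\to 0$; after passing to a further subsequence along which this error is $o(\tau_{0,n_i})$, one obtains $\chi_{\mathscr{B}}(\tau_{0,n_i})/\tau_{0,n_i}\to 2\beta_0$, contradicting the upper bound $2/l$. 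The main obstacle is precisely this diagonal subsequence extraction, since uniform convergence alone gives the error in sup norm but not at the sharper rate $o(\tau_{0,n_i})$ needed to read off the initial tangent at $N^-$; coordinating the tightness of the approach \eqref{eq:159} with the rate of the uniform limit from lemma \ref{lemma_sequence}, while exploiting the continuity of each $\beta_{n_i}$ at its starting time $\tau_{0,n_i}$ where $\beta_{n_i}(\tau_{0,n_i})=\beta_{0,n_i}\to\beta_0$, is the technical heart of the proof.
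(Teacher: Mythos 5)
Your strategy is genuinely different from the paper's, and it has a gap at exactly the point you flag as the ``technical heart.'' The paper's proof is local: it takes a single point $\check{\tau}$ in $C_2$ (resp.\ $C_3$, resp.\ handles $C_4$), produces a whole interval $I_\varepsilon$ contained in that set, and derives a contradiction on that interval --- for $C_2$ the segment of $\mathscr{B}$ would be outgoing null by \eqref{eq:227} while simultaneously lying on the strictly timelike curve $\tau=\hat{\tau}(\chi)$; for $C_3$ and $C_4$ the integral identity \eqref{eq:232} coming from proposition \ref{proposition11} forces $\int e_{\ast n_i}\,d\tau\to 0$, which is then contradicted by the positivity \eqref{eq:235} of the barrier function, and establishing \eqref{eq:235} is where proposition \ref{proposition2} and the condition $e_{0,n}>0$ in \eqref{eq:164} are actually used. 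Your upper-bound half is essentially a repackaging of the $C_2$/$C_4$ information and can be made rigorous: on $X\cap(0,\tau_1)$ one has $\chi_\ast(\tau)=\hat{\tau}^{-1}(\tau)=2\tau/l+\mathcal{O}(\tau^2)$, on the open complementary intervals $(a,b)$ the curve is flat, and matching $\hat{\tau}^{-1}(a)=\hat{\tau}^{-1}(b)$ forces $b/a\to 1$, so indeed $\chi_\ast(\tau)/\tau\to 2/l$ under the contradiction hypothesis (the all-$C_3$ case being excluded as you say). Note, however, that your argument nowhere uses the barrier machinery, which is a warning sign: the $\gamma=1$ cases are precisely what that machinery exists to exclude.

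The fatal step is the lower bound. Uniform convergence of $\mathscr{B}_{n_i}$ to $\mathscr{B}$ gives only $|\chi_{\mathscr{B}}(\tau_{0,n_i})-\chi_{0,n_i}|\le\varepsilon_i$ with $\varepsilon_i\to 0$ in sup norm; there is no mechanism to arrange $\varepsilon_i=o(\tau_{0,n_i})$. Both rates are fixed once the sequence of base points and the construction \eqref{eq:165}--\eqref{eq:171} are fixed, so ``passing to a further subsequence'' cannot create the needed decay --- a subsequence of a sequence with $\varepsilon_i\asymp\tau_{0,n_i}$ still has $\varepsilon_i\asymp\tau_{0,n_i}$. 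Geometrically, nothing in lemma \ref{lemma_sequence} prevents every $\mathscr{B}_{n_i}$ from leaving its base point in direction $2\beta_{0,n_i}$ and immediately bending onto the curve $\tau=\hat{\tau}(\chi)$, in which case the uniform limit has initial slope $2/l$ and your two bounds are compatible. The fact that the limit curve actually does inherit the slope $2\beta_0$ --- i.e.\ \eqref{eq:265} --- is proved in the paper only \emph{after} lemmas \ref{lemma_every_point} and \ref{lemma_sufficiently}, via the ODE \eqref{eq:398}, whose derivation requires $E+X>0$ and hence requires already knowing one is in $C_1$. So the missing step is not a technicality that a cleverer subsequence extraction will supply; it is essentially equivalent to the conclusion you are trying to prove, and the argument needs to be replaced by something like the paper's barrier-function analysis on the $C_3$ and $C_4$ sets.
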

\begin{proof}
Suppose on the contrary that there is a $\tau_1\in(0,\hat{\tau}]$ such that for all $\tau\in(0,\tau_1)$ we have $\tau\notin C_1$, that is,
\begin{align}
  \label{eq:219}
  \tau\in\bigcup_{i=2}^4C_i.
\end{align}
We may assume that $\tau_1$ is as small as we wish.

Consider first the case that there is a $\check{\tau}\in(0,\tau_1)$ such that $\check{\tau}\in C_2$, that is, $\check{\tau}\in X$, $\check{\tau}\in Y^c$. Since $Y^c$ is open, there is an $\varepsilon>0$ and an interval
\begin{align}
  \label{eq:220}
 I_\varepsilon:=[\check{\tau}-\varepsilon,\check{\tau}+\varepsilon]
\end{align}
such that $I_\varepsilon\subset (0,\tau_1)\cap Y^c$. Thus $I_\varepsilon \cap X^c=\emptyset$ because
\begin{align}
  \label{eq:221}
  I_\varepsilon \cap X^c\subset (0,\tau_1)\cap Y^c\cap X^c=(0,\tau_1)\cap C_1,
\end{align}
and according to our hypothesis the last set is empty. We conclude that $I_\varepsilon \subset X$; hence
\begin{align}
  \label{eq:222}
  I_\varepsilon \subset (0,\tau_1)\cap C_2
\end{align}
also. It follows that
\begin{align}
  \label{eq:223}
  \rho_\ast =1\quad \textrm{on $I_\varepsilon$},
\end{align}
and also that
\begin{align}
  \label{eq:224}
  \beta_{n_i}\rightarrow 1\quad\textrm{uniformly on $I_\varepsilon$},
\end{align}
which implies that for any $\tau',\tau''\in I_\varepsilon$,
\begin{align}
  \label{eq:225}
  \chi_{\ast n_i}(\tau'')-\chi_{\ast n_i}(\tau')=\int_{\tau'}^{\tau''}\exp (-\omega_{\ast n_i})\beta_{n_i}d\tau\rightarrow \int_{\tau'}^{\tau''}\exp (-\omega_\ast)\beta d\tau.
\end{align}
Since
\begin{align}
  \label{eq:226}
  \chi_{\ast n_i}(\tau'')-\chi_{\ast n_i}(\tau')\rightarrow \chi_\ast(\tau'')-\chi_\ast(\tau'),
\end{align}
$\chi_\ast$ is continuously differentiable in $I_\varepsilon$ and
\begin{align}
  \label{eq:227}
  e^{\omega_\ast}\frac{d\chi_\ast}{d\tau}=1.
\end{align}
This means that the segment of the curve $\mathscr{B}$ corresponding to the interval $I_\varepsilon$ is null outgoing. On the other hand, according to \eqref{eq:223}, this segment must coincide with the corresponding segment of the curve $\tau=\hat{\tau}(\chi)$, which is strictly timelike.

Consider next the case that there is a $\check{\tau}\in (0,\tau_1)$ such that $\check{\tau}\in C_3$, that is, $\check{\tau}\in Y$, $\check{\tau}\in X^c$. Since $X^c$ is open, there is an $\varepsilon_1>0$ and an interval $I_{\varepsilon_1}$ as in \eqref{eq:220}, but with $\varepsilon_1$ in the role of $\varepsilon$, such that $I_{\varepsilon_1}\subset (0,\tau_1)\cap X^c$. We have $I_{\varepsilon_1}\cap Y^c=\emptyset$, because
\begin{align}
  \label{eq:228}
  I_{\varepsilon_1} \cap Y^c\subset (0,\tau_1)\cap X^c\cap Y^c=(0,\tau_1)\cap C_1,
\end{align}
and according to our hypothesis the last set is empty. We conclude that $I_{\varepsilon_1}\subset Y$; hence
\begin{align}
  \label{eq:229}
  I_{\varepsilon_1}\subset(0,\tau_1)\cap C_3.
\end{align}
It follows that
\begin{align}
  \label{eq:230}
  \gamma=1 \quad\textrm{on $I_{\varepsilon_1}$}
\end{align}
and also that
\begin{align}
  \label{eq:231}
  \beta_{n_i}\rightarrow 0\quad\textrm{uniformly on $I_{\varepsilon_1}$}.
\end{align}
Let $0<\varepsilon<\varepsilon_1$. According to proposition \ref{proposition11} we can express
\begin{align}
  \label{eq:232}
  \gamma_{n_i}(\check{\tau}+\varepsilon)-\gamma_{n_i}(\check{\tau}-\varepsilon)=-\int_{\check{\tau}-\varepsilon}^{\check{\tau}+\varepsilon}\gamma_{n_i}[(1-\beta_{n_i})E_{n_i}+\beta_{n_i}B_{n_i}]d\tau.
\end{align}
Now, by \eqref{eq:230} the left hand side tends to zero as $i\rightarrow \infty$. Consequently, in conjunction with \eqref{eq:231},
\begin{align}
  \label{eq:233}
  \lim_{i\rightarrow \infty}\int_{\check{\tau}-\varepsilon}^{\check{\tau}+\varepsilon}e_{\ast n_i}d\tau=0.
\end{align}
Here $e_{\ast n_i}$ is the restriction to $\mathscr{B_{n_i}}$ of the barrier function \eqref{eq:129} corresponding to the $n_i$'th solution
\begin{align}
  \label{eq:234}
  e_{\ast n_i}=\frac{1}{2}a_{-\ast n_i}^2\xi_{\ast n_i}+\frac{1}{2r_{\ast n_i}}\left[a_{-\ast n_i}-\frac{1-4\pi r_{\ast n_i}^2}{a_{- \ast n_i}}\right]+\frac{\mu_{\ast n_i}}{2r_{\ast n_i}a_{- \ast n_i}}\quad\textrm{on $I_\varepsilon$}.
\end{align}
We will now show that
\begin{align}
  \label{eq:235}
  \liminf_{i\rightarrow \infty}\left(\inf_{I_\varepsilon}e_{\ast n_i}\right)>0,
\end{align}
contradicting \eqref{eq:233}. We have
\begin{align}
  \label{eq:236}
  e_{\ast n_i}(u)=e_{n_i}(u,0)+\int_0^ue_{+n_i}(u,v)\kappa_{n_i}(u,v)dv.
\end{align}
Along $C^{\ast -}$ we have $e_{n_i}=E_{n_i}(t)$ and $E_{n_i}\rightarrow E$ uniformly as $i\rightarrow \infty$. Therefore, it follows from proposition \ref{proposition2} that $E(t)>Ct$ for $\tau_1$ small enough, which as we have remarked, can in the present context be assumed. This implies that
\begin{align}
  \label{eq:237}
  \liminf_{i\rightarrow\infty}\left(\inf_{I_\varepsilon}e_{n_i}(u,0)\right)>0.
\end{align}
To deal with the second term in \eqref{eq:236} we first show that
\begin{align}
  \label{eq:238}
  \xi_{n_i}\rightarrow \xi:=\frac{1}{r}(\alpha_-+2\zeta)\quad\textrm{uniformly in $\mathscr{U}_{I_\varepsilon}^+$},
\end{align}
where (cf. (1.55a) of \cite{III})
\begin{align}
  \label{eq:239}
  \mathscr{U}^+_{I_\varepsilon}=\left\{(u,v):u\in I_\varepsilon,v\in[0,u]\right\}
\end{align}
and (cf. (1.26a) of \cite{III} and (6.23) of \cite{I})
\begin{align}
  \label{eq:240}
  \alpha_-:=\frac{1}{\nu}\frac{\p \alpha}{\p u},\qquad \alpha:=\frac{1}{\nu}\frac{\p (r\phi)}{\p u}.
\end{align}
This will then imply, through \eqref{eq:174}, that
\begin{align}
  \label{eq:241}
  \xi_{+n_i}\rightarrow \xi_+\quad\textrm{uniformly in $\mathscr{U}_{I_\varepsilon}^+$},
\end{align}
where $\xi_+$ is given by the right hand side of \eqref{eq:174}. This in turn will imply, through \eqref{eq:148}, that
\begin{align}
  \label{eq:242}
  e_{+n_i}\rightarrow e_+\quad\textrm{uniformly in $\mathscr{U}_{I_\varepsilon}^+$},
\end{align}
where $e_+$ is given by the right hand side of \eqref{eq:148}. Since proposition \ref{proposition2} gives $e_+>0$ in $\mathscr{U}_{I_\varepsilon}^+$, if $\tau_1$ is sufficiently small, which as we have remarked, can in the present context be assumed, it will then follow that the second term in \eqref{eq:236} is bounded from below by a positive constant for sufficiently large $i$. This together with \eqref{eq:237} will imply \eqref{eq:235}. To show the uniform convergence of $\xi_{n_i}$ we recall from (1.26c) of \cite{III} that we can write
\begin{align}
  \label{eq:243}
  \alpha_{-n_i}(u,v)=e^{-2N_{n_i}(u,v)}\alpha_{-n_i}(u,0)-A_{-n_i}(u,v),
\end{align}
where
\begin{align}
  \label{eq:244}
  A_{-n_i}&=\alpha(u,v)N_{-n_i}(u,v)e^{-N_{n_i}(u,v)}\notag\\
&\qquad+\int_0^v\left[\zeta_{n_i}N_{+n_i}+\phi_{n_i}(N_{+-n_i}+N_{+n_i}N_{-n_i})\right](u,v')e^{2N_{n_i}(u,v')-2N_{n_i}(u,v)}\kappa_{n_i}(u,v')dv'.
\end{align}
Now, as in the proof of lemma 1.5 of \cite{III} it follows that
\begin{align}
  \label{eq:245}
  A_{-n_i}\rightarrow A_-\quad\textrm{uniformly on $\mathscr{U}(\hat{\tau})$},
\end{align}
where
\begin{align}
  \label{eq:246}
  A_-(u,v)&:=\alpha(u,v)N_{-}(u,v)e^{-N_{}(u,v)}\notag\\
&\qquad+\int_0^v\left[\zeta N_{+}+\phi(N_{+-}+N_{+}N_{-})\right](u,v')e^{2N(u,v')-2N(u,v)}\kappa(u,v')dv'.
\end{align}
Using the initial conditions given by \eqref{eq:165} and $\dot{R}Z=-1$ it follows that
\begin{align}
  \label{eq:247}
  \xi_n(u,0)=-\frac{\dot{Z}_n}{\dot{R}_n}(\phi_n(u,0))=-\frac{\ddot{R}_n}{\dot{R}_n^3}(\phi_n(u,0))=-\left(\frac{\dot{Z}\dot{R}^2+l_n}{(\dot{R}+k_n+l_n\phi_n(u,0))^3}\right)(\phi_n(u,0)).
\end{align}
Proposition 4.1 of \cite{II} implies (see (1.39) of \cite{III}) that
\begin{align}
  \label{eq:248}
  0=\phi_{n_i}(0,0)\leq \phi_{n_i}(u,0)\leq \phi_{n_i}(u,u)\leq u.
\end{align}
Let $J_0\subset I_{\varepsilon_1} \setminus I_{\varepsilon}$ be a closed subinterval on the left of $I_{\varepsilon}$. From \eqref{eq:231} it follows that $\beta=0$ on $J_0$, which implies $\nu_\ast>0$ on $J_0$. It follows that on $J_0\times \{0\}$ the functions $\nu_{n_i}$ are uniformly bounded from below by a positive constant. Since the same is true for the functions $\zeta_{n_i}$ and we have $\p\phi_{n_i}/\p u=\nu_{n_i}\zeta_{n_i}$, it follows that
\begin{align}
  \label{eq:249}
  \inf_i\left(\inf_{u\in J_0}\frac{\p\phi_{n_i}}{\p u}\right):=\varepsilon_0>0.
\end{align}
Consequently
\begin{align}
  \label{eq:250}
  \inf_{u\in I_{\varepsilon}}\phi_{n_i}(u,0),\inf_{u\in I_{\varepsilon}}\phi(u,0)\geq \varepsilon_0|J_0|.
\end{align}
Since $\dot{Z}(t)$ is continuous away from $t=0$ we conclude that the sequence $(\xi_{n_i}(\cdot,0))$ converges uniformly in $I_{\varepsilon}$ to $\xi(\cdot,0)$, where
\begin{align}
  \label{eq:251}
  \xi(u,0)=-\frac{\dot{Z}(\phi(u,0))}{\dot{R}(\phi(u,0))}
\end{align}
is a continuous function in $I_{\varepsilon}$. Hence the sequence $\alpha_{-n_i}(\cdot,0)$ converges uniformly in $I_{\varepsilon}$ to $\alpha_-(\cdot,0)=r(\cdot,0)\xi(\cdot,0)-2\zeta(\cdot,0)$. In conjunction with \eqref{eq:245} this allows us to conclude that
\begin{align}
  \label{eq:252}
  \alpha_{-n_i}\rightarrow \alpha_-\quad\textrm{uniformly in $\mathscr{U}_{I_{\varepsilon}}^+$}
\end{align}
where
\begin{align}
  \label{eq:253}
  \alpha_-(u,v)=e^{-2N(u,v)}\alpha_-(u,0)-A_-(u,v)
\end{align}
is a continuous function in $\mathscr{U}_{I_{\varepsilon}}^+$. Consequently, \eqref{eq:238} follows.

The preceding development leads to the conclusion that $(0,\tau_1)\subset C_4$. Then for any closed interval $I_\varepsilon\subset(0,\tau_1)$ we have
\begin{align}
  \label{eq:254}
  \rho_\ast=\gamma=1\quad\textrm{on $I_\varepsilon$}
\end{align}
and \eqref{eq:232} holds. Therefore, considering the remark at the end of proposition \ref{proposition11}, since the left-hand side of \eqref{eq:232} tends to zero as before, we obtain
\begin{align}
  \label{eq:255}
  \lim_{i\rightarrow \infty}\int_{\check{\tau}-\varepsilon}^{\check{\tau}+\varepsilon}[(1-\beta_{n_i})e_{\ast n_i}+\beta_{n_i}b_{\ast n_i}]d\tau=0.
\end{align}
Now, the left-hand side of \eqref{eq:255} is not less than
\begin{align}
  \label{eq:256}
  2\varepsilon \liminf_{i\rightarrow \infty}\left(\inf_{I_\varepsilon}\min\{e_{\ast n_i},b_{\ast n_i}\}\right),
\end{align}
and, since \eqref{eq:235} holds, while by proposition 5.1 of \cite{I}, in view of \eqref{eq:254}, $b_{\ast n_i}\rightarrow b_\ast$ uniformly in $I_\varepsilon$ and
\begin{align}
  \label{eq:257}
  \inf_{I_\varepsilon}b_\ast>0.
\end{align}
We have therefore again reached a contradiction and the proof of the lemma is complete.

\end{proof}
Next we show
\begin{Lemma}\label{lemma_sufficiently}
  If $\hat{\tau}$ is sufficiently small, then $C_1$ coincides with $(0,\hat{\tau}]$.
\end{Lemma}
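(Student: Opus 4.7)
The plan is to argue by contradiction, combining the density statement of Lemma \ref{lemma_every_point} with explicit leading-order expansions of $\rho_\ast$ and $\gamma$ along $\mathscr{B}$ near $N^-$. Suppose the conclusion fails; then for every sufficiently small $\hat{\tau}$ there is $\tau^\ast \in (C_2\cup C_3\cup C_4)\cap (0,\hat{\tau}]$. Using Lemma \ref{lemma_every_point} I would pick $\tau_0\in C_1\cap(0,\tau^\ast)$ arbitrarily close to $0$ and let $(\tau^-,\tau^+)\subset C_1$ be the maximal open interval containing $\tau_0$, with $\tau^+\le\tau^\ast$. By the definition of the $C_i$, at the endpoint $\tau^+$ one has $\rho_\ast(\tau^+)=1$ or $\gamma(\tau^+)=1$, and it suffices to exclude both possibilities when $\hat{\tau}$ is small enough.

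To exclude $\rho_\ast(\tau^+)=1$, I would use the Taylor expansion \eqref{eq:127}. Writing $\chi_\ast(\tau)=\int_0^\tau e^{-\omega_\ast}\beta\,d\tau'$ for the Lipschitz parametrization of $\mathscr{B}$, substitution gives
\begin{align*}
\rho_\ast(\tau)-1 = 2i_0\bigl(\tau+\tfrac{\chi_\ast(\tau)}{2}\bigr)\bigl(\tau-\tfrac{l\chi_\ast(\tau)}{2}\bigr) + O(\tau^3).
\end{align*}
The key quantitative ingredient is that the approximating sequence was chosen so that $\beta_{0,n}\to\beta_0=(5+l)/(5l+1)$ with $\beta_0>1/l$ by \eqref{eq:160}, and hence, combining this with the Lipschitz convergence $\mathscr{B}_{n_i}\to\mathscr{B}$ from Lemma \ref{lemma_sequence}, I would derive the lower bound $\chi_\ast(\tau)/\tau \ge 2/l+\delta$ for some $\delta>0$ and all small $\tau>0$. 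This makes the second factor strictly negative and yields $\rho_\ast(\tau)-1\le -c\tau^2$ for some $c>0$, contradicting $\rho_\ast(\tau^+)=1$.

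To exclude $\gamma(\tau^+)=1$, I would integrate the ODE of Proposition \ref{proposition11}, namely $-\gamma^{-1}\,d\gamma/d\tau=(1-\beta)E+\beta B$, on $(\tau^-,\tau^+)$. Proposition \ref{proposition2} gives $e|_{N^-}=0$ with strictly positive first derivatives $i$ and $2i_0$, and restricting $e$ to $\mathscr{B}$ and Taylor-expanding yields $E(\tau)\ge c_1\tau$ for a positive constant $c_1$ on a small initial interval; by the remark following Proposition \ref{proposition11} the function $B$ admits an analogous positive leading order near $N^-$. Integrating the ODE from $\tau_0$ to $\tau^+$ with $\tau_0$ pushed to $0$ via Lemma \ref{lemma_every_point} then gives $1-\gamma^2(\tau^+)\ge c_2(\tau^+)^2(1+o(1))>0$, again contradicting $\gamma(\tau^+)=1$ once $\hat{\tau}$ is small.

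The main obstacle is the first exclusion, specifically the rigorous extraction of $\liminf_{\tau\to 0^+}\chi_\ast(\tau)/\tau>2/l$. Because $\beta$ is only known as a weak-star limit in $L^\infty$, the curve $\chi_\ast$ is Lipschitz but not $C^1$ at $0$, so the needed lower bound on the initial slope cannot be read off pointwise; it must instead be transferred to the limit from the construction of the approximating points $(\tau_{0,n_i},\chi_{0,n_i})$ with $\chi_{0,n_i}/\tau_{0,n_i}\to 2\beta_0$, using the uniform convergence of $\mathscr{B}_{n_i}$ supplied by Lemma \ref{lemma_sequence}. Once that quantitative input is in hand, both contradictions proceed as sketched and the lemma follows.
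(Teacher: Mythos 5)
Your skeleton --- take the maximal component of $C_1$ containing a point supplied by Lemma \ref{lemma_every_point} and rule out both degeneracies at its right endpoint --- is reasonable, and your treatment of the case $\gamma(\tau^+)=1$ is broadly in the spirit of the paper's argument (which establishes positivity of the boundary barrier at the endpoint by rerunning the second case of the proof of Lemma \ref{lemma_every_point} with the component $I$ in the role of $I_\varepsilon$, and then contradicts the integrated form of Proposition \ref{proposition11}). But your exclusion of $\rho_\ast(\tau^+)=1$ has a genuine gap. The bound $\liminf_{\tau\to 0^+}\chi_\ast(\tau)/\tau\geq 2/l+\delta$ is simply not available at this stage: the asymptotic $\chi_\ast(\tau)=2\beta_0\tau+\mathcal{O}(\tau^2)$ of \eqref{eq:400} is derived only \emph{after} this lemma, from the uniform convergence \eqref{eq:261} of $\beta_{n_i}$ on $(0,\hat{\tau}]$ --- which is exactly what $C_1=(0,\hat{\tau}]$ is needed to establish --- together with the second-order expansion of $E$ in Proposition \ref{proposition4} and the ODE \eqref{eq:398}; invoking it here is circular. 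Your proposed substitute does not repair this: the condition \eqref{eq:159} only fixes the direction of the geodesic joining $N^-$ to the \emph{starting point} $(\tau_{0,n},\chi_{0,n})$ of the $n$'th regularized problem. It constrains neither the velocity $\beta_n(\tau)$ of the shock $\mathscr{B}_n$ along its length (that is determined by the jump conditions), nor, after passing to the limit, the ratio $\chi_\ast(\tau)/\tau$ for $\tau>0$: uniform convergence of the Lipschitz curves $\mathscr{B}_{n_i}$ from Lemma \ref{lemma_sequence} gives no control on the difference quotient at the origin of the limit curve. So the negativity of the second factor in \eqref{eq:127} along $\mathscr{B}$ cannot be extracted this way.

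The paper avoids this issue entirely: the endpoint cases are disposed of by the continuation argument of lemma 1.5 of \cite{III}, whose essential quantitative input is the positivity of the barrier functions (the boundary barrier $e_1>0$ at the right endpoint of a component, together with proposition 5.1 of \cite{I} for $b$), not a lower bound on the slope of $\mathscr{B}$. The two ingredients that genuinely have to be changed in the present setting are (i) the uniform convergence of $\xi_{n_i}(\cdot,0)$ on closed subintervals of a component of $C_1$, obtained from the explicit formula \eqref{eq:258} together with the lower bound \eqref{eq:259} on $\phi_{n_i}(u,0)$, and (ii) the proof that $e_1>0$ for $\hat{\tau}$ small, by the argument of the second case of Lemma \ref{lemma_every_point}. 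On that last point your sketch also understates the work: Proposition \ref{proposition2} concerns the barrier of the \emph{prior} hard phase at $N^-$, and transporting its positivity to the barrier of the limit solution along $\mathscr{B}$ requires the whole uniform-convergence apparatus for $e_{+n_i}$ (via $\xi_{n_i}$ and $\alpha_{-n_i}$), not a Taylor expansion of $e$ restricted to $\mathscr{B}$.
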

\begin{proof}
The proof of this lemma is identical to the proof of lemma 1.5 of \cite{III} up to two points.

The first point that has to be changed is the proof of the fact that the sequence of functions $\xi_{n_i}(\cdot,0)$ is uniformly convergent in any closed subinterval $J\subset I$ where $I$ is a component of the open set $C_1$ as in the proof in \cite{III}. Using the initial conditions given by \eqref{eq:165} and $\dot{R}Z=-1$ it follows that
\begin{align}
  \label{eq:258}
  \xi_n(u,0)=-\frac{\dot{Z}_n}{\dot{R}_n}(\phi_n(u,0))=-\frac{\ddot{R}_n}{\dot{R}_n^3}(\phi_n(u,0))=-\left(\frac{\dot{Z}\dot{R}^2+l_n}{(\dot{R}+k_n+l_n\phi_n(u,0))^3}\right)(\phi_n(u,0)).
\end{align}
Since $\dot{Z}$, $\dot{R}$ are continuous and applying the same reasoning as in the proof of lemma 1.5 of \cite{III} that yields
\begin{align}
  \label{eq:259}
  \inf_{u\in J}\phi_{n_i}(u,0),\inf_{u\in J}\phi(u,0)\geq \varepsilon_0|J_0|,
\end{align}
it follows that the sequence $(\xi_{n_i}(\cdot,0))$ converges uniformly in $J$ to $\xi(\cdot,0)$, where
\begin{align}
  \label{eq:260}
  \xi(u,0)=-\frac{\dot{Z}(\phi(u,0))}{\dot{R}(\phi(u,0))}
\end{align}
is a continuous function in $I$ extending continuously to the right end point $\tau_1$ of $I$.

The second point that has to be changed is the proof of $e_1>0$ for small enough $\hat{\tau}$, where $e_1$ is the boundary barrier function at the point $\tau_1$. This then contradicts (1.58a) through (1.58c) of \cite{III} using also proposition 5.1 of \cite{I}. The proof of $e_1>0$ for $\hat{\tau}$ small enough can be given using the argument in the proof of lemma \ref{lemma_every_point} (the second case), but with the interval $I$ (whose right end point coincides with $\tau_1$) in the role of the interval $I_\varepsilon$.

\end{proof}

Up to this point we have
\begin{align}
  \label{eq:261}
  \beta_{n_i}(\tau)\rightarrow \beta(\tau)\quad\textrm{uniformly on $(0,\hat{\tau}]$}.
\end{align}
Therefore, $\beta$ is continuous and bounded on $(0,\hat{\tau}]$ and $\chi_\ast\in C^1(0,\hat{\tau}]$. Furthermore
\begin{align}
  \label{eq:262}
  0\leq \beta(\tau)\leq 1\qquad\textrm{for $\tau\in (0,\hat{\tau}]$}.
\end{align}
This implies
\begin{align}
  \label{eq:263}
  \chi_\ast(\tau)=\mathcal{O}(\tau).
\end{align}
By construction we also have
\begin{align}
  \label{eq:264}
  \lim_{i\rightarrow \infty}\beta_{n_i}(0)=\beta_0,
\end{align}
where $\beta_0$ is given by \eqref{eq:159}. In the following we will show
\begin{align}
  \label{eq:265}
  \lim_{\tau\rightarrow 0}\beta(\tau)=\beta_0.
\end{align}
Hence
\begin{align}
  \label{eq:266}
  \beta\in C^0[0,\hat{\tau}] 
\end{align}
and
\begin{align}
  \label{eq:267}
  \chi_\ast\in C^1[0,\hat{\tau}].
\end{align}

Defining the variables $z$, $E$ and $X$ by
\begin{align}
  \label{eq:268}
  z^2:=\frac{1-\beta}{1+\beta},\qquad E:=\frac{1}{\gamma^2}-1,\qquad X:=1-\rho_\ast,
\end{align}
the equation for the free phase boundary, equation (1.24a) of \cite{II}, becomes
\begin{align}
  \label{eq:269}
  z^2=\frac{X}{E+X}.
\end{align}
Using (cf.~(6.23) of \cite{I})
\begin{align}
  \label{eq:270}
  \alpha=-\frac{\partial (r\phi)/\partial u}{\partial r/\partial u}=r\zeta-\phi
\end{align}
and $\gamma=1/(\zeta_\ast a_{-\ast})$ (see (1.23b) of \cite{II}), we can express
\begin{align}
  \label{eq:271}
 E=a_{-\ast}^2\left(\frac{\alpha_\ast+\phi_\ast}{r_\ast}\right)^2-1.
\end{align}
The function $\alpha$ satisfies (see (4.14b) of \cite{II})
\begin{align}
  \label{eq:272}
  \alpha(u,v)e^{N(u,v)}=\alpha(u,0)-\int_0^v\left(\phi\frac{\p N}{\p v}e^N\right)(u,v')dv',
\end{align}
where $N$ is given by the first of \eqref{eq:204}. Let
\begin{align}
  \label{eq:273}
  M(u,v):=\int_0^vg(u,v')dv'.
\end{align}
where
\begin{align}
  \label{eq:274}
  g:=\phi f e^N,\qquad f:=\frac{\partial N}{\partial v}=(\mu-4\pi r^2)\frac{\kappa}{r}.
\end{align}
Thus
\begin{align}
  \label{eq:275}
  \alpha(u,v)=\left(\alpha(u,0)-M(u,v)\right)e^{-N(u,v)}.
\end{align}
In particular
\begin{align}
  \label{eq:276}
  \alpha_\ast(u)=\left(\alpha(u,0)-M_\ast(u)\right)e^{-N_\ast(u)}
\end{align}
and we can express
\begin{align}
  \label{eq:277}
  \alpha(u,0)=\alpha_\ast(u)e^{N_\ast(u)}+M_\ast(u).
\end{align}
Substituting this into \eqref{eq:275}, we obtain
\begin{align}
  \label{eq:278}
  \alpha(u,v)=\left(\alpha_\ast(u)e^{N_\ast(u)}+M_\ast(u)-M(u,v)\right)e^{-N(u,v)}.
\end{align}

In the following we will express $\alpha$ along $C^{\ast -}$ as a function of $s=r_0-r$. We note that $s$ is increasing along $C^{\ast -}$ and vanishes at $N^-$. We recall that $R(t)$ describes $r$ as a given function of $\phi$ along $C^{\ast -}$. Therefore, $\phi$ along $C^{\ast -}$ as a function of $s$  is given by the function
\begin{align}
  \label{eq:279}
  \Phi(s):=R^{-1}(r_0-s).
\end{align}
Now, in view of equation (6.4a) of \cite{I}, i.e.~$\partial \phi/\partial u=\nu\zeta$, we have for $\zeta$ along $C^{\ast -}$ as a function of $s$
\begin{align}
  \label{eq:280}
  \zeta|_{C^{\ast -}}(s)=\frac{d\Phi}{ds}(s).
\end{align}
Let us denote by $H$ the function which defines $-\alpha$ as a function of $s$ along $C^{\ast-}$. Hence, using \eqref{eq:279} and \eqref{eq:280} in \eqref{eq:270},
\begin{align}
  \label{eq:281}
  H(s)=(s-r_0)\frac{d\Phi}{ds}(s)+\Phi(s).
\end{align}

Let us denote by $Y$ the function which describes $r$ along $C^{\ast -}$ according to
\begin{align}
  \label{eq:282}
  Y(u):=r_0-r(u,0).
\end{align}
Therefore,
\begin{align}
  \label{eq:283}
  \alpha(u,0)=-(H\circ Y)(u).
\end{align}
Substituting this into \eqref{eq:276} we obtain
\begin{align}
  \label{eq:284}
  \alpha_\ast(\tau)=-\left[(H\circ Y(\tau))+M_\ast(\tau)\right]e^{-N_\ast(\tau)}.
\end{align}
Substituting this into \eqref{eq:271} we obtain
\begin{align}
  \label{eq:285}
  E=a_{-\ast}^2\left(\frac{\left[(H\circ Y)+M_\ast\right]e^{-N_\ast}-\phi_\ast}{r_\ast}\right)^2-1.
\end{align}
For the asymptotic form of $E$ we need the asymptotic forms of $a_{-\ast}$, $(H\circ Y)$, $M_\ast$, $N_\ast$, $\phi_\ast$ and $r_\ast$.

We first expand $H$ up to second order. Using the fact that $\zeta_0=1/a_{-0}$ (see \eqref{eq:133}), we get
\begin{align}
  \label{eq:286}
  H(0)=-\frac{r_0}{a_{-0}}.
\end{align}
Now,
\begin{align}
  \label{eq:287}
  \frac{dH}{ds}(0)=2\frac{d\Phi}{ds}(0)-r_0\frac{d^2\Phi}{ds^2}(0).
\end{align}
Using $\partial \zeta/\partial u=\nu\xi$ we get from \eqref{eq:280}
\begin{align}
  \label{eq:288}
  \frac{d^2\Phi}{ds^2}=\xi.
\end{align}
Therefore,
\begin{align}
  \label{eq:289}
  \frac{dH}{ds}(0)=\frac{2}{a_{-0}}-r_0\xi_0.
\end{align}
Using \eqref{eq:133}, \eqref{eq:135}, we obtain
\begin{align}
  \label{eq:290}
  \frac{dH}{ds}(0)=\frac{3}{a_{-0}}-\frac{a_{+0}}{a_{-0}^2}+\frac{4\pi r_0^2}{a_{-0}^3}.
\end{align}
Now,
\begin{align}
  \label{eq:291}
  \frac{d^2H}{ds^2}(0)=3\frac{d^2\Phi}{ds^2}(0)-r_0\frac{d^3\Phi}{ds^3}(0).
\end{align}
Using $\partial\xi/\partial u=\nu\xi_-$ we get from \eqref{eq:288}
\begin{align}
  \label{eq:292}
  \frac{d^3\Phi}{ds^3}=\xi_{-}.
\end{align}
Therefore
\begin{align}
  \label{eq:293}
  \frac{d^2H}{ds^2}(0)=3\xi_0-r_0\xi_{-0}.
\end{align}
Using \eqref{eq:135} and \eqref{eq:143} we obtain
\begin{align}
  \label{eq:294}
  \frac{d^2H}{ds^2}(0)=&-\frac{4ir_0}{a_{-0}^3}+\frac{1}{r_0a_{-0}^3}\left(1-36\pi r_0^2-3a_{+0}^2\right)\notag\\
&\qquad+\frac{1}{r_0 a_{-0}^5}\left(-48\pi^2r_0^4-3a_{-0}^4+20\pi r_0^2a_{-0}a_{+0}+5a_{-0}^3a_{+0}\right).
\end{align}

From (4.6c) of \cite{II} we have
\begin{align}
  \label{eq:295}
  \frac{d\phi_\ast}{d\tau}=\frac{1}{2}(f_++f_-),
\end{align}
where (recall that $x=2(1-\rho_\ast)$)
\begin{align}
  \label{eq:296}
  f_+:=(1+\beta)\sqrt{1-\frac{\beta x}{1+\beta}},\qquad f_-:=(1-\beta)\sqrt{1+\frac{\beta x}{1-\beta}}.
\end{align}
Let
\begin{align}
  \label{eq:297}
  g(s):=\sqrt{1+s}-\left(1+\tfrac{1}{2}s-\tfrac{1}{8}s^2\right).
\end{align}
The Taylor expansion of $g$ begins with cubic terms and we can express
\begin{align}
  \label{eq:298}
  \frac{1}{2}(f_++f_-)=1-\frac{1}{8}\frac{\beta^2x^2}{1-\beta^2}+f_0,
\end{align}
where
\begin{align}
  \label{eq:299}
  f_0:=\frac{1}{2}(1+\beta)g\left(-\frac{\beta x}{1+\beta}\right)+\frac{1}{2}(1-\beta)g\left(\frac{\beta x}{1-\beta}\right).
\end{align}
Using \eqref{eq:127}, we deduce
\begin{align}
  \label{eq:300}
  \frac{d\phi_\ast}{d\tau}=1+\mathcal{O}(\tau^4).
\end{align}
We define
\begin{align}
  \label{eq:301}
  \Psi(\tau):=\phi_\ast(\tau)-\tau
\end{align}
and we have
\begin{align}
  \label{eq:302}
  \Psi=O(\tau^5).
\end{align}

Let us define
\begin{align}
  \label{eq:303}
  \delta:=\tau-\frac{\chi_\ast}{2}-\frac{2Y}{a_{-0}}.
\end{align}

In the following we will use the notation $f^{(n)}$ to denote the $n$'th degree part of $f$.

The following proposition gives expressions for $M_\ast$, $N_\ast$ and $\delta$ up to cubic error terms.
\begin{proposition} The functions
  \begin{align}
    \label{eq:304}
    M_\ast(\tau)=\int_0^\tau\left(\phi \frac{\mu-4\pi r^2}{r}\kappa e^N\right)(\tau,v)dv,\qquad N_\ast(\tau)=\int_0^\tau\left(\frac{\mu-4\pi r^2}{r}\kappa\right)(\tau,v)dv,
  \end{align}
  \begin{align}
        \delta(\tau)=\tau-\frac{\chi_\ast}{2}-\frac{2Y(\tau)}{a_{-0}},
  \end{align}
where
\begin{align}
  \label{eq:305}
  Y(\tau)=r_0-r(\tau,0),
\end{align}
have the following asymptotic forms:
  \begin{align}
    \label{eq:306}
     M_\ast=M_\ast^{(2)}+\mathcal{O}(\tau^3),\qquad N_\ast=N_\ast^{(1)}+N_\ast^{(2)}+\mathcal{O}(\tau^3),\qquad\delta=\delta^{(2)}+\mathcal{O}(\tau^3),
  \end{align}
where
  \begin{align}
 \label{eq:307}
    M_\ast^{(2)}&:=\frac{\mu_0-4\pi r_0^2}{8a_{-0}r_0}\left(3\tau^2+\tau\chi_\ast-\frac{1}{4}\chi_\ast^2\right),\\
         \label{eq:308}
N_\ast^{(1)}&:=\frac{\mu_0-4\pi r_0^2}{2a_{-0}r_0}\left(\tau+\frac{\chi_\ast}{2}\right),\\
N_\ast^{(2)}&:=\frac{1}{2}A\tau^2+B\tau\chi_\ast+\frac{1}{2}C\chi_\ast^2,\\
    \label{eq:309}
    \delta^{(2)}&:=-\frac{1}{2r_0a_{-0}}\bigg\{2\pi r_0^2\tau^2+\left[a_{-0}(a_{+0}-a_{-0})-2\pi r_0^2\right]\tau\chi_\ast\notag\\
&\qquad\qquad\qquad\qquad+\frac{1}{4}\left[a_{-0}(a_{+0}-a_{-0})+2\pi r_0^2-2a_{-0}r_0k\right]\chi_\ast^2\bigg\},
\end{align}
where
\begin{align}
  \label{eq:310}
  A&:=\frac{1}{2a_{-0}^2r_0^2}\left[-\mu_0+\frac{\mu_0^2}{2}-4\pi r_0^2+4\pi r_0^2\mu_0+8\pi^2r_0^4+(2\mu_0+2\pi r_0^2)a_{-0}^2\right],\\
B&:=\frac{1}{8a_{-0}^2r_0^2}\left[-\mu_0\left(2-\frac{3}{2}\mu_0\right)+4\pi r_0^2-8\pi^2r_0^4-2\pi r_0^2a_{-0}^2+\mu_0a_{-0}^2\right],\\
C&:=\frac{1}{8a_{-0}^2r_0^2}\bigg[-4\mu_0+\frac{7}{2}\mu_0^2+16\pi r_0^2-8\pi r_0^2\mu_0-24\pi^2r_0^4\notag\\
&\qquad\qquad\qquad\qquad+2r_0(\mu_0-4\pi r_0^2)ka_{-0}+(\mu_0-10\pi r_0^2)a_{-0}^2\bigg].
\end{align}
\label{proposition10}
\end{proposition}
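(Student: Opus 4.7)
The plan is to obtain all three expansions by Taylor expansion of the integrands at $N^-=(0,0)$, using the data already computed in the preceding section, followed by integration in $v$ and conversion from the null coordinates $(u,v)$ to the pair $(\tau,\chi_\ast)$ induced on the free boundary $\mathscr{B}$. The key preparatory observation is that, because $\Omega(0,0)=1$ by \eqref{eq:46} and $e^{\omega_\ast(0)}=1/2$ by \eqref{eq:70}, the change of variables linearises at $N^-$ as
\begin{align*}
du=d\tau+\tfrac{1}{2}d\chi+O(\tau),\qquad dv=d\tau-\tfrac{1}{2}d\chi+O(\tau),
\end{align*}
so that on $\mathscr{B}$ one has $u_\ast(\tau)=\tau+\chi_\ast/2+O(\tau^2)$ and $v_\ast(\tau)=\tau-\chi_\ast/2+O(\tau^2)$.

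For $N_\ast$, I would expand the integrand $F(u,v):=(\mu-4\pi r^2)\kappa/r$ to first order about $N^-$. The value $F(0,0)=(\mu_0-4\pi r_0^2)/(2r_0a_{-0})$ follows from $\kappa(0,0)=1/(2a_{-0})$, cf.\ \eqref{eq:154}, together with \eqref{eq:78}. The first derivatives $F_u(0,0)$ and $F_v(0,0)$ are obtained by combining \eqref{eq:139} and \eqref{eq:146} for $\p\mu/\p u$, $\p\mu/\p v$ with the transport equations (6.6a), (6.6b) of \cite{I} for $\kappa$, and with the Hessian values \eqref{eq:80}--\eqref{eq:82} for $r$. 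Direct integration in $v$ from $0$ to the upper limit then produces, to cubic error, a polynomial in $(u_\ast,v_\ast)$; rewriting it in $(\tau,\chi_\ast)$ via the linear change above yields the claimed $N_\ast^{(1)}$ together with the quadratic coefficients $A,B,C$. The computation for $M_\ast$ is the same but simpler: since $\phi(0,0)=0$ by \eqref{eq:51} and $N(0,0)=0$, the integrand $\phi\,F\,e^N$ vanishes at the origin, so one needs only the leading expansions $\phi=\tfrac{1}{2}(u+v)+O(\|(u,v)\|^2)$ from \eqref{eq:48}, the constant $F(0,0)$, and $e^N=1+O(v)$. Integration and conversion to $(\tau,\chi_\ast)$ deliver $M_\ast^{(2)}$.

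For $\delta(\tau)$, I would Taylor-expand $Y(\tau)=r_0-r(\tau,0)$ along $C^{\ast-}$ to second order using $r_u(0,0)=-a_{-0}/2$ and $r_{uu}(0,0)=-\pi r_0$ from \eqref{eq:80}--\eqref{eq:81}, giving $Y(\tau)=(a_{-0}/2)\tau+(\pi r_0/2)\tau^2+O(\tau^3)$. The Taylor expansion of $\chi_\ast(\tau)$ comes from $\chi_\ast'(0)$ determined by $\beta_0$ in \eqref{eq:159} through \eqref{eq:201}, with second-order terms encoding $k$ via \eqref{eq:85}. The linear terms in $\tau-\chi_\ast/2-2Y/a_{-0}$ cancel by construction, leaving a quadratic polynomial whose coefficients can be assembled from the Hessian of $r$ at $N^-$ (equations \eqref{eq:101}--\eqref{eq:103}) and the second-order expansion of $\chi_\ast$; this reproduces $\delta^{(2)}$, with $k$ appearing only in the $\chi_\ast^2$ coefficient.

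The main obstacle is not any single computation but the volume of bookkeeping: one must carry quadratic Taylor expansions of $r$, $\phi$, $\mu$, $\kappa$ and $e^N$ simultaneously, apply the $(u,v)\leftrightarrow(\tau,\chi)$ transformation to second order, and collect many contributions into each of the coefficients $A,B,C$ and the coefficients of $\delta^{(2)}$. A useful internal consistency check is that the curvature $k$ of $\Sigma$ enters only through $h''(0)=k$, which affects only the relationship between the null coordinates on $\mathscr{B}$ at quadratic order; accordingly $k$ should and does appear in precisely the $\chi_\ast^2$ terms of $C$ and $\delta^{(2)}$, and nowhere else.
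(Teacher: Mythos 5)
Your overall strategy --- Taylor-expand the integrands at $N^-$ in the null coordinates and then integrate --- misses the essential difficulty of this proposition and would not produce the stated result. The quantities $M_\ast^{(2)}$, $N_\ast^{(1)}+N_\ast^{(2)}$, $\delta^{(2)}$ are polynomials in the \emph{two} variables $\tau$ and $\chi_\ast(\tau)$, where $\chi_\ast$ is the unknown free boundary, known at this stage only to be $C^1$ with $0\leq\beta\leq 1$; it has no second-order Taylor expansion one could insert (indeed the purpose of this proposition together with proposition \ref{proposition4} is to set up equation \eqref{eq:398}, from which the leading behaviour of $\chi_\ast$ is only afterwards \emph{derived}). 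Correspondingly, the integrands are not smooth functions of $(u,v)$ near $N^-$: for instance $\kappa(u,v)=\kappa_\ast(v)e^{-K(u,v)}$ with $\kappa_\ast=\tfrac{1}{2a_{-\ast}}(1+\beta)$, so your coefficient $F_v(0,0)$ would require $d\beta/d\tau$ at $0$, which is not known to exist. The paper instead differentiates $M_\ast$, $N_\ast$, $\delta$ along $\mathscr{B}$, keeps the terms proportional to $\beta=e^{\omega_\ast}\,d\chi_\ast/d\tau$ explicit, and recognizes the derivative as an exact $\tau$-derivative of a quadratic polynomial in $(\tau,\chi_\ast)$ plus $\mathcal{O}(\tau^2)$ (see \eqref{eq:333}, \eqref{eq:353}, \eqref{eq:360}); the $\tau\chi_\ast$ and $\chi_\ast^2$ terms arise precisely from integrating $d\chi_\ast/d\tau$, and a Taylor expansion in $\tau$ (or in $u_\ast,v_\ast$) alone cannot generate them.

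A concrete symptom: you expand $Y(\tau)=r_0-r(\tau,0)$ as $\tfrac{a_{-0}}{2}\tau+\tfrac{\pi r_0}{2}\tau^2+\mathcal{O}(\tau^3)$ using $r_u(0,0)=-a_{-0}/2$ and $r_{uu}(0,0)=-\pi r_0$ from \eqref{eq:80}, \eqref{eq:81}. But those values belong to the canonical coordinates of Section \ref{section1} and the prior hard phase, whereas the proposition lives in the coordinates of section 4 of \cite{II}, in which $\mathscr{B}$ is the diagonal $u=v=\tau$ (not your $u_\ast=\tau+\chi_\ast/2$). There one has $dY/d\tau=\nu(\tau,0)=\nu_\ast(\tau)e^{-N_\ast(\tau)}=\tfrac{a_{-\ast}}{2}(1-\beta)e^{-N_\ast}$, whence $Y=\tfrac{a_{-0}}{2}\left(\tau-\tfrac{\chi_\ast}{2}\right)+\mathcal{O}(\tau^2)$, cf.\ \eqref{eq:362}. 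With your value of $Y^{(1)}$ the linear terms of $\delta=\tau-\chi_\ast/2-2Y/a_{-0}$ would \emph{not} cancel, contradicting $\delta=\delta^{(2)}+\mathcal{O}(\tau^3)$; the cancellation occurs only because $Y^{(1)}$ itself carries a $\chi_\ast$ term coming from $\beta$. The same mechanism is what puts $\chi_\ast$ into $M_\ast^{(2)}$ and $N_\ast^{(2)}$, and your proposal as written has no way of producing it.
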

\begin{proof}
From \eqref{eq:273},
\begin{align}
  \label{eq:311}
  M_\ast(\tau)&=\int_0^\tau g(\tau,v)dv,\\
  \label{eq:312}
  \frac{d M_\ast}{d\tau}(\tau)&=g(\tau,\tau)+\int_0^\tau\frac{\p g}{\p u}(\tau,v)dv.
\end{align}
We recall the second of (4.5a) from \cite{II}:
\begin{align}
  \label{eq:313}
  \kappa_\ast=\frac{1}{2a_{-\ast}}(1+\beta).
\end{align}
Using \eqref{eq:301}, \eqref{eq:302}, we obtain
\begin{align}
  \label{eq:314}
   g(\tau,\tau)=\phi_\ast(\tau)f(\tau,\tau)e^{N_\ast(\tau)}&=\tau\frac{\mu_\ast-4\pi r_\ast^2}{2a_{-\ast} r_\ast}(1+\beta)+\mathcal{O}(\tau^2)\notag\\
&=\tau\frac{\mu_0-4\pi r_0^2}{2a_{-0}r_0}(1+\beta)+\mathcal{O}(\tau^2).
\end{align}
For the second term in \eqref{eq:312} we write
\begin{align}
  \label{eq:315}
  \int_0^\tau\frac{\p g}{\p u}(\tau,v)dv=I_1+I_2,
\end{align}
where
\begin{align}
  \label{eq:316}
  I_1:=\int_0^\tau\left(\frac{\p \phi}{\p u}fe^N\right)(\tau,v)dv,\quad I_2:=\int_0^\tau \left[\phi \,e^N\left(\frac{\p f}{\p u}+f\frac{\partial N}{\partial u}\right)\right](\tau,v)dv.
\end{align}

We rewrite $I_1$ as
\begin{align}
  \label{eq:317}
  I_1=\int_0^\tau (F\kappa)(\tau,v)dv,
\end{align}
where
\begin{align}
  \label{eq:318}
  F:=\frac{\partial \phi}{\partial u}\frac{\mu-4\pi r^2}{r}e^N.
\end{align}
Now,
\begin{align}
  \label{eq:319}
  \left(\frac{\mu-4\pi r^2}{r}\right)(\tau,v)=\frac{\mu_0-4\pi r_0^2}{r_0}+\mathcal{O}(\tau).
\end{align}
Let us recall the first of (4.5b) of \cite{II}:
\begin{align}
  \label{eq:320}
  \zeta_\ast=\frac{1}{a_{-\ast}}\sqrt{\frac{1+\beta-2\rho_\ast\beta}{1-\beta}},
\end{align}
as well as the first of (4.5a) of \cite{II}:
\begin{align}
  \label{eq:321}
  \nu_\ast=\frac{a_{-\ast}}{2}(1-\beta).
\end{align}
Using now $\partial \phi/\partial u=\nu\zeta$ (see the first of \eqref{eq:64}) and recalling \eqref{eq:127}, it follows
\begin{align}
  \label{eq:322}
  \frac{\p \phi}{\p u}(\tau,\tau)=\frac{1}{2}(1-\beta(\tau))+\mathcal{O}(\tau^2).
\end{align}
Since the mixed derivative of $\phi$ can be expressed through the basic equations of section 6 of \cite{I} in terms of $r$, $m$, $\nu$, $\kappa$, $\zeta$ and $\eta$, it is bounded in $\mathscr{U}(\hat{\tau})$ and we get
\begin{align}
  \label{eq:323}
  \frac{\p \phi}{\p u}(\tau,v)=\frac{1}{2}(1-\beta(\tau))+\mathcal{O}(\tau).
\end{align}
Therefore we can write
\begin{align}
  \label{eq:324}
  I_1=(1-\beta)\frac{\mu_0-4\pi r_0^2}{2r_0}\int_0^\tau\kappa(\tau,v)dv+\mathcal{O}(\tau^2).
\end{align}
Now, integrating (6.6a) of \cite{I} with respect to $u$ from $u=v$ to $u=\tau$ we get
\begin{align}
  \label{eq:325}
  \kappa(\tau,v)&=\kappa_\ast(v)e^{-K(\tau,v)}\notag\\
&=\kappa_\ast(v)+\mathcal{O}(\tau).
\end{align}
By \eqref{eq:313},
\begin{align}
  \label{eq:326}
  \kappa_\ast(v)=\frac{1}{2a_{-0}}(1+\beta(v))+\mathcal{O}(v).
\end{align}
Hence
\begin{align}
  \label{eq:327}
  \int_0^\tau\kappa(\tau,v)dv=\frac{1}{2a_{-0}}\int_0^\tau(1+\beta(v))dv+\mathcal{O}(\tau^2).
\end{align}
Now (recall that $e^{\omega_0}=\frac{1}{2}$),
\begin{align}
  \label{eq:328}
  \beta(v)=e^{\omega_{\ast}(v)}\frac{d\chi_\ast}{dv}(v)=\frac{1}{2}\frac{d\chi_\ast}{dv}(v)+\mathcal{O}(v).
\end{align}
Therefore,
\begin{align}
  \label{eq:329}
  \int_0^\tau\kappa(\tau,v)dv=\frac{1}{2a_{-0}}\left(\tau+\frac{1}{2}\chi_\ast\right)+\mathcal{O}(\tau^2).
\end{align}

Using $\partial\phi/\partial v=\kappa\eta$ (see the second of \eqref{eq:64}), we have
\begin{align}
  \label{eq:330}
  \phi(\tau,v')=\phi(\tau,\tau)+\int_\tau^{v'}(\kappa\eta)(\tau,v'')dv''=\phi(\tau,\tau)+\mathcal{O}(\tau).
\end{align}
Using \eqref{eq:301}, \eqref{eq:302} it follows
\begin{align}
  \label{eq:331}
  I_2=\mathcal{O}(\tau^2).
\end{align}
Using now \eqref{eq:329} in \eqref{eq:324} we obtain for the second term of \eqref{eq:312}
\begin{align}
  \label{eq:332}
  \int_0^\tau \frac{\p g}{\p u}(\tau,v)dv=\frac{\mu_0-4\pi r_0^2}{4a_{-0}r_0}(1-\beta)\left(\tau+\frac{1}{2}\chi_\ast\right)+\mathcal{O}(\tau^2).
\end{align}
Combining \eqref{eq:314} and \eqref{eq:332} (and using \eqref{eq:328}) yields
\begin{align}
  \label{eq:333}
  \frac{dM_\ast}{d\tau}&=\frac{\mu_0-4\pi r_0^2}{a_{-0}r_0}\left[\frac{\tau}{2}\left(1+\frac{1}{2}\frac{d\chi_\ast}{d\tau}\right)+\frac{1}{4}\left(1-\frac{1}{2}\frac{d\chi_\ast}{d\tau}\right)\left(\tau+\frac{\chi_\ast}{2}\right)\right]+\mathcal{O}(\tau^2)\notag\\
&=\frac{\mu_0-4\pi r_0^2}{a_{-0}r_0}\frac{d}{d\tau}\left(\frac{3}{8}\tau^2+\frac{1}{8}\tau\chi_\ast-\frac{1}{32}\chi_\ast^2\right)+\mathcal{O}(\tau^2).
\end{align}
The result for $M_\ast$ follows.

From \eqref{eq:204},
\begin{align}
  \label{eq:334}
  N_\ast(\tau)&=\int_0^\tau f(\tau,v)dv,\\
  \label{eq:335}
  \frac{dN_\ast}{d\tau}(\tau)&=f(\tau,\tau)+\int_0^\tau \frac{\p f}{\p u}(\tau,v)dv,
\end{align}
where $f$ is given by the second of \eqref{eq:274}. By \eqref{eq:313}, the first term is
\begin{align}
  \label{eq:336}
  f(\tau,\tau)=\frac{\mu_\ast-4\pi r_\ast^2}{2a_{-\ast}r_\ast}(1+\beta).
\end{align}
From the basic equations of section 6 of \cite{I} we obtain
\begin{align}
  \label{eq:337}
  \frac{\partial f}{\partial u}=\frac{\p }{\p u}\left(\frac{\mu-4\pi r^2}{r}\kappa\right)=G\kappa,
\end{align}
where
\begin{align}
  \label{eq:338}
  G:=-\frac{\nu}{r^2}\left[4\pi\zeta^2r^2 (1-4\pi r^2)-2\mu\right].
\end{align}
From the same basic equations in section 6 of \cite{I}, $\partial G/\partial v$ is bounded in $\mathscr{U}(\hat{\tau})$, hence
\begin{align}
  \label{eq:339}
  G(\tau,v)=G(\tau,\tau)+\mathcal{O}(v-\tau).
\end{align}
Using \eqref{eq:329} we get
\begin{align}
  \label{eq:340}
  \int_0^\tau \frac{\p f}{\p u}(\tau,v')dv'=\frac{G_\ast}{2a_{-0}}\left(\tau+\frac{1}{2}\chi_\ast\right)+\mathcal{O}(\tau^2).
\end{align}
Hence, from \eqref{eq:335}, \eqref{eq:336},
\begin{align}
  \label{eq:341}
  \frac{dN_\ast}{d\tau}=\frac{\mu_\ast-4\pi r_\ast^2}{2a_{-\ast}r_\ast}(1+\beta)+\frac{G_\ast}{2a_{-0}}\left(\tau+\frac{1}{2}\chi_\ast\right)+\mathcal{O}(\tau^2).
\end{align}
From \eqref{eq:320}, \eqref{eq:321} we obtain
\begin{align}
  \label{eq:342}
  G_\ast=-\frac{2\pi(1-4\pi r_\ast^2)}{a_{-\ast}}(1+\beta-2\rho_\ast\beta)+\frac{\mu_\ast a_{-\ast}}{r_\ast^2}(1-\beta).
\end{align}
We introduce the two functions
\begin{align}
  \label{eq:343}
  \psi:=\frac{\mu-4\pi r^2}{ra_-},\qquad \tilde{\psi}:=-\frac{2\pi(1-4\pi r^2)}{a_-}+\frac{\mu a_-}{r^2}.
\end{align}
Taking into account \eqref{eq:127}, \eqref{eq:342} becomes
\begin{align}
  \label{eq:344}
  G_\ast=\tilde{\psi}_\ast (1-\beta)+\mathcal{O}(\tau^2).
\end{align}
Substituting into \eqref{eq:341} we obtain
\begin{align}
  \label{eq:345}
  \frac{dN_\ast}{d\tau}=\frac{\psi_\ast}{2}(1+\beta)+\frac{\tilde{\psi}_\ast}{2a_{-0}}(1-\beta)\left(\tau+\frac{1}{2}\chi_\ast\right)+\mathcal{O}(\tau^2)
\end{align}
To obtain an expression for $N_\ast$ up to cubic error terms we need $dN_\ast/d\tau$ up to quadratic error terms. Using \eqref{eq:328} and expanding $\psi_\ast$ and $e^{\omega_\ast}$ to the appropriate order we obtain
\begin{align}
  \label{eq:346}
  \frac{dN_\ast}{d\tau}&=\frac{1}{2}\left[\psi_0+\left(\frac{\partial\psi}{\partial\tau}\right)_0\tau+\left(\frac{\partial\psi}{\partial\chi}\right)_0\chi_\ast\right]\left[1+\left(e^{\omega_0}+\left(\frac{\partial e^\omega}{\partial \tau}\right)_0\tau+\left(\frac{\partial e^\omega}{\partial \chi}\right)_0\chi_\ast\right)\frac{d\chi_\ast}{d\tau}\right]\notag\\
&\qquad+\frac{\tilde{\psi}_0}{2a_{-0}}\left(1-\frac{1}{2}\frac{d\chi_\ast}{d\tau}\right)\left(\tau+\frac{1}{2}\chi_\ast\right)+\mathcal{O}(\tau^2).
\end{align}
We now determine the first order partial derivatives of $\psi$ and $\omega$. From the second of \eqref{eq:29} together with the first of \eqref{eq:61} we have (recall that $\mu=2m/r$)
\begin{align}
  \label{eq:347}
  e^{-\omega}\frac{\partial r}{\partial \chi}=\sqrt{1-\frac{2m}{r}+\dot{r}^2}.
\end{align}
Taking the partial derivative with respect to $\tau$ and $\chi$ and using equations \eqref{eq:101}, \eqref{eq:102}, \eqref{eq:103} as well as the mass equations given by \eqref{eq:26}, we obtain
\begin{align}
  \label{eq:349}
  \left(\frac{\p e^{\omega}}{\p \tau}\right)_0=\frac{a_{-0}-a_{+0}}{2r_0},\qquad \left(\frac{\p e^{\omega}}{\p \chi}\right)_0=\frac{a_{-0}-a_{+0}}{2r_0}+\frac{k}{2},
\end{align}
where $k$ is the constant defined in \eqref{eq:49}. The partial derivatives of $\psi$ involve the partial derivatives of $a_-$. Using the first of \eqref{eq:62} together with \eqref{eq:101}, \eqref{eq:102} as well as \eqref{eq:26} we find
\begin{align}
  \label{eq:350}
  \left(\frac{\p a_-}{\p\tau}\right)_0=\frac{\mu_0}{2r_0},\qquad\left(\frac{\p a_-}{\p\chi}\right)_0=\frac{a_{-0}}{2r_0}(a_{+0}-a_{-0})+\frac{\mu_0}{4r_0}-2\pi r_0.
\end{align}
Using theses expressions and again \eqref{eq:101}, \eqref{eq:102}, \eqref{eq:103} for the partial derivatives of $r$ and \eqref{eq:26} for the partial derivatives of $m$ we obtain
\begin{align}
  \label{eq:351}
  \left(\frac{\p \psi}{\p \tau}\right)_0&=-\frac{\mu_0+2\pi r_0^2}{a_{-0}r_0^2}(a_{+0}-a_{-0})-\frac{\mu_0(\mu_0-4\pi r_0^2)}{2a_{-0}^2r_0^2},\\
  \label{eq:352}
  \left(\frac{\p \psi}{\p \chi}\right)_0&=\frac{-\mu_0+2\pi r_0^2}{2a_{-0}r_0^2}(a_{+0}+a_{-0})-\frac{\mu_0-4\pi r_0^2}{2a_{-0}^2r_0^2}\left[(a_{+0}-a_{-0})a_{-0}+\frac{\mu_0}{2}-4\pi r_0^2\right].
\end{align}
Substituting \eqref{eq:349}, \eqref{eq:351} and \eqref{eq:352} into \eqref{eq:346} we arrive at
\begin{align}
  \label{eq:353}
  \frac{dN_\ast}{d\tau}=\frac{\mu_0-4\pi r_0^2}{2a_{-0}r_0}\left(1+\frac{1}{2}\frac{d\chi_\ast}{d\tau}\right)+A \tau+B\frac{d}{d\tau}(\tau\chi_\ast)+C \chi_\ast\frac{d\chi_\ast}{d\tau}+\mathcal{O}(\tau^2),
\end{align}
where $A$, $B$ and $C$ are given in the statement of the proposition. The result for $N_\ast$ follows by integration.

We now turn to $\delta$. Integrating (6.6b) of \cite{I} with respect to $v$ from $v=0$ up to $v=u$ we get
\begin{align}
  \label{eq:354}
  \nu_\ast(u)=\nu(u,0)e^{N_\ast(u)}.
\end{align}
Together with \eqref{eq:321} we can express the derivative of $Y$ (recall that $Y(u)=r_0-r(u,0)$) and we obtain
\begin{align}
  \label{eq:355}
  \frac{d\delta}{d\tau}&=1-\frac{1}{2}e^{-\omega_\ast}\beta-\frac{a_{-\ast}}{a_{-0}}(1-\beta)e^{-N_\ast}\notag\\
&=\left(1-\frac{1}{2}e^{-\omega_\ast}\right)\beta+\left(1-\frac{a_{-\ast}}{a_{-0}}e^{-N_\ast}\right)(1-\beta).
\end{align}
Recalling again that $e^{\omega_0}=\frac{1}{2}$ we rewrite
\begin{align}
  \label{eq:356}
  \frac{d\delta}{d\tau}=\left(e^{\omega_\ast}-e^{\omega_0}\right)\frac{d\chi_\ast}{d\tau}+\left(1-\frac{a_{-\ast}}{a_{-0}}e^{-N_\ast}\right)\left(1-e^{\omega_\ast}\frac{d\chi_\ast}{d\tau}\right).
\end{align}
From \eqref{eq:349} we have
\begin{align}
  \label{eq:357}
  e^{\omega_\ast}-e^{\omega_0}=\frac{a_{-0}-a_{+0}}{2r_0}\tau+\left(\frac{a_{-0}-a_{+0}}{2r_0}+\frac{k}{2}\right)\chi_\ast+\mathcal{O}(\tau^2),
\end{align}
while from \eqref{eq:350}, together with the result for $N_\ast$ we obtain
\begin{align}
  \label{eq:359}
  1-\frac{a_{-\ast}}{a_{-0}}e^{-N_\ast}=\frac{1}{2r_0a_{-0}}\left\{-4\pi r_0^2\tau-\left[(a_{+0}-a_{-0})a_{-0}-2\pi r_0^2\right]\chi_\ast\right\}+\mathcal{O}(\tau^2).
\end{align}
Substituting \eqref{eq:357}, \eqref{eq:359} in \eqref{eq:356} we arrive at
\begin{align}
  \label{eq:360}
  \frac{d\delta}{d\tau}&=\frac{1}{2}\frac{d\chi_\ast}{d\tau}\Bigg\{-\frac{1}{a_{-0}r_0}\left[(a_{+0}-a_{-0})a_{-0}-2\pi r_0^2\right]\tau-\frac{1}{2a_{-0}r_0}\left[(a_{+0}-a_{-0})a_{-0}+2\pi r_0^2-2ka_{-0}r_0\right]\chi_\ast\Bigg\}\notag\\
&\qquad-\frac{1}{2r_0a_{-0}}\left[4\pi r_0^2\tau+\left((a_{+0}-a_{-0})a_{-0}-2\pi r_0^2\right)\chi_\ast\right]+\mathcal{O}(\tau^2).
\end{align}
Integrating we get
\begin{align}
  \label{eq:361}
  \delta=\delta^{(2)}+\mathcal{O}(\tau^3).
\end{align}
The proof of the proposition is complete.
\end{proof}

We note
\begin{align}
  \label{eq:362}
Y=\frac{a_{-0}}{2}\left(\tau-\frac{\chi_\ast}{2}\right)-\frac{a_{-0}}{2}\delta^{(2)}+\mathcal{O}(\tau^3).
\end{align}

In the following we will use the notation $f^{[n]}=\sum_{i=0}^n f^{(i)}$. The following proposition expresses $E$ up to cubic error terms.
\begin{proposition} The function $E=1/\gamma^2-1$ has the following asymptotic form
  \begin{align}
    \label{eq:363}
    E=E^{[2]}+\mathcal{O}(\tau^3),
\end{align}
where
\begin{align}
      \label{eq:364}
    E^{[2]}&=(i+2i_0)\tau^2+(2i_0-i)\tau\chi_\ast+(2i_0-3i)\frac{\chi_\ast^2}{4},
  \end{align}
where $i$ and $i_0$ are the positive real numbers defined by \eqref{eq:117} and \eqref{eq:124} respectively.
  \label{proposition4}  
\end{proposition}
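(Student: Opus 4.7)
The proof strategy is to expand the closed-form expression \eqref{eq:285} for $E$ term by term using the asymptotic data supplied by Proposition \ref{proposition10}. Set $J(\tau):=[(H\circ Y)(\tau)+M_\ast(\tau)]e^{-N_\ast(\tau)}-\phi_\ast(\tau)$ and $\tilde G:=a_{-\ast}J/r_\ast$, so that $E=\tilde G^{2}-1$. Since $H(0)=-r_0/a_{-0}$ by \eqref{eq:286}, and $M_\ast(0)=N_\ast(0)=\phi_\ast(0)=0$, $a_{-\ast}(0)=a_{-0}$, $r_\ast(0)=r_0$, one obtains $\tilde G(0)=-1$. Writing $\tilde G=-1+\tilde G^{(1)}+\tilde G^{(2)}+\mathcal{O}(\tau^3)$, one computes $E=-2\tilde G^{(1)}+(\tilde G^{(1)})^2-2\tilde G^{(2)}+\mathcal{O}(\tau^3)$. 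Since $E^{[2]}$ as stated contains no linear part, the linear piece $\tilde G^{(1)}$ must vanish identically; verifying this cancellation serves as a consistency check and reduces the proposition to the identification $E^{[2]}=-2\tilde G^{(2)}$.

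The ingredients for this expansion are: the Taylor polynomial of $H$ to second order from \eqref{eq:286}, \eqref{eq:290}, and \eqref{eq:294}; the decomposition $Y=Y^{[1]}+Y^{(2)}+\mathcal{O}(\tau^3)$ with $Y^{[1]}=\tfrac{a_{-0}}{2}(\tau-\chi_\ast/2)$ and $Y^{(2)}=-\tfrac{a_{-0}}{2}\delta^{(2)}$ from \eqref{eq:362} and Proposition \ref{proposition10}; the geometric series $e^{-N_\ast}=1-N_\ast^{(1)}-N_\ast^{(2)}+\tfrac{1}{2}(N_\ast^{(1)})^2+\mathcal{O}(\tau^3)$ with $N_\ast^{(1)}$, $N_\ast^{(2)}$ from Proposition \ref{proposition10}; the expressions $M_\ast=M_\ast^{(2)}+\mathcal{O}(\tau^3)$ and $\phi_\ast=\tau+\mathcal{O}(\tau^5)$ from Proposition \ref{proposition10} and \eqref{eq:302}; and the bivariate Taylor expansions of $r(\tau,\chi)$ and $a_-(\tau,\chi)$ around the origin, whose partial derivatives up to second order are furnished by \eqref{eq:101}, \eqref{eq:102}, \eqref{eq:103} and by \eqref{eq:350}, evaluated at $\chi=\chi_\ast(\tau)=\mathcal{O}(\tau)$. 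Substituting and grouping by total degree in $(\tau,\chi_\ast)$ yields expressions for $\tilde G^{(1)}$ (which must be checked to vanish) and for $\tilde G^{(2)}$.

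The principal obstacle is the volume of algebraic bookkeeping rather than any conceptual point. The coefficient $i$ in $E^{[2]}$ enters exclusively through the $i$-term of $H''(0)$ in \eqref{eq:294}, so its contribution to $\tilde G^{(2)}$ arises from the term $\tfrac{1}{2}H''(0)(Y^{[1]})^2$ in $J^{(2)}$, together with the $k$-dependent piece of $\delta^{(2)}$ once the identity $jk=i$ from \eqref{eq:117} is applied. The coefficient $i_0$ assembles from several sources, namely the quadratic part of the $r_\ast$ expansion via \eqref{eq:102} and \eqref{eq:103}, the cross-term $H'(0)Y^{[1]}\cdot N_\ast^{(1)}$, the contribution $\tfrac{1}{2}H(0)(N_\ast^{(1)})^2$, and the $Y^{(2)}$ correction to $H\circ Y$; each piece contributes multiples of $(a_{+0}-a_{-0})^2/r_0^2$ and $\pi$ which telescope into $i_0=\tfrac{1}{r_0^2}\bigl[\tfrac{3}{8}(a_{+0}-a_{-0})^2+\pi r_0^2\bigr]$ via \eqref{eq:124}. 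Beyond this extended bookkeeping no further analytic ingredient is required.
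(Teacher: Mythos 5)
Your overall framework is sound and is essentially the paper's: the paper sets $G:=r_\ast^2E=a_{-\ast}^2(B-\phi_\ast)^2-r_\ast^2$ with $B=e^{-N_\ast}((H\circ Y)+M_\ast)$, shows $G^{(0)}=G^{(1)}=0$, and reads off $E^{[2]}=G^{(2)}/r_0^2$; your $\tilde G$ is just the square root of $G/r_\ast^2$ and the vanishing of $\tilde G^{(1)}$ is equivalent to $G^{(1)}=0$. However, your inventory of ingredients is incomplete in a way that would produce a wrong answer. To get the quadratic term you need $a_{-\ast}^{(2)}$, the second-degree part of $a_{-\ast}(\tau)=a_-(\tau,\chi_\ast(\tau))$, and this requires the \emph{second} partial derivatives of $a_-$ at $N^-$, not just the first partials from \eqref{eq:350} that you cite. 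The crucial one is $(\partial^2a_-/\partial\chi^2)_0$, which by \eqref{eq:383}--\eqref{eq:385} involves $(\partial^2\dot r/\partial\chi^2)_0=r_{\tau\chi\chi}(0,0)$ from \eqref{eq:121} and therefore carries a term $-ia_{-0}$. This is a second, independent source of the coefficient $i$: it contributes $-ir_0^2\chi_\ast^2$ to $G^{(2)}$, on top of the $ir_0^2(\tau-\chi_\ast/2)^2$ coming from the $H''(0)$ term you identified. Your claim that $i$ enters ``exclusively through the $i$-term of $H''(0)$'' is therefore false; dropping the $a_{-\ast}^{(2)}$ contribution changes the $\chi_\ast^2$ coefficient of $E^{[2]}$ from $(2i_0-3i)/4$ to $(2i_0+i)/4$.

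A secondary confusion: you invoke ``the $k$-dependent piece of $\delta^{(2)}$ once the identity $jk=i$ from \eqref{eq:117} is applied'' as a source of $i$. No factor of $j$ appears anywhere in these expansions, so the identity $i=jk$ cannot convert the $k$-terms of $\delta^{(2)}$ (or of $r_\ast^{(2)}$, $N_\ast^{(2)}$, $a_{-\ast}^{(2)}$) into $i$-terms; rather, all $k$-dependence cancels identically in $G^{(2)}$ (compare \eqref{eq:388}, which contains no $k$). The coefficient $i$ enters only through $\xi_{-0}$ in $H''(0)$ (via \eqref{eq:143} and \eqref{eq:294}) and through $r_{\tau\chi\chi}(0,0)$ in $(\partial^2a_-/\partial\chi^2)_0$ (via \eqref{eq:121} and \eqref{eq:385}). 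Once these two sources are both included and the $k$-cancellation is verified, the rest of your plan goes through.
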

\begin{proof}
Let
\begin{align}
  \label{eq:365}
  B:=e^{-N_\ast}((H\circ Y)+M_\ast).
\end{align}
Using the fact that $N_\ast^{(0)}=M_\ast^{(0)}=M_\ast^{(1)}=0$ (cf. proposition \ref{proposition10}) it follows
\begin{align}
  \label{eq:366}
  B^{(0)}&=(H\circ Y)^{(0)},\\
  \label{eq:367}
  B^{(1)}&=-N_\ast^{(1)}(H\circ Y)^{(0)}+(H\circ Y)^{(1)},\\
  \label{eq:368}
  B^{(2)}&=\left(\frac{1}{2}\left(N_\ast^{(1)}\right)^2-N_\ast^{(2)}\right)(H\circ Y)^{(0)}-N_\ast^{(1)}(H\circ Y)^{(1)}+(H\circ Y)^{(2)}+M_\ast^{(2)}.
\end{align}
Let
\begin{align}
  \label{eq:369}
  G:=r_\ast^2E.
\end{align}
Then, from \eqref{eq:285},
\begin{align}
  \label{eq:370}
  G=a_{-\ast}^2(B-\phi_\ast)^2-r_\ast^2.
\end{align}
In view of \eqref{eq:286}, \eqref{eq:301}, \eqref{eq:302} we obtain
\begin{align}
  \label{eq:371}
  G^{(0)}=0.
\end{align}
Using \eqref{eq:366} we obtain
\begin{align}
  \label{eq:372}
  G^{(1)}=2\left(a_{-\ast}^{(0)}\right)^2\left(B^{(1)}(H\circ Y)^{(0)}-B^{(0)}\tau\right)+2a_{-\ast}^{(1)}a_{-\ast}^{(0)}\left((H\circ Y)^{(0)}\right)^2-2r_\ast^{(0)}r_\ast^{(1)},
\end{align}
where we made use of $\phi_\ast^{(1)}=\tau$. For $B^{(1)}$ we use \eqref{eq:367} together with \eqref{eq:286}, \eqref{eq:308} and
\begin{align}
  \label{eq:373}
  (H\circ Y)^{(1)}=\frac{dH}{ds}(0)\,Y^{(1)}
\end{align}
together with \eqref{eq:290} and \eqref{eq:362}. We obtain
\begin{align}
  \label{eq:374}
  B^{(1)}&=-N_\ast^{(1)}(H\circ Y)^{(0)}+\frac{dH}{ds}(0)\,Y^{(1)}\notag\\
&=\left(3-\frac{2a_{+0}}{a_{-0}}+\frac{1}{a_{-0}^2}\right)\frac{\tau}{2}+\left(-3+\frac{1}{a_{-0}^2}-\frac{8\pi r_0^2}{a_{-0}^2}\right)\frac{\chi_\ast}{4}.
\end{align}
Using \eqref{eq:101} we have
\begin{align}
  \label{eq:375}
  r_\ast^{(1)}=\frac{1}{2}(a_{+0}-a_{-0})\tau+\frac{1}{4}(a_{+0}+a_{-0})\chi_\ast,
\end{align}
while from \eqref{eq:350},
\begin{align}
  \label{eq:376}
  a_{-\ast}^{(1)}=\frac{1}{2r_0}(1-a_{+0}a_{-0})\tau+\frac{1}{4r_0}(a_{+0}a_{-0}-2a_{-0}^2+1-8\pi r_0^2)\chi_\ast.
\end{align}
Substituting into \eqref{eq:372} we find
\begin{align}
  \label{eq:377}
  G^{(1)}=0.
\end{align}

From \eqref{eq:370} it follows
\begin{align}
  \label{eq:378}
  G^{(2)}=&\left(B^{(0)}\right)^2\left(2a_{-\ast}^{(2)}a_{-\ast}^{(0)}+\left(a_{-\ast}^{(1)}\right)^2\right)+4a_{-\ast}^{(0)}a_{-\ast}^{(1)}\left(B^{(0)}B^{(1)}-B^{(0)}\tau\right)\notag\\
&\qquad+\left(a_{-\ast}^{(0)}\right)^2\left(\left(B^{(1)}-\tau\right)^2+2B^{(0)}B^{(2)}\right)-2r_\ast^{(0)}r_\ast^{(2)}-\left(r_\ast^{(1)}\right)^2.
\end{align}
We need expressions for $r_\ast^{(2)}$, $a_{-\ast}^{(2)}$, $B^{(2)}$. Using the partial derivatives of second order of $r_\ast$ at $N^-$ as given by \eqref{eq:102}, \eqref{eq:103} yields
\begin{align}
  \label{eq:379}
  r_\ast^{(2)}&=\frac{1}{2}\left(\frac{\p^2 r}{\p\tau^2}\right)_0\tau^2+\left(\frac{\p^2 r}{\p\tau\p\chi}\right)_0\tau\chi_\ast+\frac{1}{2}\left(\frac{\p^2 r}{\p\chi^2}\right)_0\chi_\ast^2\notag\\
&=-\frac{1-a_{+0}a_{-0}}{4r_0}\tau^2-\frac{a_{+0}^2-a_{-0}^2}{4r_0}\tau\chi_\ast\notag\\
&\qquad+\left(\frac{1}{16r_0}\left(-3a_{+0}^2+a_{+0}a_{-0}+a_{-0}^2+1-8\pi r_0^2\right)+\frac{k}{8}(a_{+0}+a_{-0})\right)\chi_\ast^2.
\end{align}
$a_{-\ast}^{(2)}$ can be written as
\begin{align}
  \label{eq:380}
  a_{-\ast}^{(2)}=\frac{1}{2}\left(\frac{\p^2 a_-}{\p\tau^2}\right)_0\tau^2+\left(\frac{\p^2 a_-}{\p\tau\p\chi}\right)_0\tau\chi_\ast+\frac{1}{2}\left(\frac{\p^2 a_{-}}{\p\chi^2}\right)_0\chi_\ast^2.
\end{align}
Recalling the fact that $2m/r+\dot{r}^2$ is constant along flow lines together with the fact that $m$ is constant along the flow lines (the first of \eqref{eq:26}), equation \eqref{eq:62} yields
\begin{align}
  \label{eq:381}
  \left(\frac{\p^2a_-}{\p\tau^2}\right)_{0}=-\frac{2m_0\dot{r}_0}{r_0^3}=\frac{a_{+0}a_{-0}-1}{2r_0^2}(a_{+0}-a_{-0}).
\end{align}
Equation \eqref{eq:62} together with the second of \eqref{eq:26} yield
\begin{align}
  \label{eq:382}
  \left(\frac{\p^2a_-}{\p\tau\p \chi}\right)_{0}=\frac{a_{+0}+a_{-0}}{4}\left(4\pi-\frac{2m_0}{r_0^3}\right)=\frac{1}{4r_0^2}(4\pi r_0^2-1+a_{+0}a_{-0})(a_{+0}+a_{-0}).
\end{align}
Using again equation \eqref{eq:62} we obtain
\begin{align}
  \label{eq:383}
  \frac{\p a_-}{\p\chi}=\frac{2}{a_++a_-}\left[-a_-\frac{\p\dot{r}}{\p\chi}+\left(\frac{m}{r^2}-4\pi \rho r\right)\frac{\p r}{\p\chi}\right].
\end{align}
Therefore,
\begin{align}
  \label{eq:384}
  \left(\frac{\p^2 a_-}{\p\chi^2}\right)_{0}=&
-\frac{2}{a_{+0}+a_{-0}}\left(\frac{\p a_-}{\p \chi}\right)_{0}\left[\left(\frac{\p a_+}{\p\chi}\right)_{0}+\left(\frac{\p a_-}{\p \chi}\right)_{0}\right]\nonumber\\
&+\frac{2}{a_{+0}+a_{-0}}\Bigg\{-\left(\frac{\p a_-}{\p\chi}\right)_{0}\left(\frac{\p\dot{r}}{\p\chi}\right)_{0}-a_{-0}\left(\frac{\p^2\dot{r}}{\p\chi^2}\right)_{0}\nonumber\\
&\hspace{23mm}+\left[\frac{1}{r_0^2}\left(\frac{\p m}{\p\chi}\right)_{0}-\frac{2m_0}{r_0^3}\left(\frac{\p r}{\p\chi}\right)_{0}-4\pi\left(\frac{\p r}{\p\chi}\right)_{0}\right]\left(\frac{\p r}{\p\chi}\right)_{0}\nonumber\\
&\hspace{70mm}+\left(\frac{m_0}{r_0^2}-4\pi r_0\right)\left(\frac{\p^2 r}{\p\chi^2}\right)_{0}\Bigg\}.
\end{align}
Since $(\p^2\dot{r}/\p\chi^2)_{0}$ is given by \eqref{eq:121}, all terms are determined and we have
\begin{align}
  \label{eq:385}
\left(\frac{\p^2a_-}{\p\chi^2}\right)_{0}=\frac{1}{4 r_0^2}\bigg(kr_0(1-8\pi r_0^2)+a_{+0}(4\pi r_0^2-1)+2a_{-0}^3-a_{-0}^2(2kr_0+a_{+0})+r_0a_{-0}(ka_{+0}-4ir_0)\bigg).
\end{align}

 Now in the expression for $B^{(2)}$ as given by \eqref{eq:368} there appear $N_\ast^{(1)}$, $N_\ast^{(2)}$ $M_\ast^{(2)}$. Those are given by proposition \ref{proposition10}. Since
\begin{align}
  \label{eq:386}
  (H\circ Y)^{(2)}=\frac{dH}{ds}(0)\,Y^{(2)}+\frac{1}{2}\frac{d^2H}{dY^2}(0)\left(Y^{(1)}\right)^2,
\end{align}
using \eqref{eq:290}, \eqref{eq:294}, and
\begin{align}
  \label{eq:387}
  Y^{(2)}=-\frac{a_{-0}}{2}\delta^{(2)}
\end{align}
(cf.~\eqref{eq:362}), and since $\delta^{(2)}$ is given by proposition \ref{proposition10}, $(H\circ Y)^{(2)}$ can be expressed. Putting things together we arrive at the following expression for $G^{(2)}$
\begin{align}
  \label{eq:388}
  G^{(2)}=&\left[\frac{3}{4}(a_{-0}-a_{+0})^2+(i+2\pi)r_0^2\right]\tau^2+\left[\frac{3}{4}(a_{-0}-a_{+0})^2-(i+2\pi)r_0^2\right]\tau\chi_\ast\\\notag
&\qquad+\frac{1}{4}\left[\frac{3}{4}(a_{-0}-a_{+0})^2+(2\pi-3i)r_0^2\right]\chi_\ast^2.
\end{align}

We expand $E=G/r_\ast^2$:
\begin{align}
  \label{eq:389}
  E=\frac{1}{r_0^2}\left(1-\frac{2}{r_0}\left(\frac{\p r}{\p \tau}\right)_0\tau-\frac{2}{r_0}\left(\frac{\p r}{\p \chi}\right)_0\chi_\ast\right)\left(G^{(1)}+G^{(2)}\right)+\mathcal{O}(\tau^3)=E^{(1)}+E^{(2)}+\mathcal{O}(\tau^3),
\end{align}
where
\begin{align}
  \label{eq:390}
  E^{(1)}=\frac{1}{r_0^2}G^{(1)},\qquad E^{(2)}=\frac{1}{r_0^2}G^{(2)}-\frac{2}{r_0^3}\left(\left(\frac{\p r}{\p \tau}\right)_0\tau+\left(\frac{\p r}{\p \chi}\right)_0\chi_\ast\right)G^{(1)}.
\end{align}
However,  since $G^{(1)}=0$, we have
\begin{align}
  \label{eq:391}
  E^{[2]}=\frac{G^{(2)}}{r_0^2}.
\end{align}
Substituting \eqref{eq:388} into \eqref{eq:391} yields the expression for $E^{[2]}$. 
\end{proof}

Let us introduce the soft phase coordinates $t$, $x$ by
\begin{align}
  \label{eq:392}
  t:=\tau+\frac{1}{2}\chi,\qquad x:=\tau-\frac{1}{2}\chi.
\end{align}
We note that
\begin{align}
  \label{eq:393}
  \liminf_{\tau\rightarrow 0}\frac{t}{\tau}\geq 1.
\end{align}
Equation \eqref{eq:127} yields
\begin{align}
  \label{eq:394}
  X^{[2]}=(2i-2i_0)t^2-2itx_\ast.
\end{align}
Here, $\mathscr{B}$ is represented by $t\mapsto (t,x_\ast(t))$. By proposition \ref{proposition4},
\begin{align}
  \label{eq:395}
  E^{[2]}=(2i_0-i)t^2+2itx_\ast.
\end{align}
We note that
\begin{align}
  \label{eq:396}
  \frac{E^{[2]}+X^{[2]}}{t^2}=i>0.
\end{align}
Let
\begin{align}
  \label{eq:397}
  F(t,x):=\frac{e^{-\omega_0}-e^{-\omega(t,x)}}{e^{-\omega_0}+e^{-\omega(t,x)}}.
\end{align}
We note that $F$ is a smooth function of $t$, $x$ whose Taylor expansion begins wih terms of degree one. Since $\beta=e^{\omega_\ast}\frac{d\chi_\ast}{d\tau}$, in view of \eqref{eq:268}, \eqref{eq:269} it follows
\begin{align}
  \label{eq:398}
  \frac{dx_\ast}{dt}=z^2\frac{1+F_\ast z^{-2}}{1+F_\ast z^2}=\frac{X^{[2]}}{E^{[2]}+X^{[2]}}+\mathcal{O}(t)=\frac{(2i-2i_0)t-2ix_\ast}{it}+\mathcal{O}(t).
\end{align}
Hence
\begin{align}
  \label{eq:399}
  x_\ast (t)=\frac{2}{3}\left(1-\frac{i_0}{i}\right)t+\mathcal{O}(t^2).
\end{align}
It follows
\begin{align}
  \label{eq:400}
  \chi_\ast (\tau)=\frac{2i+4i_0}{5i-2i_0}\tau+\mathcal{O}(\tau^2).
\end{align}
We have thus shown
\begin{align}
  \label{eq:401}
  \lim_{\tau\rightarrow 0}\beta(\tau)=\beta_0,
\end{align}
where $\beta_0$ is given by \eqref{eq:159}. Hence \eqref{eq:266} and \eqref{eq:267} hold.

Let the shock curve $\mathscr{B}$ be represented by
\begin{align}
  \label{eq:402}
  K:[0,\hat{\tau}]\rightarrow \mathscr{V},\qquad \tau\mapsto (\tau,\chi_\ast(\tau)),
\end{align}
where the domain $\mathscr{V}$ was defined in \eqref{eq:156}. In the following we will denote by $Q_n$ a smooth function with arguments in $\mathscr{V}$ whose Taylor expansion begins with $n$'th degree terms. We will denote by $P_n$ a polynomial in $\tau$, $\chi$ of $n$'th degree and by $A_n$ a homogeneous such polynomial. Furthermore, we define
\begin{align}
  \label{eq:403}
  \bar{Q}_n:=Q_n\circ K,\qquad \bar{P}_n:=P_n\circ K,\qquad \bar{A}_n:=A_n\circ K.
\end{align}
We will use the following notation for the remainders
\begin{align}
  \label{eq:404}
  \{f\}_n:=f-f^{[n-1]}.
\end{align}

The following proposition will be used in the uniqueness proof.
\begin{proposition}
The remainder $\{H\circ Y\}_3$ can be expressed as
\begin{align}
    \label{eq:405}
    \{H\circ Y\}_3=A_3+A_4+P_2\{\delta\}_3+P_0\left(\{\delta\}_3\right)^2+Y^3(G\circ Y),
\end{align}
where $Y$ is considered as a function of $\tau$ and $G$ is given by
\begin{align}
  \label{eq:406}
  H\circ Y=H_0+\frac{dH}{ds}(0)\,Y+\frac{1}{2}\frac{d^2H}{dY^2}(0)\,Y^2+Y^3(G\circ Y).
\end{align}
(Recall that $H$ is a smooth function, hence so is $G$). The remainders $\{B\}_3$, $\{E\}_3$ and $d\{\delta\}_3/d\tau$ can be expressed as
\begin{align}
    \label{eq:407}
    \{B\}_3&=e^{-N_\ast^{[2]}-\{N_\ast\}_3}\left(\{H\circ Y\}_3+\{M_\ast\}_3\right)+\bar{Q}_0\left(e^{-\{N_\ast\}_3}-1\right)+\bar{Q}_3,\\
    \label{eq:408}
    \{E\}_3&=\bar{Q}_3+\bar{Q}_0(\{B\}_3-\Psi)+\bar{Q}_0(\{B\}_3-\Psi)^2,\\
    \label{eq:409}
\frac{d}{d\tau}\{\delta\}_3&=\bar{Q}_2+\bar{Q}_1\left(e^{-N_\ast^{[2]}-\{N_\ast\}_3}-1\right)-\left(e^{-N_\ast^{[2]}-\{N_\ast\}_3}-1+N_\ast^{(1)}\right)\notag\\
&\qquad+\bigg[\bar{Q}_2+\bar{Q}_1\left(e^{-N_\ast^{[2]}-\{N_\ast\}_3}-1\right)+\frac{1}{2}\left(e^{-N_\ast^{[2]}-\{N_\ast\}_3}-1+N_\ast^{(1)}\right)\notag\\
&\hspace{40mm}+\bar{Q}_1\left(1-e^{-N_\ast^{[2]}-\{N_\ast\}_3}\right)+\bar{Q}_2e^{-N_\ast^{[2]}-\{N_\ast\}_3} \bigg]\frac{d\chi_\ast}{d\tau}.
\end{align}
The derivative of the function $\Psi$ can be written as
\begin{align}
    \label{eq:410}
    \frac{d\Psi}{d\tau}&=\frac{\tilde{Y}}{1+z^2}\left[h\left(\frac{\tilde{Y}}{z^2}\right)-h(-\tilde{Y})\right],
\end{align}
where
\begin{align}
  \label{eq:411}
  \tilde{Y}:=X(1-z^2),\qquad h(u):=\frac{\sqrt{1+u}-1}{u}.
\end{align}
\label{proposition6}
\end{proposition}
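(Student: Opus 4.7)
The proof verifies the five identities by direct algebraic manipulation from the definitions and the expansions obtained in propositions \ref{proposition10} and \ref{proposition4}, using $\bar{Q}_m \cdot \bar{Q}_n = \bar{Q}_{m+n}$ and the fact that the $[n]$-jet of a product of smooth functions depends only on the $[n]$-jets of its factors. For $\{H\circ Y\}_3$, combine \eqref{eq:406} with $Y = Y^{(1)} + Y^{(2)} + \{Y\}_3$, where $\{Y\}_3 = -\tfrac{a_{-0}}{2}\{\delta\}_3$ by \eqref{eq:303} and \eqref{eq:362}. Subtracting $(H\circ Y)^{[2]}$ gives $\{H\circ Y\}_3 = \tfrac{dH}{ds}(0)\{Y\}_3 + \tfrac{1}{2}\tfrac{d^2H}{dY^2}(0)[Y^2-(Y^{(1)})^2] + Y^3(G\circ Y)$, and the factorization $Y^2 - (Y^{(1)})^2 = (Y^{(2)}+\{Y\}_3)(2Y^{(1)}+Y^{(2)}+\{Y\}_3)$ yields the four pieces $A_3$, $A_4$, $P_2\{\delta\}_3$, $P_0(\{\delta\}_3)^2$.

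For $\{B\}_3$, from $B = e^{-N_\ast}[(H\circ Y)+M_\ast]$ (equation \eqref{eq:365}), split each factor into its $[2]$-part plus its $\{\cdot\}_3$-remainder, and factor $e^{-N_\ast} = e^{-N_\ast^{[2]}}(1+(e^{-\{N_\ast\}_3}-1))$. The remainder-remainder cross term is precisely $e^{-N_\ast}(\{H\circ Y\}_3 + \{M_\ast\}_3)$; the polynomial-polynomial contribution splits as $e^{-N_\ast^{[2]}}S + \bar{Q}_0(e^{-\{N_\ast\}_3}-1)$ with $S := (H\circ Y)^{[2]} + M_\ast^{[2]}$, and the $[2]$-jet of $e^{-N_\ast^{[2]}}S$ equals that of $B$ itself, so $e^{-N_\ast^{[2]}}S - B^{[2]} = \bar{Q}_3$, producing \eqref{eq:407}. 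For $\{E\}_3$, use \eqref{eq:370} and $\phi_\ast = \tau + \Psi$; writing $B - \phi_\ast = (B^{[2]}-\tau) + (\{B\}_3 - \Psi)$ and expanding the square, the cross and squared terms give $\bar{Q}_0(\{B\}_3 - \Psi) + \bar{Q}_0(\{B\}_3-\Psi)^2$, while $a_{-\ast}^2(B^{[2]}-\tau)^2/r_\ast^2 - 1$ is a smooth function of $(\tau,\chi)$ whose $[2]$-jet equals $E^{[2]}$, hence differs from $E^{[2]}$ by $\bar{Q}_3$.

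For $d\{\delta\}_3/d\tau$, rewrite \eqref{eq:356} as $d\delta/d\tau = q + [(e^{\omega_\ast} - \tfrac{1}{2}) - q\,e^{\omega_\ast}]\,d\chi_\ast/d\tau$ with $q := 1 - \tfrac{a_{-\ast}}{a_{-0}}e^{-N_\ast}$, $\alpha := 1 - a_{-\ast}/a_{-0} \in \bar{Q}_1$, and $\psi := e^{-N_\ast} - 1 + N_\ast^{(1)}$. The identity $q = \alpha + (1-\alpha)(1-e^{-N_\ast})$ together with $1 - e^{-N_\ast} = N_\ast^{(1)} - \psi$ yields $q = (\alpha^{(1)} + N_\ast^{(1)}) + [\bar{Q}_2 + \bar{Q}_1(e^{-N_\ast}-1) - \psi]$; in the coefficient of $d\chi_\ast/d\tau$, the decomposition $-q\,e^{\omega_\ast} = -q\,(\tfrac{1}{2} + (e^{\omega_\ast}-\tfrac{1}{2}))$ combined with $\tfrac{1}{2}(e^{-N_\ast}-1) = -\tfrac{1}{2}N_\ast^{(1)} + \tfrac{1}{2}\psi$ produces the explicit $+\tfrac{1}{2}\psi$ contribution, and the remaining pieces group into $\bar{Q}_2$, $\bar{Q}_1(e^{-N_\ast}-1)$, $\bar{Q}_1(1-e^{-N_\ast})$, and $\bar{Q}_2 e^{-N_\ast}$ exactly as in \eqref{eq:409}. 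Subtracting $d\delta^{(2)}/d\tau$ cancels the degree-$\le 1$ polynomial pieces $\alpha^{(1)} + N_\ast^{(1)}$ and their analogs in the bracket. Finally, for $d\Psi/d\tau$, equations \eqref{eq:295}, \eqref{eq:296} with $z^2 = (1-\beta)/(1+\beta)$ give $1+\beta = 2/(1+z^2)$ and $1-\beta = 2z^2/(1+z^2)$, so $\beta x/(1+\beta) = X(1-z^2) = \tilde{Y}$ and $\beta x/(1-\beta) = \tilde{Y}/z^2$, whence $d\phi_\ast/d\tau = \tfrac{1}{1+z^2}\sqrt{1-\tilde{Y}} + \tfrac{z^2}{1+z^2}\sqrt{1+\tilde{Y}/z^2}$; subtracting $1 = \tfrac{1}{1+z^2} + \tfrac{z^2}{1+z^2}$ and applying $\sqrt{1+u}-1 = uh(u)$ yields \eqref{eq:410}. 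The main obstacle is \eqref{eq:409}: arranging the specific grouping with the $-\psi$ and $+\tfrac{1}{2}\psi$ terms kept explicit while the other contributions are absorbed into $\bar{Q}_m$-classes requires careful bookkeeping of every cross-product among $\alpha$, $N_\ast^{(1)}$, $\psi$, $e^{\omega_\ast}$ and $d\chi_\ast/d\tau$, but introduces no new idea.
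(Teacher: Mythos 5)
Your proposal is correct and follows essentially the same route as the paper: substitute the decomposition of $Y$ into \eqref{eq:406}, split $e^{-N_\ast}$, $H\circ Y$, $M_\ast$, $B-\phi_\ast$ into their $[2]$-jets plus remainders so that the leading polynomial parts cancel against $B^{[2]}$, $E^{[2]}$, $d\delta^{[2]}/d\tau$ leaving $\bar{Q}_n$-classes, and derive \eqref{eq:410} from the shock relation $d\phi_\ast/d\tau=\tfrac12(f_++f_-)$ rewritten in terms of $z^2$ and $\tilde{Y}$. The only differences are cosmetic: you invoke the jet-of-a-product argument where the paper writes out the cancellations \eqref{eq:414}--\eqref{eq:415} explicitly, and you reorganize the $\{\delta\}_3$ computation through the auxiliary quantity $q=1-\tfrac{a_{-\ast}}{a_{-0}}e^{-N_\ast}$ instead of the paper's $I_1$, $I_2$ of \eqref{eq:420}--\eqref{eq:421}.
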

\begin{proof}
Substituting (cf. \eqref{eq:303})
\begin{align}
  \label{eq:412}
  Y=\frac{a_{-0}}{2}\left(\tau-\frac{\chi_\ast}{2}-\delta^{[2]}-\{\delta\}_3\right)
\end{align}
into \eqref{eq:406} the result for $\{H\circ Y\}_3$ follows.

Using \eqref{eq:365} together with
\begin{align}
  \label{eq:611}
  N_\ast=N_\ast^{[2]}+\{N_\ast\}_2,\qquad(H\circ Y)=(H\circ Y)^{[2]}+\{H\circ Y\}_3,\qquad M_\ast=M_\ast^{[2]}+\{M_\ast\}_3, 
\end{align}
we can write
\begin{align}
  \label{eq:413}
  \{B\}_3&=B-B^{[2]}\notag\\
&=e^{-N_\ast^{[2]}-\{N_\ast\}_3}\left(\{H\circ Y\}_3+\{M_\ast\}_3\right)+e^{-N_\ast^{[2]}}\left(e^{-\{N_\ast\}_3}-1\right)\left((H\circ Y)^{[2]}+M_\ast^{[2]}\right)\notag\\
&\qquad+e^{-N_\ast^{[2]}}\left((H\circ Y)^{[2]}+M_\ast^{[2]}\right)-B^{[2]}.
\end{align}
Using \eqref{eq:366}, \eqref{eq:367}, \eqref{eq:368} we obtain
\begin{align}
  \label{eq:414}
  B^{[2]}&=B^{(0)}+B^{(1)}+B^{(2)}\notag\\
&=(H\circ Y)^{(0)}\left(1-N_\ast^{(1)}+\tfrac{1}{2}\left(N_\ast^{(1)}\right)^2-N_\ast^{(2)}\right)+(H\circ Y)^{(1)}\left(1-N_\ast^{(1)}\right)+(H\circ Y)^{(2)}+M_\ast^{(2)}.
\end{align}
Therefore,
\begin{align}
  \label{eq:415}
  e^{-N_\ast^{[2]}}\left((H\circ Y)^{[2]}+M_\ast^{[2]}\right)-B^{[2]}&=\left(e^{-N_\ast^{[2]}}-1\right)(H\circ Y)^{(2)}+\left(e^{-N_\ast^{[2]}}-1\right)M_\ast^{(2)}\notag\\
&\hspace{0.5cm}+\left(e^{-N_\ast^{[2]}}-1+N_\ast^{(1)}\right)(H\circ Y)^{(1)}\notag\\
&\hspace{0.5cm}+\left(e^{-N_\ast^{[2]}}-1+N_\ast^{(1)}-\tfrac{1}{2}\left(N_\ast^{(1)}\right)^2+N_\ast^{(2)}\right)(H\circ Y)^{(0)}
\end{align}
and the result for $\{B\}_3$ follows.

Using \eqref{eq:285}, \eqref{eq:301}, \eqref{eq:365} we can write
\begin{align}
  \label{eq:416}
  E=\left(\frac{a_{-\ast}}{r_\ast}\right)^2\left(B^{[2]}-\tau\right)^2-1+2\left(\frac{a_{-\ast}}{r_\ast}\right)^2\left(B^{[2]}-\tau\right)\left(\{B\}_3-\Psi\right)+\left(\frac{a_{-\ast}}{r_\ast}\right)^2\left(\{B\}_3-\Psi\right)^2
\end{align}
and the result for $\{E\}_3$ follows.

We now consider the remainder $\{\delta\}_3$. We have (see \eqref{eq:356})
\begin{align}
  \label{eq:417}
  \frac{d\delta}{d\tau}=\left(e^{\omega_\ast}-e^{\omega_0}\right)\frac{d\chi_\ast}{d\tau}+\left(1-\frac{a_{-\ast}}{a_{-0}}e^{-N_\ast}\right)\left(1-e^{\omega_\ast}\frac{d\chi_\ast}{d\tau}\right),
\end{align}
while
\begin{align}
  \label{eq:418}
  \frac{d\delta^{[2]}}{d\tau}&=\left[\left(\frac{\p e^{\omega}}{\p\tau}\right)_0\tau+\left(\frac{\p e^{\omega}}{\p\chi}\right)_0\chi_\ast\right]\frac{d\chi_\ast}{d\tau}\notag\\
&\qquad-\frac{1}{a_{-0}}\left[\left(\frac{\p a_{-}}{\p\tau}\right)_0\tau+\left(\frac{\p a_{-}}{\p \chi}\right)_0\chi_\ast-a_{-0}N_\ast^{(1)}\right]\left(1-e^{\omega_0}\frac{d\chi_\ast}{d\tau}\right).
\end{align}
Subtracting we obtain
\begin{align}
  \label{eq:419}
  \frac{d}{d\tau}\{\delta\}_3=I_1\frac{d\chi_\ast}{d\tau}-\frac{I_2}{a_{-0}}\left(1-e^{\omega_0}\frac{d\chi_\ast}{d\tau}\right)-\left(e^{\omega_\ast}-e^{\omega_0}\right)\left(1-\frac{a_{-\ast}}{a_{-0}}e^{-N_\ast}\right)\frac{d\chi_\ast}{d\tau},
\end{align}
where
\begin{align}
  \label{eq:420}
  I_1&:=e^{\omega_\ast}-e^{\omega_0}-\left[\left(\frac{\p e^{\omega}}{\p\tau}\right)_0\tau+\left(\frac{\p e^{\omega}}{\p\chi}\right)_0\chi_\ast\right],\\
  \label{eq:421}
  I_2&:=a_{-\ast}-a_{-0}-\left[\left(\frac{\p a_{-\ast}}{\p\tau}\right)_0\tau+\left(\frac{\p a_{-\ast}}{\p \chi}\right)_0\chi_\ast\right]\notag\\
&\qquad+(a_{-\ast}-a_{-0})\left(e^{-N_\ast}-1\right)+a_{-0}\left(e^{-N_\ast}-1+N_\ast^{(1)}\right).
\end{align}
Therefore, we can write
\begin{align}
  \label{eq:422}
  \frac{d}{d\tau}\{\delta\}_3&=\{e^{\omega_\ast}\}_2\frac{d\chi_\ast}{d\tau}-\frac{1}{a_{-0}}\bigg[\{a_{-\ast}\}_2+(a_{-\ast}-a_{-0})\left(e^{-N_\ast^{[2]}-\{N_\ast\}_3}-1\right)\notag\\
&\hspace{50mm}+a_{-0}\left(e^{-N_\ast^{[2]}-\{N_\ast\}_3}-1+N_\ast^{(1)}\right)\bigg]\left(1-e^{\omega_0}\frac{d\chi_\ast}{d\tau}\right)\notag\\
&\qquad-(e^{\omega_\ast}-e^{\omega_0})\left(1-\frac{a_{-\ast}}{a_{-0}}e^{-N_\ast^{[2]}-\{N_\ast\}_3}\right)\frac{d\chi_\ast}{d\tau}.
\end{align}
The expression \eqref{eq:409} then follows.

Using (4.6c) of \cite{II} and taking into account the first of \eqref{eq:268}, it follows
\begin{align}
  \label{eq:423}
  \frac{d\phi_\ast}{d\tau}=\frac{z}{1+z^2}\sqrt{z^2+X(1-z^2)}+\frac{1}{1+z^2}\sqrt{1-X(1-z^2)}.
\end{align}
Using the definition of $\Psi$ as given by \eqref{eq:301},  expression \eqref{eq:410} follows.
\end{proof}

We are now ready to prove the following existence theorem.
\begin{theorem}
Let $(r,\omega,\rho)$ be a soft-phase solution corresponding to smooth initial data on $\Sigma$ and let $\mathscr{V}$ be the domain
\begin{align*}
  \mathscr{V}=\{(\tau,\chi):\chi\geq 0,\tau_-(\chi)\leq\tau\leq\hat{\tau}(\chi)\}
\end{align*}
where $(0,0)=N^-$ is an incoming boundary null point of $\Sigma$, $\tau=\tau_-(\chi)$ is the equation of $C^-$, the outgoing null curve issuing from $N^-$, while $\tau=\hat{\tau}(\chi)$ is the equation of the curve issuing at $N^-$, where, corresponding to the given soft-phase initial data on $\Sigma$, $\rho(\tau,\chi)$ along each flow line first becomes equal to 1. Let $R$ be a function defined on an interval $[0,\hat{t}]$ and representing $r$ as a function of $\phi$ along $C^{\ast -}$, the incoming null curve issuing from $N^-$. Then there is a $\hat{\tau}>0$ and a solution to the problem of formation of a free phase boundary such that $\mathscr{B}$ is a $C^1$ curve: $[0,\hat{\tau}]\mapsto \mathscr{V}$, $\tau\mapsto \chi_\ast(\tau)$, issuing from $N^-$, which has positive velocity $\beta$ relative to the soft-phase flow lines, is strictly timelike $(\beta<1)$ and contained in the interior of $\mathscr{V}$ in $(0,\hat{\tau}]$. Furthermore $\beta(0)=(5+l)/(5l+1)$ where $l$ is given by
\begin{align}
  \label{eq:424}
  l=-4\frac{\rho_{\chi\chi}|_{N^-}}{\rho_{\tau\tau}|_{N^-}}.
\end{align}
Also $r$, $m$ and $\phi$ are $C^1$ functions while $\nu$, $\kappa$, $\zeta$, $\eta$ are $C^0$ functions on $\mathscr{U}(\hat{\tau})$, defining a solution corresponding to genuine hard phase, except at $N^-$ where $\sigma=1$.
\label{theorem_1}
\end{theorem}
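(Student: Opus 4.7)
The plan is to assemble the machinery already developed and extract the desired solution as a limit of the regularized problems. Fix $\hat{\tau}\in(0,\bar{\tau})$ with $\bar{\tau}$ as in Lemma \ref{lemma_analogue}, so that for all sufficiently large $i$ the $n_i$-th regularized problem admits a solution on $[0,\hat{\tau}]$. Pass to the subsequence furnished by Lemma \ref{lemma_sequence}, which gives uniform convergence $\mathscr{B}_{n_i}\to\mathscr{B}$ on $[0,\hat{\tau}]$ and of $(r_{n_i},\phi_{n_i},m_{n_i},N_{n_i},K_{n_i},\zeta_{n_i},\eta_{n_i})\to(r,\phi,m,N,K,\zeta,\eta)$ on $\mathscr{U}(\hat{\tau})$. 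Define $\nu$ and $\kappa$ by \eqref{eq:203} via the weak-star limits of $\nu_{\ast n_i},\kappa_{\ast n_i}$ secured by Alaoglu's theorem. Apply Lemma \ref{lemma_convergence} to the integrals defining $N$ and $K$, as in \eqref{eq:210}--\eqref{eq:211}, to verify they agree with \eqref{eq:204}; consequently $\nu,\kappa$ satisfy (6.6a), (6.6b) of \cite{I}.

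Next, upgrade this to a genuine hard-phase solution in $\mathscr{U}(\hat{\tau})$ by applying Lemma \ref{lemma_convergence} to the remaining equations (6.3)--(6.7) of \cite{I}, using the function pairs listed in \eqref{eq:212}. This establishes that $(r,\phi,m,\nu,\kappa,\zeta,\eta)$ solves the hard-phase system on $\mathscr{U}(\hat{\tau})$, and delivers the regularity claimed at the end of the theorem: $r$, $m$ and $\phi$ are $C^1$ via the integral representations and the relations $\partial r/\partial u=-\nu$, $\partial r/\partial v=(1-\mu)\kappa$, $\partial\phi/\partial u=\nu\zeta$, $\partial\phi/\partial v=\kappa\eta$, while $\nu,\kappa,\zeta,\eta$ are $C^0$.

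The crux of the proof is then to show that $\beta\in C^0[0,\hat{\tau}]$, satisfies $0<\beta<1$ on $(0,\hat{\tau}]$, and has the prescribed initial value $\beta(0)=(5+l)/(5l+1)$. A priori $\beta_{n_i}\to\beta$ only in the weak-star sense, because the map $(x,y)\mapsto y/((1-y)x+y)$ from \eqref{eq:190} is discontinuous at $(0,0)$, which is precisely the point where $\rho_\ast=\gamma=1$. Lemmas \ref{lemma_every_point} and \ref{lemma_sufficiently} are the tool that circumvents this: together they rule out cases $C_2$, $C_3$, $C_4$ on $(0,\hat{\tau}]$ for sufficiently small $\hat{\tau}$, forcing $(x,y)>0$ everywhere there and yielding uniform convergence $\beta_{n_i}\to\beta$ on $(0,\hat{\tau}]$. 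Hence $\beta$ is continuous with $0<\beta<1$ on $(0,\hat{\tau}]$ and $\chi_\ast\in C^1(0,\hat{\tau}]$.

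The remaining — and hardest — step is to identify the one-sided limit $\lim_{\tau\to 0^+}\beta(\tau)$, where the discontinuity of the algebraic map at the origin prevents any direct passage to the limit. Here one invokes the asymptotic analysis already summarized in Propositions \ref{proposition10} and \ref{proposition4}: expand $E=1/\gamma^2-1$ and $X=1-\rho_\ast$ to quadratic order, transform to soft-phase coordinates $(t,x)$ via \eqref{eq:392}, and read off from \eqref{eq:394}--\eqref{eq:396} that $(E^{[2]}+X^{[2]})/t^2=i>0$; this removes the singularity in the ratio $X/(E+X)=z^2$ and reduces \eqref{eq:269} to the first-order ODE \eqref{eq:398}. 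Solving this at the origin gives the leading behaviour \eqref{eq:399}--\eqref{eq:400}, whence \eqref{eq:401} with $\beta_0=(5+l)/(5l+1)$, matching the construction \eqref{eq:159}. This yields $\beta\in C^0[0,\hat{\tau}]$ and $\chi_\ast\in C^1[0,\hat{\tau}]$. The jump conditions across $\mathscr{B}$ and the fact that $\mathscr{B}$ is timelike follow from passing to the limit in the jump conditions satisfied by each $\mathscr{B}_{n_i}$, using the uniform convergence of $\beta_{n_i}$ and the inequality $\beta_0<1$ guaranteed by \eqref{eq:160}.
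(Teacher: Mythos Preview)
Your proposal correctly assembles the limit extraction, the verification that the hard-phase equations are satisfied, and the asymptotic computation of $\beta(0)$. These steps track the paper's own development closely.

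However, there is a genuine gap: you omit the verification that the limit solution corresponds to \emph{genuine} hard phase, i.e.\ that $\sigma>1$ on $\mathscr{U}(\hat{\tau})\setminus\{(0,0)\}$. Satisfying the hard-phase \emph{equations} (6.3)--(6.7) of \cite{I} does not by itself imply $\sigma=\|d\phi\|>1$; this is an additional constraint that must be checked. The paper devotes the final third of its proof to exactly this point. The argument proceeds by contradiction: if the set $\mathscr{E}(\tau_1)=\{(u,v)\in\mathscr{U}(\tau_1)\setminus(0,0):\sigma(u,v)\leq1\}$ were nonempty for every small $\tau_1$, one locates a first point along an incoming null segment where $\sigma$ drops to $1$, so that $\partial\sigma/\partial u\leq 0$ there. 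But the identity $\partial\sigma/\partial u=2\nu\zeta e$ (from (1.23b) of \cite{III}) combined with Proposition~\ref{proposition2}, which gives $e(0,0)=0$ and $(\partial e/\partial u)(0,0)=i>0$, $(\partial e/\partial v)(0,0)=2i_0>0$, forces $e>0$ on the relevant segment for $\tau_1$ small, a contradiction. A further dichotomy handles the case where $\mathscr{E}(\tau_1)$ is confined to $v=0$. Without this barrier argument the theorem's final clause --- ``defining a solution corresponding to genuine hard phase, except at $N^-$ where $\sigma=1$'' --- is unproved.

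A minor point: your regularity claim for $\nu,\kappa$ is stated too quickly. Continuity of $\nu_\ast,\kappa_\ast$ on $(0,\hat{\tau}]$ comes from Lemma~\ref{lemma_sufficiently}, but extending to $\tau=0$ requires the limit $\beta(\tau)\to\beta_0$, which you only establish later. The order of the argument should make clear that the $C^1$ regularity of $r,m,\phi$ on all of $\mathscr{U}(\hat{\tau})$ (including the boundary $v=0$ and the corner) is a \emph{consequence} of \eqref{eq:401}, not a prerequisite.
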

\begin{proof}
By virtue of lemma \ref{lemma_sufficiently} and using \eqref{eq:201}, \eqref{eq:203}, \eqref{eq:204} $\nu$ is continuous in
\begin{align}
  \label{eq:425}
  \mathscr{U}_{(0,\hat{\tau}]}^+:=\mathscr{U(\hat{\tau})}\setminus (0,0)
\end{align}
and $\kappa$ is continuous in
\begin{align}
  \label{eq:426}
  \mathscr{U}_{(0,\hat{\tau}]}^-:=\mathscr{U}(\hat{\tau})\setminus ([0,\hat{\tau}]\times\{0\}).
\end{align}
Since
\begin{align}
  \label{eq:427}
 \mathscr{U}_{(0,\hat{\tau}]}^+\cap \mathscr{U}_{(0,\hat{\tau}]}^-=\mathscr{U}_{(0,\hat{\tau}]}^-,
\end{align}
the functions $\nu$, $\zeta$ are thus continuously differentiable with respect to $v$ while the functions $\kappa$, $\eta$ are continuously differentiable with respect to $u$ in $\mathscr{U}_{(0,\hat{\tau}]}^-$ (cf. (6.6a,b), (6.7a,b) of \cite{III}). Also the functions $r$, $\phi$, $m$ are $C^1$ functions in $\mathscr{U}_{(0,\hat{\tau}]}^-$. The results of the previous section show that $\beta\rightarrow \beta_0$ as $\tau\rightarrow 0$, where $\beta_0$ is given by \eqref{eq:159}. This implies that $\nu_\ast$ and $\kappa_\ast$ (see \eqref{eq:201}) extend continuously to $\tau=0$ and
\begin{align}
  \label{eq:428}
  \nu_\ast (0)=\frac{a_{-0}}{2}(1-\beta_0),\qquad \kappa_\ast(0)=\frac{1}{2a_{-0}}(1+\beta_0),
\end{align}
which in turn implies that $\nu$ extends continuously to the point $(0,0)$, $\kappa$ extends continuously to $[0,\hat{\tau}]\times \{0\}$, the past boundary of $\mathscr{U}(\hat{\tau})$, the partial derivatives of the functions $\nu$, $\zeta$ with respect to $v$ and of the functions $\kappa$, $\eta$ with respect to $u$ extend continuously to $[0,\hat{\tau}]\times \{0\}$, while the functions $r$, $m$ and $\phi$ are $C^1$ on the whole of $\mathscr{U}(\hat{\tau})$.

It remains to show that the solution obtained in $\mathscr{U}(\hat{\tau})$ corresponds to a genuine hard phase. More precisely, it suffices to show that we can find a $\hat{\tau}'\in(0,\hat{\tau}]$ such that the restriction of the solution to $\mathscr{U}(\hat{\tau}')$ satisfies $\sigma>1$ except at $(0,0)$. We shall do this with the help of the (interior) barrier function $e$ of \eqref{eq:129}. Now, on $(0,\hat{\tau}]$ we have $\sigma_\ast>1$ (see (4.20a,b) of \cite{II}). Thus if, on the contrary, such a $\hat{\tau}'$ cannot be found, then for every $\tau_1\in (0,\hat{\tau}]$, the set
\begin{align}
  \label{eq:429}
  \mathscr{E}(\tau_1):=\{(u,v)\in\mathscr{U}(\tau_1)\setminus(0,0):\sigma(u,v)\leq1\}
\end{align}
is not empty; consequently, there is a $\check{v}\in [0,\tau_1)$ such that $\mathscr{E}(\tau_1)$ has a non-empty intersection with the incoming null segment
\begin{align}
  \label{eq:430}
  I_{\tau_1}^-(\check{v}):=[\check{v},\tau_1]\times \{\check{v}\}.
\end{align}
Suppose that $\check{v}>0$. Then since $\sigma_\ast(\check{v})>1$, there is a first point, $(\check{u},\check{v})$ along $I_{\tau_1}^-(\check{v})$ at which $\sigma=1$, i.e.
\begin{align}
  \label{eq:431}
  \check{u}=\sup\{u:\sigma(u,\check{v})>1\}.
\end{align}
Thus
\begin{align}
  \label{eq:432}
  \frac{\partial \sigma}{\partial u}(\check{u},\check{v})\leq 0
\end{align}
must hold. However, by (1.23b) of \cite{III},
\begin{align}
  \label{eq:433}
  \frac{\partial \sigma}{\partial u}(\check{u},\check{v})=2(\nu\zeta e)(\check{u},\check{v}).
\end{align}
Using proposition \ref{proposition2}, it follows that
\begin{align}
  \label{eq:434}
  \inf_{I_{\tau_1}^-(\check{v})}e>0
\end{align}
if $\tau_1$ is sufficiently small. It follows that $\check{v}=0$; hence $\mathscr{E}(\tau_1)\subset I_{\tau_1}^-(0)\setminus (0,0)$ and, by continuity, $\sigma=1$ on $\mathscr{E}(\tau_1)$. Consequently, either there is an open segment $(u_0,u_1)\times \{0\}\subset I_{\tau_1}^-(0)$ not intersecting $\mathscr{E}(\tau_1)$ such that its future end point $(u_1,0)$ belongs to $\mathscr{E}(\tau_1)$ or there is a $\tau_0\in(0,\tau_1]$ such that $\mathscr{E}(\tau_1)$ coincides with $I_{\tau_0}^-(0)\setminus (0,0)$. However, according to the first alternative, $\sigma>1$ on $(u_0,u_1)\times \{0\}$ and $\sigma(u_1,0)=1$, whence $(\partial\sigma/\partial u)(u_1,0)\leq 0$, while according to the second alternative $\sigma=1$ on $I_{\tau_0}^-(0)\setminus(0,0)$, whence $\partial\sigma/\partial u=0$ on $I_{\tau_0}^-(0)\setminus(0,0)$. Since proposition \ref{proposition2} implies that, for $\tau$ suitably small,
\begin{align}
  \label{eq:435}
  e|_{I_{\tau_1}^-(0)\setminus(0,0)}>0,
\end{align}
we obtain, in conjunction with \eqref{eq:433}, a contradiction in both alternatives.
\end{proof}

\section{The Local Form of an Expansion Shock\label{local_form}}
We recall \eqref{eq:400}
\begin{align}
  \label{eq:436}
  \chi_\ast(\tau)=2\beta_0\tau+\mathcal{O}(\tau^2)=\frac{2i+4i_0}{5i-2i_0}\tau+\mathcal{O}(\tau^2).
\end{align}
Using now the expression for $\rho(\tau,\chi)$ as given by \eqref{eq:127}
\begin{align}
  \label{eq:437}
    \rho(\tau,\chi)=1+2i_0\left(\tau+\frac{\chi}{2}\right)\left(\tau-\frac{l\chi}{2}\right)+\mathcal{O}(\tau^3,\tau^2\chi,\tau\chi^2,\chi^3),
\end{align}
it follows that
\begin{align}
  \label{eq:438}
  \rho_\ast(\tau)=\rho(\tau,\chi_\ast(\tau))=1-\frac{24i^2(i-i_0)}{(5i-2i_0)^2}\tau^2+\mathcal{O}(\tau^3).
\end{align}
Since $C^{-}$, the null outgoing curve at $N_-$, is given by $\chi_-(\tau)=2\tau+\mathcal{O}(\tau^2)$ we obtain
\begin{align}
  \label{eq:439}
  \rho_-(\tau)=\rho(\tau,\chi_-(\tau))=1-8(i-i_0)\tau^2+\mathcal{O}(\tau^3).
\end{align}
Using \eqref{eq:363}, \eqref{eq:400} we obtain
\begin{align}
  \label{eq:440}
  \gamma(\tau)=1-\frac{6i^2(i+2i_0)}{(5i-2i_0)^2}\tau^2+\mathcal{O}(\tau^3).
\end{align}
Using \eqref{eq:438} in \eqref{eq:298} we obtain, through \eqref{eq:295},
\begin{align}
  \label{eq:441}
  \phi_\ast(\tau)=\tau-\frac{12i^3(i-i_0)(i+2i_0)^2}{5(5i-2i_0)^4}\tau^5+\mathcal{O}(\tau^6).
\end{align}
Using \eqref{eq:282} together with \eqref{eq:362} we obtain
\begin{align}
  \label{eq:442}
  r(\tau,0)=r_0-a_{-0}\frac{2(i-i_0)}{5i-2i_0}\tau+\mathcal{O}(\tau^2).
\end{align}

\noindent\textit{Remark.}  Moreover, any solution of the free boundary problem as stated in the end of section \ref{section1} with the regularity properties as stated in the existence theorem has the above local form.

\begin{center}
\begin{tikzpicture}
\draw [dashed] (0,0) -- (4,4);
\draw [dashed] (0,0) -- (-4,4);
\node at (-3,2.5) {$C^{\ast -}$};
\node at (3,2.5) {$C^-$};
\node at (4.4,4.4) {$\tau=\tau_-(\chi)$};
\draw [very thick] (0,0) .. controls (1.5,2) and (1.8,4) .. (2,4.5);
\node at (2,3.3) {$\mathscr{B}$};
\draw [very thick] (0,0) .. controls (2,-2) and (4,-1.8) .. (4.5,-2);
\node at (5,-2) {$\Sigma$};
\draw (0,0) .. controls (0.8,2).. (1,4.5);
\node at (1,4.75) {$\tau=\hat{\tau}(\chi)$};
\node at (1.6,1.5) [fill=white] {Soft};
\node at (3.4,-1) {Soft};
\node at (0.3,2) [fill=white] {Hard};
\node at (-2,-1) {Hard};
\filldraw (0,0) [fill=white] circle (2pt);
\node at (-0.4,-0.3) {$N^-$};
\end{tikzpicture}
\end{center}

\section{Uniqueness}
\begin{theorem}
  Let $\mathscr{B'}$, $(r',m',\nu',\kappa',\zeta',\eta')$ and $\mathscr{B''}$, $(r'',m'',\nu'',\kappa'',\zeta'',\eta'')$, both defined on $[0,\tau_1]$, $\mathscr{U}(\tau_1)$, be two solutions of the problem of formation of a free phase boundary in the phase transition from hard to soft, as formulated in Section \ref{section1}. Suppose that the two solutions have the same hard-phase initial data along $C^{\ast -}$ and also the same soft-phase initial data along $C^-$, and hence correspond to the same soft-phase solution $r,\omega,\rho$. Then the two solutions coincide.\label{theorem_2}
\end{theorem}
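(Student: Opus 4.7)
The plan is to show $\Delta\chi_\ast \equiv 0$ on $[0,\tau_1]$, where I write $\Delta f := f''-f'$ throughout; once this is established, the remaining unknowns on $\mathscr{U}(\tau_1)$ satisfy the same hard-phase characteristic initial value problem of (6.3)--(6.7) of \cite{I} with coincident data on $C^{\ast -}$ and on the now common shock curve, and standard uniqueness in the interior forces them to agree. The crucial common structure is that the two solutions automatically share identical leading Taylor parts near $N^-$: the expressions derived in propositions \ref{proposition10}, \ref{proposition4} and in section \ref{local_form} show that $\chi_\ast^{(1)}$, $N_\ast^{(1)}$, $N_\ast^{(2)}$, $M_\ast^{(2)}$, $\delta^{(2)}$, $B^{[2]}$, $E^{[2]}$ and $X^{[2]}$ depend only on the soft-phase solution $(r,\omega,\rho)$ and on the function $R$ describing $r$ along $C^{\ast -}$, both of which are common to the two solutions by hypothesis. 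Hence $\Delta\chi_\ast$, $\Delta\delta$, $\Delta B$, $\Delta E$ and $\Delta N_\ast$ reduce to differences of the third-order remainders catalogued in proposition \ref{proposition6}.

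I would then combine the free-boundary condition \eqref{eq:269}, $z^2 = X/(E+X)$, with the kinematic identity $d\chi_\ast/d\tau = e^{-\omega_\ast}\beta$ and $\beta = (1-z^2)/(1+z^2)$, together with the characteristic integrals of (6.3)--(6.7) of \cite{I} that express the bulk unknowns on $\mathscr{U}(\tau)$ in terms of the common data on $C^{\ast -}$ and the boundary values on $\mathscr{B}$. Subtracting the corresponding relations for the two solutions, and invoking \eqref{eq:407}--\eqref{eq:410} to bound each difference of remainders algebraically in terms of $|\Delta\chi_\ast|$ and of the sup-norms
\begin{align*}
\Phi(\tau) := \sup_{\mathscr{U}(\tau)}\bigl(|\Delta r|+|\Delta m|+|\Delta\phi|+|\Delta\nu|+|\Delta\kappa|+|\Delta\zeta|+|\Delta\eta|\bigr),
\end{align*}
yields a coupled integral system of the schematic form
\begin{align*}
|\Delta\chi_\ast|(\tau) &\leq C\int_0^\tau \frac{|\Delta\chi_\ast|(\tau')}{\tau'}\,d\tau' + C\int_0^\tau \Phi(\tau')\,d\tau',\\
\Phi(\tau) &\leq C\int_0^\tau \bigl(|\Delta\chi_\ast|+\Phi\bigr)(\tau')\,d\tau'.
\end{align*}

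The hard part is the singular kernel $1/\tau'$ in the first inequality, which reflects the quadratic vanishing of $E+X$ at $N^-$ and rules out a direct Gr\"onwall argument. The way around it is a bootstrap on the order of vanishing at $\tau = 0$: the a priori estimate $|\Delta\chi_\ast|, \Phi = O(\tau^2)$ coming from the cancellation of the common leading parts is improved iteratively, each pass of the pair of inequalities yielding one extra power of $\tau$ on the right-hand side (the gain arising from $\int_0^\tau \tau'^{\,p-1}d\tau' = \tau^p/p$ together with the explicit polynomial prefactors provided by proposition \ref{proposition6}). Once the guaranteed power exceeds the constant $C$ in the kernel, the classical conclusion $X(\tau) \leq X(\tau_0)(\tau/\tau_0)^C$ for integral inequalities with kernel $1/\tau'$ forces $\Delta\chi_\ast \equiv \Phi \equiv 0$ on some $[0,\tau_2]\subset [0,\tau_1]$. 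On $[\tau_2,\tau_1]$ the data at $\tau = \tau_2$ now coincide and the problem is a regular free-boundary problem in the sense of theorem 3.1 of \cite{II}, whose built-in uniqueness propagates the coincidence up to $\tau = \tau_1$ and completes the argument.
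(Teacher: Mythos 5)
Your overall architecture is the right one and matches the paper's in its first half: the leading Taylor parts of $\chi_\ast$, $N_\ast$, $M_\ast$, $\delta$, $B$, $E$, $X$ are indeed common to the two solutions, so everything reduces to estimating differences of the third-order remainders of proposition \ref{proposition6} together with the bulk differences on $\mathscr{U}(\tau)$, and the paper closes the loop through exactly the quantities you name (ending with $l_1+l_2\leq C\tau_1(l_1+l_2)$ and then theorem 3.2 of \cite{II} for the continuation). The problem is your treatment of the singular kernel, which is the heart of the matter. Your bootstrap cannot raise the order of vanishing of $|\Delta\chi_\ast|$: if $|\Delta\chi_\ast|\leq A\tau^p$, then $C\int_0^\tau|\Delta\chi_\ast|(\tau')\,d\tau'/\tau'\leq (CA/p)\tau^p$ — the power $p$ is returned unchanged and only the coefficient is multiplied by $C/p$. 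So "each pass yields one extra power" is false for the $\Delta\chi_\ast$ inequality; the iteration contracts only if $C<p$, and with the a priori order $p=2$ this requires $C<2$. But the constant is computable: the dominant singular contribution is \eqref{eq:458} divided by $E+X\sim it^2$ and multiplied by the prefactor \eqref{eq:459}, and with $e^{-\omega_0}=2$, $t\sim(1+\beta_0)\tau$ one finds it equals $-4\,\Delta\chi_\ast/\tau$ to leading order. Taking absolute values as your scheme requires gives $C=4>2$, so the bootstrap never starts and the "classical conclusion" $X(\tau)\leq X(\tau_0)(\tau/\tau_0)^C$ gives nothing as $\tau_0\to 0$.

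What saves the argument in the paper is not the size of this term but its sign: the quantity \eqref{eq:458} carries the same sign as $\chi_\ast''-\chi_\ast'$, so after multiplication by $\mathrm{sgn}(\chi_\ast''-\chi_\ast')$ and the negative factor \eqref{eq:459} it contributes \emph{negatively} to $\tfrac{d}{d\tau}|\Delta\chi_\ast|$ and can simply be discarded from the estimate \eqref{eq:461}. This is the one idea your proposal is missing, and without it the singular kernel is fatal rather than merely inconvenient. Once that term is dropped, no iteration on powers of $\tau$ is needed at all: one gets $\tfrac{d}{d\tau}|\Delta\chi_\ast|\leq C(a+b+c+l_1+l_2)$ directly, the auxiliary sup-norms $a,b,c$ are absorbed for $\tau_1$ small, and the remaining quantities $l_1,l_2$ are killed by proposition \ref{proposition7}. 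I would also note that your concluding appeal to the "built-in uniqueness" of theorem 3.1 of \cite{II} should be to the uniqueness statement for the continuation problem (theorem 3.2 of \cite{II}), since what remains after the coincidence on $[0,\tau_2]$ is a continuation, not a formation, problem.
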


In the following, there can always be made the additional assumption that $\tau_1$ be sufficiently small.

Since
\begin{align}
  \label{eq:452}
  \frac{d\chi_\ast}{d\tau}=e^{-\omega_\ast}\beta=e^{-\omega_\ast}\frac{1-z^2}{1+z^2},
\end{align}
it follows that
\begin{align}
  \frac{d}{d\tau}|\chi_\ast''-\chi_\ast'|&=\textrm{sgn}(\chi_\ast''-\chi_\ast')\frac{d}{d\tau}(\chi_\ast''-\chi_\ast')\notag\\
&=\textrm{sgn}(\chi_\ast''-\chi_\ast')\left[\frac{-2e^{-\omega_\ast''}}{(1+z''^2)(1+z'^2)}(z''^2-z'^2)+\frac{1-z'^2}{1+z'^2}\left(e^{-\omega_\ast''}-e^{-\omega_\ast'}\right)\right].\label{eq:453}
\end{align}
From \eqref{eq:394}, \eqref{eq:395} together with the expression for $\{E\}_3$ given by \eqref{eq:408}, we have
\begin{align}
  \label{eq:454}
  E+X=it^2+U,
\end{align}
where
\begin{align}
  \label{eq:455}
  U=\bar{Q}_3+\bar{Q}_0(\{B\}_3-\Psi)+\bar{Q}_0(\{B\}_3-\Psi)^2.
\end{align}
Recalling $z^2=X/(E+X)$, we have
\begin{align}
  \label{eq:456}
  z''^2-z'^2&=\frac{1}{(it''^2+U'')(it'^2+U')}\Big\{-2i^2(t''x''t'^2-t'x't''^2)+i(t'^2Q_3(t'',x'')-t''^2Q_3(t',x'))\notag\\
&\qquad+2(i-i_0)(t''^2U'-t'^2U'')-2i(t''x''U'-t'x'U'')+U'Q_3(t'',x'')-U''Q_3(t',x')\Big\}.
\end{align}
In view of \eqref{eq:392} we see that $t''-t'+x''-x'=0$ (we compare quantities at the same value of $\tau$), it follows
\begin{align}
  \label{eq:457}
  t''x''t'^2-t'x't''^2=-2t't''\tau(\chi_\ast''-\chi_\ast'),
\end{align}
therefore the first term on the right hand side of \eqref{eq:456} is
\begin{align}
  \label{eq:458}
  \frac{4i^2t't''\tau(\chi_\ast''-\chi_\ast')}{(it''^2+U'')(it'^2+U')}.
\end{align}
Because of the factor
\begin{align}
  \label{eq:459}
  \textrm{sgn}(\chi_\ast''-\chi_\ast')\frac{-2e^{-\omega_\ast''}}{(1+z''^2)(1+z'^2)}
\end{align}
in \eqref{eq:453}, it follows that the term \eqref{eq:458} does contribute negatively in \eqref{eq:453} and can therefore be omitted in the estimates to follow. Making use of estimates of the type
\begin{align}
  \label{eq:460}
  |\bar{Q}_n''-\bar{Q}_n'|\leq C\tau^{n-1}|\chi_\ast''-\chi_\ast'|,
\end{align}
we can estimate
\begin{align}
  \label{eq:461}
\frac{d}{d\tau}|\chi_\ast''-\chi_\ast'|\leq C\left(|\chi_\ast''-\chi_\ast'|+\frac{|U''-U'|}{\tau^2}\right).  
\end{align}
Using \eqref{eq:455}, we can estimate
\begin{align}
  \label{eq:462}
  \frac{|U''-U'|}{\tau^2}\leq C\left(|\chi_\ast''-\chi_\ast'|+\frac{|\{B\}_3''-\{B\}_3'|}{\tau^2}+\frac{|\Psi''-\Psi'|}{\tau^2}\right).
\end{align}
Using proposition \ref{proposition6} we can estimate
\begin{align}
  \label{eq:463}
  \frac{|\{B\}_3''-\{B\}_3'|}{\tau^2}&\leq C\,\bigg(|\chi_\ast''-\chi_\ast'|+\frac{|\{N_\ast\}_3''-\{N_\ast\}_3'|}{\tau^2}\notag\\
&\qquad\qquad+\frac{|\{H\circ Y\}_3''-\{H\circ Y\}_3'|}{\tau^2}+\frac{|\{M_\ast\}_3''-\{M_\ast\}_3'|}{\tau^2}\bigg),\\
  \label{eq:464}
  \frac{|\{H\circ Y\}_3''-\{H\circ Y\}_3'|}{\tau^2}&\leq C\left(|\chi_\ast''-\chi_\ast'|+\frac{|\{\delta\}_3''-\{\delta\}_3'|}{\tau^2}\right).
\end{align}
In the arguments to follows we will use the notation
\begin{align}
  \label{eq:465}
  \Delta f:=f''-f'.
\end{align}
Now we define
\begin{align}
  \label{eq:466}
  a:=\sup_{[0,\tau_1]}|\Delta\chi_\ast|,\qquad b:=\sup_{[0,\tau_1]}\left(\frac{|\Delta\Psi|}{\tau^2}\right),\qquad c:=\sup_{[0,\tau_1]}\left(\frac{|\Delta\{\delta\}_3|}{\tau^2}\right),
\end{align}
\begin{align}
  \label{eq:467}
  l_1:=\sup_{[0,\tau_1]}\left(\frac{|\Delta\{N_\ast\}_3|}{\tau^2}\right),\qquad l_2:=\sup_{[0,\tau_1]}\left(\frac{|\Delta\{M_\ast\}_3|}{\tau^2}\right).
\end{align}
It follows from \eqref{eq:405}, \eqref{eq:407}
\begin{align}
  \label{eq:468}
\frac{|\Delta\{H\circ Y\}_3|}{\tau^2}&\leq C(a+c),\\
  \label{eq:469}
  \frac{|\Delta\{B\}_3|}{\tau^2}&\leq C(a+c+l_1+l_2).
\end{align}
We therefore can estimate
\begin{align}
  \label{eq:470}
  \frac{d}{d\tau}|\Delta\chi_\ast|\leq C(a +b+c+l_1+l_2),
\end{align}
and it follows
\begin{align}
  \label{eq:471}
  |\Delta\chi_\ast|\leq C\tau(a+b+c+l_1+l_2).
\end{align}
Going back to equation \eqref{eq:456}, we can estimate
\begin{align}
  \label{eq:472}
  |\Delta z^2|&\leq C \left(\frac{|\Delta\chi_\ast|}{\tau}+a+b+c+l_1+l_2\right)\notag\\
  &\leq C(a+b+c+l_1+l_2).
\end{align}

For an estimate of $|\Delta\{\delta\}_3|$ we use the equation for $\delta$ of proposition \ref{proposition6}.
Defining
\begin{align}
  \label{eq:473}
  Q_3(\tau,\chi):=\int_0^{\chi} Q_2(\tau,\tilde{\chi})d\tilde{\chi},
\end{align}
it follows
\begin{align}
  \label{eq:474}
  \frac{d\bar{Q}_3}{d\tau}(\tau)=\frac{dQ_3}{d\tau}(\tau,\chi_\ast(\tau))&=\frac{\p Q_3}{\p \tau}(\tau,\chi_\ast(\tau))+\frac{\p Q_3}{\p \chi}(\tau,\chi_\ast(\tau))\frac{d\chi_\ast}{d\tau}(\tau)\notag\\
&=\bar{Q}_2(\tau)+\bar{Q}_2(\tau)\frac{d\chi_\ast}{d\tau}(\tau).
\end{align}
Transferring the term $\bar{Q}_2\frac{d\chi_\ast}{d\tau}$ from the right hand side of \eqref{eq:409} to the left hand side using \eqref{eq:474}, we obtain
\begin{align}
  \label{eq:475}
  \frac{d}{d\tau}(\{\delta\}_3-\bar{Q}_3)&=\bar{Q}_2+\bar{Q}_1\left(e^{-N_\ast^{[2]}-\{N_\ast\}_3}-1\right)-\left(e^{-N_\ast^{[2]}-\{N_\ast\}_3}-1+N_\ast^{(1)}\right)\notag\\
&\qquad+\bigg[\bar{Q}_1\left(e^{-N_\ast^{[2]}-\{N_\ast\}_3}-1\right)+\frac{1}{2}\left(e^{-N_\ast^{[2]}-\{N_\ast\}_3}-1+N_\ast^{(1)}\right)\notag\\
&\hspace{29mm}+\bar{Q}_1\left(1-e^{-N_\ast^{[2]}-\{N_\ast\}_3}\right)+\bar{Q}_2e^{-N_\ast^{[2]}-\{N_\ast\}_3} \bigg]e^{-\omega_\ast}\frac{1-z^2}{1+z^2}.
\end{align}
where we also used \eqref{eq:452}. So we can estimate
\begin{align}
  \label{eq:476}
  \frac{d}{d\tau}\Big|\Delta(\{\delta\}_3-\bar{Q}_3)\Big|&\leq C(\tau|\Delta\chi_\ast|+|\Delta\{N_\ast\}_3|+\tau^2|\Delta z^2|)\notag\\
&\leq C\tau^2(a+b+c+l_1+l_2).
\end{align}
Hence
\begin{align}
  \label{eq:477}
  |\Delta\{\delta\}_3-\bar{Q}_3|\leq C\tau^3(a+b+c+l_1+l_2).
\end{align}
Since
\begin{align}
  \label{eq:478}
  |\Delta\bar{Q}_3|\leq C\tau^2|\Delta\chi_\ast|\leq C\tau^3(a+b+c+l_1+l_2),
\end{align}
it follows
\begin{align}
  \label{eq:479}
  |\Delta\{\delta\}_3|\leq C\tau^3(a+b+c+l_1+l_2).
\end{align}
Using the definition of $c$, it follows
\begin{align}
  \label{eq:480}
  c\leq C\tau_1(a+b+c+l_1+l_2),
\end{align}
which, for $\tau_1$ sufficiently small, implies
\begin{align}
  \label{eq:481}
  c\leq C\tau_1(a+b+l_1+l_2)
\end{align}
Using \eqref{eq:471}, \eqref{eq:472} it follows
\begin{align}
  \label{eq:482}
|\Delta\chi_\ast|\leq C\tau(a+b+l_1+l_2),\qquad  |\Delta z^2|\leq C(a+b+l_1+l_2).
\end{align}
Using the equation for $d\Psi/d\tau$ from proposition \ref{proposition6} we get
\begin{align}
  \label{eq:483}
  \frac{d}{d\tau}|\Delta\Psi|&\leq C(\tau^4|\Delta z^2|+\tau^3|\Delta\chi_\ast|)\notag\\
&\leq C\tau^4(a+b+l_1+l_2).
\end{align}
Hence
\begin{align}
  \label{eq:484}
  |\Delta\Psi|\leq C\tau^5(a+b+l_1+l_2).
\end{align}
The definition of $b$ yields
\begin{align}
  \label{eq:485}
  b\leq C\tau_1^3(a+b+l_1+l_2),
\end{align}
which, for $\tau_1$ small enough, implies
\begin{align}
  \label{eq:486}
  b\leq C\tau_1(a+l_1+l_2).
\end{align}
Therefore
\begin{align}
  \label{eq:487}
  |\Delta\chi_\ast|\leq C\tau(a+l_1+l_2).
\end{align}
Using the definition of $a$ gives
\begin{align}
  \label{eq:488}
  a\leq C\tau_1(a+l_1+l_2),
\end{align}
which, for $\tau_1$ small enough, implies
\begin{align}
  \label{eq:489}
  a\leq C\tau_1(l_1+l_2).
\end{align}
Using \eqref{eq:486} and \eqref{eq:481} we can estimate
\begin{align}
  \label{eq:490}
  b\leq C(l_1+l_2),\qquad c\leq C(l_1+l_2).
\end{align}
It follows from \eqref{eq:487}, \eqref{eq:484} and \eqref{eq:479} that
\begin{align}
  \label{eq:491}
  |\Delta\chi_\ast|\leq C\tau(l_1+l_2),\qquad |\Delta\Psi|\leq C\tau^5(l_1+l_2),\qquad |\Delta\{\delta\}_3|\leq C\tau^3(l_1+l_2).
\end{align}
Using the first of \eqref{eq:268} together with the second of \eqref{eq:482}, \eqref{eq:489} and the first of \eqref{eq:490}, it follows
\begin{align}
  \label{eq:492}
  |\Delta\beta|\leq C|z^2|\leq C(l_1+l_2).
\end{align}
In addition, we note that the above estimates imply
\begin{align}
  \label{eq:493}
  \left|\frac{d\Delta\chi_\ast}{d\tau}\right|\leq C(l_1+l_2).
\end{align}

\begin{proposition}
Let $\mathscr{U}(\tau):=\{(u,v):\tau\geq u\geq v\geq 0\}$ and let
\begin{align}
  \label{eq:494}
  Q:=\max\left\{\sup_{\mathscr{U}(\tau)}|\Delta r|,\sup_{\mathscr{U}(\tau)}|\Delta m|,\sup_{\mathscr{U}(\tau)}|\Delta \nu|,\sup_{\mathscr{U}(\tau)}|\Delta \kappa|,\sup_{\mathscr{U}(\tau)}|\Delta\zeta|,\sup_{\mathscr{U}(\tau)}|\Delta\eta|\right\}.
\end{align}
It then follows
\begin{align}
  \label{eq:495}
  Q\leq C(l_1+l_2).
\end{align}
Furthermore, for the differences of the functions $N$, $K$, $\phi$, it holds
\begin{align}
  \sup_{\mathscr{U}(\tau)}| \Delta N|\leq C\tau(l_1+l_2),\qquad\sup_{\mathscr{U}(\tau)}| \Delta K|\leq C\tau(l_1+l_2),\qquad \sup_{\mathscr{U}(\tau)}| \Delta \phi|\leq C\tau(l_1+l_2),\label{eq:496}
\end{align}
where
\begin{align}
  \label{eq:497}
  N(u,v)=\int_0^v\left(\left(\mu-4\pi r^2\right)\frac{\kappa}{r}\right)(u,v')dv',\qquad K(u,v)=\int_v^u (r\nu\zeta^2)(u',v)du'.
\end{align}

  \label{proposition8}
\end{proposition}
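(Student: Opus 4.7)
The plan is to translate the boundary estimates \eqref{eq:487}--\eqref{eq:493} for $|\Delta\chi_\ast|$, $|\Delta\beta|$, $|\Delta z^2|$, $|\Delta\Psi|$, $|\Delta\{\delta\}_3|$, $|\Delta\{B\}_3|$ into interior estimates on $\mathscr{U}(\tau)$ for the hard-phase unknowns, by integrating the characteristic equations (6.3a)--(6.7b) of \cite{I} and closing a Gronwall argument on $Q$. Since the two solutions share the same hard-phase data along $C^{\ast-}$, all differences enter only through the trace along $\mathscr{B}$ and propagate inward by integration in the null directions.

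First I would collect the boundary traces. By \eqref{eq:201}, \eqref{eq:320} and the second of (4.5b) of \cite{II}, the quantities $r_\ast, m_\ast, \nu_\ast, \kappa_\ast, \zeta_\ast, \eta_\ast$ along $\mathscr{B}$ are smooth functions of $(\tau,\chi_\ast(\tau),\beta(\tau),\rho_\ast(\tau))$, where $(r,\omega,\rho)$ is the common soft-phase solution. Hence $|\Delta r_\ast|$, $|\Delta m_\ast|$, $|\Delta \nu_\ast|$, $|\Delta \kappa_\ast|$, $|\Delta \zeta_\ast|$, $|\Delta \eta_\ast|$ are all bounded by $C(|\Delta\chi_\ast|+|\Delta\beta|) \leq C(l_1+l_2)$, and similarly for $\phi_\ast$ via $\phi_\ast=\tau+\Psi$ we get $|\Delta\phi_\ast|\leq |\Delta\Psi|\leq C\tau^5(l_1+l_2)$.

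Next I would propagate inward. Integrating $\partial r/\partial v=(1-\mu)\kappa$ and $\partial r/\partial u=-\nu$ along characteristics from $\mathscr{B}$ and from $C^{\ast-}$ (where the solutions agree) gives
\begin{align*}
r(u,v)&=r_\ast(v)-\int_v^u \nu(u',v)\,du'\\[-2pt]
  &=r(u,0)+\int_0^v \bigl((1-\mu)\kappa\bigr)(u,v')\,dv'.
\end{align*}
Similar representations hold for $m$, and for $\nu,\kappa,\zeta,\eta$ using (6.6a,b), (6.7a,b) of \cite{I}, which take the forms \eqref{eq:203} together with the explicit ODEs for $\zeta,\eta$. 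Differencing these representations and using the boundary estimates together with the uniform bounds of page 103 of \cite{III}, one gets a closed system of integral inequalities on $\mathscr{U}(\tau)$ of the form
\begin{align*}
\sup_{\mathscr{U}(\tau')}|\Delta f|\leq C(l_1+l_2)+C\int_0^{\tau'}\sup_{\mathscr{U}(s)}(|\Delta r|+|\Delta m|+|\Delta\nu|+|\Delta\kappa|+|\Delta\zeta|+|\Delta\eta|)\,ds
\end{align*}
for each $f\in\{r,m,\nu,\kappa,\zeta,\eta\}$. Summing and applying Gronwall's inequality yields $Q\leq C(l_1+l_2)$, which is \eqref{eq:495}. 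The estimates \eqref{eq:496} on $N$, $K$, $\phi$ then follow immediately: they are integrals over segments of length $O(\tau)$ of combinations of the quantities just estimated (cf. \eqref{eq:204} and $\partial\phi/\partial u=\nu\zeta$, $\partial\phi/\partial v=\kappa\eta$).

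The main obstacle is the behavior at the vertex $N^-$, where $\gamma\to 1$, $\zeta_\ast(0)=1/a_{-0}$, and $\chi_\ast(0)=0$, so the customary contraction scheme fails (as already flagged for lemma \ref{lemma_analogue}). Two things save the argument. First, although $|\Delta\beta|$ is only $O(1)$ in $(l_1+l_2)$, the cancellation $\chi_\ast''(\tau)-\chi_\ast'(\tau)=\int_0^\tau (e^{-\omega_\ast}\Delta\beta+\Delta e^{-\omega_\ast}\beta')\,ds$ gives the extra factor $\tau$, so $|\Delta\chi_\ast|\leq C\tau(l_1+l_2)$, matching \eqref{eq:491}. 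Second, the domain $\mathscr{U}(\tau)$ has area $O(\tau^2)$, so integrals of bounded differences over it remain small as $\tau\to 0$, and the Gronwall loop closes for $\tau_1$ sufficiently small. No additional smallness assumption is needed beyond the one already in force.
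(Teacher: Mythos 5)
Your overall strategy is the one the paper itself follows: convert the already-established boundary estimates for $\Delta\chi_\ast$, $\Delta\beta$, $\Delta\Psi$ into trace estimates for $r_\ast,m_\ast,\nu_\ast,\kappa_\ast,\zeta_\ast,\eta_\ast$ along $\mathscr{B}$ via \eqref{eq:313}, \eqref{eq:320}, \eqref{eq:321}, \eqref{eq:498}, propagate them into $\mathscr{U}(\tau)$ through the characteristic integral representations, and close the coupled system of inequalities by smallness of $\tau$ (every interior integral carries a factor $\tau$, so absorption rather than Gronwall is really what closes it). There is, however, one assertion in your plan that is wrong as stated. The parenthetical ``where the solutions agree'' attached to $C^{\ast-}$ is false: the two solutions share the same data \emph{function} $R$, i.e.\ $r(u,0)=R(\phi(u,0))$ and $\zeta(u,0)=-1/\dot{R}(\phi(u,0))$ for both, but $\phi(\cdot,0)$ is part of the unknown and differs between the two solutions, so $\Delta r(u,0)$, $\Delta\zeta(u,0)$ are not zero --- they are controlled by $\sup|\Delta\phi|$, which is itself one of the quantities being estimated (and $\nu(u,0)$ is not determined by the data at all, only by the solution). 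The paper sidesteps this entirely by propagating \emph{every} quantity from the shock $\mathscr{B}$, whose traces are fixed by the jump conditions: $r=r_\ast(v)-\int_v^u\nu$, $m=m_\ast(v)e^{K}-F$, $\nu=\nu_\ast(u)e^{N-N_\ast}$, $\kappa=\kappa_\ast(v)e^{-K}$, $\phi=\phi_\ast(v)+\int_v^u\nu\zeta$, and $r\eta$ from $r_\ast\eta_\ast$ via \eqref{eq:548}. If you insist on the representation $r(u,v)=r(u,0)+\int_0^v(1-\mu)\kappa\,dv'$, you must carry the extra term $C\sup|\Delta\phi|$; it is absorbable, but it cannot be dropped on the grounds that the data coincide.

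The other place where your sketch is too coarse is $\Delta\zeta$. The paper does not integrate the $\zeta$-equation directly (which would couple $\zeta$ to $\eta$ and back); it estimates $\zeta$ through the auxiliary function $\alpha=r\zeta-\phi$ and its representation \eqref{eq:278}, whose source involves only $\phi$, $N$ and the integral $M$, so that $\eta$ can be handled last. This produces $\sup|\Delta\zeta|\le C\tau(l_1+l_2)+C\tau\sup|\Delta\zeta|$ and hence the sharper bound $\sup|\Delta\zeta|\le C\tau(l_1+l_2)$, which is what makes the $N$, $K$, $M$, $\phi$ estimates in \eqref{eq:496} come out with the extra factor of $\tau$. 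Your symmetric Gronwall loop over all six unknowns would still close, because $\zeta$ and $\eta$ enter each other's representations only under integrals of length $O(\tau)$; but you should note explicitly that the trace $\zeta_\ast$ depends on $\beta$ only through the combination $\beta X$ with $X=1-\rho_\ast=O(\tau^2)$, which is why $|\Delta\zeta_\ast|\le C\tau(l_1+l_2)$ even though $|\Delta\beta|$ is only $O(l_1+l_2)$.
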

\begin{proof}
For a smooth function $g$ defined by the soft-phase solution, such as $r$, $\rho$, $a_-$ let $g_\ast'$ and $g_\ast''$ be the restrictions of $g$ to $\mathscr{B}'$ and $\mathscr{B}''$ respectively. Then we have
\begin{align}
  \label{eq:498}
  |\Delta g_\ast|\leq\Bigg|\int_{\chi_\ast''(\tau)}^{\chi_\ast'(\tau)}\left(\frac{\p g}{\p \chi}\right)(\tau,\chi)d\chi\Bigg|\leq C|\Delta\chi_\ast|\leq C\tau(l_1+l_2),
\end{align}
by the first of \eqref{eq:491}.

Integrating (6.6b) of \cite{I} with respect to $v$ yields
\begin{align}
  \label{eq:499}
  \nu(u,v)=\nu_\ast (u)e^{N(u,v)-N_\ast(u)},
\end{align}
While from \eqref{eq:325}
\begin{align}
  \label{eq:500}
  \kappa(u,v)=\kappa_\ast (v)e^{-K(u,v)}.
\end{align}
Therefore,
\begin{align}
  \label{eq:501}
\Delta\nu(u,v)&=e^{N'(u,v)-N_\ast'(u)}\Delta\nu_\ast(u)+\nu_\ast''(u)\Delta\left(e^{N(u,v)-N_\ast(u)}\right),\\
\Delta\kappa(u,v)&=e^{-K'(u,v)}\Delta\kappa_\ast(v)+K_\ast''(v)\Delta\left(e^{-K(u,v)}\right).  \label{eq:502}
\end{align}
\eqref{eq:313}, \eqref{eq:321} give
\begin{align}
  \label{eq:503}
\Delta\nu_\ast&=\frac{1}{2}(1-\beta')\Delta a_{-\ast}-\frac{1}{2}a_{-\ast}''\Delta\beta,\\
\Delta\kappa_\ast&=-\frac{\Delta a_{-\ast}}{2a_{-\ast}'a_{-\ast}''}(1+\beta')+\frac{1}{2a_{-\ast}''}\Delta\beta.\label{eq:504}
\end{align}
Using \eqref{eq:492} and the fact that by \eqref{eq:498}
\begin{align}
  \label{eq:505}
  |\Delta a_{-\ast}|\leq C\tau(\l_1+l_2),
\end{align}
we deduce
\begin{align}
  \label{eq:506}
  |\Delta\nu_\ast|,|\Delta\kappa_\ast|\leq C(l_1+l_2).
\end{align}
We then obtain
\begin{align}
  \label{eq:507}
  \sup_{\mathscr{U}(\tau)}|\Delta \nu|\leq C(l_1+l_2)+C\sup_{\mathscr{U}(\tau)}|\Delta N|,\qquad  \sup_{\mathscr{U}(\tau)}|\Delta \kappa|\leq C(l_1+l_2)+C\sup_{\mathscr{U}(\tau)}|\Delta K|.
\end{align}
Using the expressions for  $N$ and $K$ as given by \eqref{eq:204}, we get
\begin{align}
  \label{eq:508}
  \sup_{\mathscr{U}(\tau)}|\Delta N|&\leq C\tau\left(\sup_{\mathscr{U}(\tau)}|\Delta r|+\sup_{\mathscr{U}(\tau)}|\Delta m|+\sup_{\mathscr{U}(\tau)}|\Delta \kappa|\right),\\
  \sup_{\mathscr{U}(\tau)}|\Delta K|&\leq C\tau\left(\sup_{\mathscr{U}(\tau)}|\Delta r|+\sup_{\mathscr{U}(\tau)}|\Delta \zeta|+\sup_{\mathscr{U}(\tau)}|\Delta \nu|\right).  \label{eq:509}
\end{align}
Substituting into \eqref{eq:507} we obtain
\begin{align}
  \label{eq:510}
  \sup_{\mathscr{U}(\tau)}|\Delta \nu|&\leq C(l_1+l_2)+C\tau\left(\sup_{\mathscr{U}(\tau)}|\Delta r|+\sup_{\mathscr{U}(\tau)}|\Delta m|+\sup_{\mathscr{U}(\tau)}|\Delta \kappa|\right),\\
  \sup_{\mathscr{U}(\tau)}|\Delta \kappa|&\leq C(l_1+l_2)+C\tau\left(\sup_{\mathscr{U}(\tau)}|\Delta r|+\sup_{\mathscr{U}(\tau)}|\Delta \zeta|+\sup_{\mathscr{U}(\tau)}|\Delta \nu|\right).\label{eq:511}
\end{align}
When $\tau$ is sufficiently small these inequalities imply
\begin{align}
  \label{eq:512}
  \sup_{\mathscr{U}(\tau)}|\Delta \nu|&\leq C(l_1+l_2)+C\tau\left(\sup_{\mathscr{U}(\tau)}|\Delta r|+\sup_{\mathscr{U}(\tau)}|\Delta m|+\tau\sup_{\mathscr{U}(\tau)}|\Delta \zeta|\right),\\
  \sup_{\mathscr{U}(\tau)}|\Delta \kappa|&\leq C(l_1+l_2)+C\tau\left(\sup_{\mathscr{U}(\tau)}|\Delta r|+\sup_{\mathscr{U}(\tau)}|\Delta \zeta|+\tau\sup_{\mathscr{U}(\tau)}|\Delta m|\right).\label{eq:513}
\end{align}
Substituting into \eqref{eq:508}, \eqref{eq:509} then yields
\begin{align}
  \label{eq:514}
\sup_{\mathscr{U}(\tau)}|\Delta N|&\leq C\tau(l_1+l_2)+C\tau\left(\sup_{\mathscr{U}(\tau)}|\Delta r|+\sup_{\mathscr{U}(\tau)}|\Delta m|+\tau \sup_{\mathscr{U}(\tau)}|\Delta \zeta|\right),\\
\sup_{\mathscr{U}(\tau)}|\Delta K|&\leq C\tau(l_1+l_2)+C\tau\left(\sup_{\mathscr{U}(\tau)}|\Delta r|+\sup_{\mathscr{U}(\tau)}|\Delta \zeta|+\tau \sup_{\mathscr{U}(\tau)}|\Delta m|\right).\label{eq:515}
\end{align}
Using
\begin{align}
  \label{eq:516}
  r(u,v)=r_\ast(v)-\int_v^u\nu(u,v')dv'
\end{align}
together with \eqref{eq:498} we obtain
\begin{align}
  \label{eq:517}
\sup_{\mathscr{U}(\tau)}|\Delta r|\leq C\tau(l_1+l_2)+\tau \sup_{\mathscr{U}(\tau)}|\Delta \nu|.
\end{align}

Now, equation (6.5a) of \cite{I} is equivalent to
\begin{align}
  \label{eq:518}
  m(u,v)=m_\ast(v)e^{K(u,v)}-F(u,v),
\end{align}
where
\begin{align}
  F(u,v)=2\pi\int_v^u e^{K(u,v)-K(u',v)}(r^2\nu(\zeta^2+1))(u',v)du'.\notag
\end{align}
We can estimate
\begin{align}
  \label{eq:519}
  \sup_{\mathscr{U}(\tau)}|\Delta F|\leq C\tau\left(\sup_{\mathscr{U}(\tau)}|\Delta K|+\sup_{\mathscr{U}(\tau)}|\Delta r|+\sup_{\mathscr{U}(\tau)}|\Delta \nu|+\sup_{\mathscr{U}(\tau)}|\Delta \zeta|\right).
\end{align}
Together with \eqref{eq:498} in the case $g=m$ we obtain
\begin{align}
  \label{eq:520}
  \sup_{\mathscr{U}(\tau)}|\Delta m|\leq C\tau(l_1+l_2)+C\sup_{\mathscr{U}(\tau)}|\Delta K|+C\tau\left(\sup_{\mathscr{U}(\tau)}|\Delta r|+\sup_{\mathscr{U}(\tau)}|\Delta \nu|+\sup_{\mathscr{U}(\tau)}|\Delta \zeta|\right).
\end{align}

Substituting \eqref{eq:512} into \eqref{eq:517} and \eqref{eq:515}  into \eqref{eq:520} we deduce, when $\tau$ is small enough,
\begin{align}
  \label{eq:521}
  \sup_{\mathscr{U}(\tau)}|\Delta r| &\leq C\tau(l_1+l_2)+C\tau^2\left(\sup_{\mathscr{U}(\tau)}|\Delta m|+\tau\sup_{\mathscr{U}(\tau)}|\Delta \zeta|\right),\\
\sup_{\mathscr{U}(\tau)}|\Delta m|&\leq C\tau(l_1+l_2)+C\tau\left(\sup_{\mathscr{U}(\tau)}|\Delta r|+\sup_{\mathscr{U}(\tau)}|\Delta \zeta|\right).
\end{align}
When $\tau$ is suitably small these inequalities imply
\begin{align}
  \label{eq:522}
  \sup_{\mathscr{U}(\tau)}|\Delta r|&\leq C\tau(l_1+l_2) +C\tau^3\sup_{\mathscr{U}(\tau)}|\Delta \zeta|,\\
  \sup_{\mathscr{U}(\tau)}|\Delta m|&\leq C\tau(l_1+l_2) +C\tau\sup_{\mathscr{U}(\tau)}|\Delta \zeta|.\label{eq:523}
\end{align}
Substituting these into \eqref{eq:512}, \eqref{eq:513}, \eqref{eq:514}, \eqref{eq:515} yields
\begin{align}
  \label{eq:524}
\sup_{\mathscr{U}(\tau)}|\Delta \nu|&\leq C(l_1+l_2) +C\tau^2\sup_{\mathscr{U}(\tau)}|\Delta \zeta|,\\
\sup_{\mathscr{U}(\tau)}|\Delta \kappa|&\leq C(l_1+l_2) +C\tau\sup_{\mathscr{U}(\tau)}|\Delta \zeta|,\label{eq:525}\\
\sup_{\mathscr{U}(\tau)}|\Delta N|&\leq C\tau(l_1+l_2) +C\tau^2\sup_{\mathscr{U}(\tau)}|\Delta \zeta|,\label{eq:526}\\
\sup_{\mathscr{U}(\tau)}|\Delta K|&\leq C\tau(l_1+l_2) +C\tau\sup_{\mathscr{U}(\tau)}|\Delta \zeta|.\label{eq:527}
\end{align}

We have
\begin{align}
  \label{eq:528}
  \phi(u,v)=\phi_\ast(v)+\int_v^u(\nu\zeta)(u',v)du'.
\end{align}
Using the second of \eqref{eq:491}, we can estimate
\begin{align}
  \label{eq:529}
  \sup_{\mathscr{U}(\tau)}|\Delta \phi|\leq C\tau^5(l_1+l_2)+C\tau \sup_{\mathscr{U}(\tau)}|\Delta \zeta|+C\tau\sup_{\mathscr{U}(\tau)}|\Delta \nu|.
\end{align}
Taking into account \eqref{eq:524} we obtain
\begin{align}
  \label{eq:530}
  \sup_{\mathscr{U}(\tau)}|\Delta \phi|\leq C\tau(l_1+l_2)+C\tau \sup_{\mathscr{U}(\tau)}|\Delta \zeta|.
\end{align}

We now estimate the difference $\Delta \zeta$ in $\mathscr{U}(\tau)$. We recall \eqref{eq:320}:
\begin{align}
  \label{eq:531}
  \zeta_\ast=\frac{1}{a_{-\ast}}\sqrt{\frac{1+\beta-2\rho_\ast\beta}{1-\beta}}.
\end{align}
Using \eqref{eq:492} as well as \eqref{eq:498} in the cases $g=a_-,\rho$, we find
\begin{align}
  \label{eq:532}
|\Delta\zeta_\ast|\leq C\tau(l_1+l_2).
\end{align}
Recalling that $\alpha_\ast=r_\ast\zeta_\ast-\phi_\ast$, and using \eqref{eq:498} in the case $g=r$, we get
\begin{align}
  \label{eq:533}
  |\Delta\alpha_\ast|\leq \tau C(l_1+l_2).
\end{align}
We recall \eqref{eq:278}:
\begin{align}
  \label{eq:534}
  \alpha(u,v)=\left(\alpha_\ast(u)e^{N_\ast(u)}+M_\ast(u)-M(u,v)\right)e^{-N(u,v)}.
\end{align}
Using \eqref{eq:533} we deduce
\begin{align}
  \label{eq:535}
  \sup_{\mathscr{U}(\tau)}|\Delta\alpha|\leq C\tau(l_1+l_2)+\sup_{\mathscr{U}(\tau)}| \Delta N|+\sup_{\mathscr{U}(\tau)}| \Delta M|.
\end{align}
In view of the fact that $r\zeta=\alpha+\phi$ and \eqref{eq:530}, this implies
\begin{align}
  \label{eq:536}
  \sup_{\mathscr{U}(\tau)}| \Delta\zeta|\leq C\left(\tau(l_1+l_2)+\sup_{\mathscr{U}(\tau)}| \Delta N|+\sup_{\mathscr{U}(\tau)}| \Delta M|+\sup_{\mathscr{U}(\tau)}| \Delta r|+\sup_{\mathscr{U}(\tau)}| \Delta \phi|\right).
\end{align}
Substituting the estimates \eqref{eq:522}, \eqref{eq:526}, \eqref{eq:530} into \eqref{eq:536} yields
\begin{align}
  \label{eq:537}
  \sup_{\mathscr{U}(\tau)}| \Delta \zeta|\leq C\left(\tau(l_1+l_2)+\tau\sup_{\mathscr{U}(\tau)}| \Delta \zeta|+\sup_{\mathscr{U}(\tau)}| \Delta M|\right).
\end{align}

We have to consider the integral $M$:
\begin{align}
  \label{eq:538}
  M(u,v)=\int_0^v\left(\phi(\mu-4\pi r^2)\frac{\kappa}{r}\right)(u,v')e^{N(u,v')}dv'.
\end{align}
Since $\sup_{\mathscr{U}(\tau)}| \phi|\leq C\tau$ we deduce
\begin{align}
  \label{eq:539}
  \sup_{\mathscr{U}(\tau)}| \Delta M|\leq C\left(\tau\sup_{\mathscr{U}(\tau)}| \Delta \phi|+\tau^2\sup_{\mathscr{U}(\tau)}| \Delta \kappa|+\tau^2\sup_{\mathscr{U}(\tau)}| \Delta r|+\tau^2\sup_{\mathscr{U}(\tau)}| \Delta m|+\tau^2\sup_{\mathscr{U}(\tau)}| \Delta N|\right).
\end{align}
Substituting the estimates \eqref{eq:522}, \eqref{eq:523}, \eqref{eq:525}, \eqref{eq:526}, \eqref{eq:530} into \eqref{eq:539} we then obtain
\begin{align}
  \label{eq:540}
  \sup_{\mathscr{U}(\tau)}| \Delta M|\leq C\tau^2(l_1+l_2)+C\tau^2\sup_{\mathscr{U}(\tau)}| \Delta \zeta|.
\end{align}
Substituting in turn \eqref{eq:540} into \eqref{eq:537} yields the inequality
\begin{align}
  \label{eq:541}
  \sup_{\mathscr{U}(\tau)}| \Delta \zeta|\leq C\tau(l_1+l_2)+C\tau\sup_{\mathscr{U}(\tau)}| \Delta \zeta|,
\end{align}
which for $\tau$ suitably small implies
\begin{align}
  \label{eq:542}
  \sup_{\mathscr{U}(\tau)}| \Delta \zeta|\leq C\tau(l_1+l_2).
\end{align}
In view of this estimate, the inequalities \eqref{eq:522}, \eqref{eq:523}, \eqref{eq:524}, \eqref{eq:525}, \eqref{eq:526}, \eqref{eq:527}, \eqref{eq:530}, \eqref{eq:540} reduce to
\begin{alignat}{3}
  \sup_{\mathscr{U}(\tau)}| \Delta r|&\leq C(l_1+l_2),\qquad & \sup_{\mathscr{U}(\tau)}| \Delta m|&\leq C(l_1+l_2),\label{eq:543}\\
  \sup_{\mathscr{U}(\tau)}| \Delta \nu|&\leq C(l_1+l_2),\qquad &\sup_{\mathscr{U}(\tau)}| \Delta \kappa|&\leq C(l_1+l_2),    \label{eq:544}\\
  \sup_{\mathscr{U}(\tau)}| \Delta N|&\leq C\tau(l_1+l_2),\qquad &\sup_{\mathscr{U}(\tau)}| \Delta K|&\leq C\tau(l_1+l_2),  \label{eq:545}\\
  \sup_{\mathscr{U}(\tau)}| \Delta \phi|&\leq C\tau(l_1+l_2),\qquad &\sup_{\mathscr{U}(\tau)}| \Delta M|&\leq C\tau^2(l_1+l_2).\label{eq:546}
\end{alignat}

We write (6.7a) of \cite{I} in the form
\begin{align}
  \label{eq:547}
  \frac{\partial(r\eta)}{\partial u}-4\pi r\nu\zeta^2(r\eta)=-\nu(1-\mu)\zeta.
\end{align}
From this we have
\begin{align}
  \label{eq:548}
  (r\eta)(u,v)=e^{K(u,v)}\left[(r_\ast \eta_\ast)(v)-J(u,v)\right],
\end{align}
where
\begin{align}
  \label{eq:549}
  J(u,v)=\int_u^ve^{-K(u',v)}((1-\mu)\nu\zeta)(u',v)du',\qquad\eta_\ast=a_{-\ast}\sqrt{\frac{1-\beta+2\rho_\ast\beta}{1+\beta}},
\end{align}
(the second is (4.5b) of \cite{II}). We can estimate
\begin{align}
  \label{eq:550}
  \sup_{\mathscr{U}(\tau)}|\Delta J|\leq C\tau\left(\sup_{\mathscr{U}(\tau)}|\Delta K|+\sup_{\mathscr{U}(\tau)}|\Delta m|+\sup_{\mathscr{U}(\tau)}|\Delta r|+\sup_{\mathscr{U}(\tau)}|\Delta \nu|+\sup_{\mathscr{U}(\tau)}|\Delta \zeta|\right).
\end{align}
Using \eqref{eq:492}, \eqref{eq:498}, \eqref{eq:543}, \eqref{eq:544}, \eqref{eq:545}, it follows
\begin{align}
  \label{eq:551}
  \sup_{\mathscr{U}(\tau)}| \Delta \eta|\leq C(l_1+l_2).
\end{align}
This concludes the proof of the proposition.
\end{proof}

To complete the uniqueness proof, we must estimate $l_1$ and $l_2$.
\begin{proposition}
It holds
  \begin{align}
    \label{eq:552}
    \left|\frac{d\Delta\{M_\ast\}_3}{d\tau}\right|&\leq C\tau^2(l_1+l_2),\\
    \label{eq:553}
    \left|\frac{d\Delta\{N_\ast\}_3}{d\tau}\right|&\leq C\tau^2(l_1+l_2).
  \end{align}
  \label{proposition7}
\end{proposition}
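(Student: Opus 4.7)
The plan is to refine the derivation carried out in the proof of proposition \ref{proposition10}, this time tracking the $O(\tau^2)$ remainders in explicit form and estimating their $\Delta$ using proposition \ref{proposition8} together with the preliminary estimates \eqref{eq:487}, \eqref{eq:491}--\eqref{eq:493}. I outline the strategy for $N_\ast$; the case of $M_\ast$ is analogous.

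First, I would start from the identity \eqref{eq:335},
\begin{align*}
  \frac{dN_\ast}{d\tau}(\tau)=f(\tau,\tau)+\int_0^\tau\frac{\partial f}{\partial u}(\tau,v)\,dv,\qquad f=\frac{\mu-4\pi r^2}{r}\kappa,
\end{align*}
and subtract the explicit polynomial $dN_\ast^{[2]}/d\tau=N_\ast^{(1)\prime}+N_\ast^{(2)\prime}$ (whose coefficients are the common $N^-$ values and hence are identical for the two solutions). The derivation of proposition \ref{proposition10} proceeded by (i) writing $\kappa(\tau,v)=\kappa_\ast(v)e^{-K(\tau,v)}$, (ii) approximating $G(\tau,v)$ by $G_\ast(\tau)+\mathcal{O}(v-\tau)$, (iii) expanding $\psi_\ast$, $\tilde\psi_\ast$, $e^{\omega_\ast}$, $a_{-\ast}$ in $(\tau,\chi_\ast)$ up to first order, and (iv) using Taylor's theorem on the result. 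Each step introduces an error which, via Taylor's theorem with integral remainder combined with the hard-phase evolution equations of \cite{I}, can be recast in the schematic form
\begin{align*}
  \tau^a\chi_\ast^b\,H(\tau,\chi_\ast,d\chi_\ast/d\tau,\beta,r_\ast,m_\ast,\nu_\ast,\kappa_\ast,\zeta_\ast,\eta_\ast,N_\ast,K_\ast,\ldots),\qquad a+b\geq 2,
\end{align*}
where $H$ is a smooth function uniformly bounded on $\mathscr{U}(\hat{\tau})$.

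Second, I would apply proposition \ref{proposition8}: each hard-phase ingredient $r$, $m$, $\nu$, $\kappa$, $\zeta$, $\eta$ satisfies $|\Delta\cdot|\leq C(l_1+l_2)$ on $\mathscr{U}(\hat{\tau})$, together with $|\Delta\beta|\leq C(l_1+l_2)$, $|\Delta(d\chi_\ast/d\tau)|\leq C(l_1+l_2)$ and $|\Delta\chi_\ast|\leq C\tau(l_1+l_2)$ from \eqref{eq:487}, \eqref{eq:492}, \eqref{eq:493}, and $|\Delta g_\ast|\leq C\tau(l_1+l_2)$ for any smooth soft-phase function by \eqref{eq:498}. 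Consequently $|\Delta H|\leq C(l_1+l_2)$, while $|\Delta(\tau^a\chi_\ast^b)|\leq C\tau^{a+b-1}|\Delta\chi_\ast|\leq C\tau^{a+b}(l_1+l_2)$. Since $a+b\geq 2$, each remainder contributes at most $C\tau^2(l_1+l_2)$ to $d\Delta\{N_\ast\}_3/d\tau$, establishing the desired bound.

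Third, for $M_\ast$ the same strategy applies with $g=\phi fe^N$ replacing $f$. The additional factor $\phi$ is handled by writing $\phi_\ast(\tau)=\tau+\Psi(\tau)$ with $\Psi=\mathcal{O}(\tau^5)$, so that the leading behavior is captured by a known polynomial in $\tau$ and the correction is controlled by $|\Delta\Psi|\leq C\tau^5(l_1+l_2)$ from \eqref{eq:484}, while $\phi(\tau,v)-\phi_\ast(\tau)$ can be written as $\int_\tau^v(\kappa\eta)(\tau,v'')dv''$ to get a factor of $(v-\tau)$ whose difference is controlled by $|\Delta\kappa|,|\Delta\eta|\leq C(l_1+l_2)$. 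A completely analogous book-keeping yields $|d\Delta\{M_\ast\}_3/d\tau|\leq C\tau^2(l_1+l_2)$.

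The main obstacle is the verification, rather than the estimate itself: namely, making sure every hidden $O(\tau^2)$ remainder appearing implicitly in \eqref{eq:314}, \eqref{eq:323}, \eqref{eq:329}, \eqref{eq:330}, \eqref{eq:336}, \eqref{eq:340} and \eqref{eq:346} can genuinely be recast in the schematic form $\tau^a\chi_\ast^b H$ with $a+b\geq 2$, with no hidden contribution from factors like $(d\chi_\ast/d\tau)^k$ or $\beta^k$ lacking a compensating power of $\tau$. In particular, the term $\int_0^\tau\kappa(\tau,v)\,dv$ must be treated by writing its leading approximation $(2a_{-0})^{-1}(\tau+\chi_\ast/2)$ as already contained in $dN_\ast^{[2]}/d\tau$, with the residual $\int_0^\tau[\kappa(\tau,v)-\kappa_\ast(v)+\kappa_\ast(v)-(1+\beta(v))/(2a_{-0})]\,dv$ absorbed into the Taylor remainder framework above and estimated using proposition \ref{proposition8} and the bounds on $K$.
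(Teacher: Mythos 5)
Your proposal follows essentially the same route as the paper's proof: subtract the explicit quadratic polynomial part (whose coefficients are common $N^-$ values, so that its contribution to the difference cancels against the leading part of $dN_\ast/d\tau$ resp.\ $dM_\ast/d\tau$ via the bounds on $\Delta\chi_\ast$ and $\Delta(d\chi_\ast/d\tau)$), and then estimate the $\Delta$ of each $\mathcal{O}(\tau^2)$ remainder using proposition \ref{proposition8} together with \eqref{eq:491}--\eqref{eq:493}, treating $\int_0^\tau\kappa\,dv$, the factor $\phi=\tau+\Psi$, and the terms $F(\tau,v)-F(\tau,\tau)$, $G(\tau,v)-G(\tau,\tau)$ exactly as the paper does (the latter via $v$-derivatives expressible through the basic equations of section 6 of \cite{I}). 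The paper simply carries out this bookkeeping explicitly, decomposing $dM_\ast/d\tau$ into $g(\tau,\tau)+I_1+I_2$ and isolating the remainders $R_0$, $R$, $\bar\kappa$, $\tilde I_1$, $I_3$, each of which falls under your schematic estimate.
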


\begin{proof}
We recall \eqref{eq:312}:
\begin{align}
  \label{eq:554}
  \frac{dM_\ast}{d\tau}=g(\tau,\tau)+\int_0^\tau \frac{\partial g}{\partial u}(\tau,v)dv,
\end{align}
where $g=\phi fe^N$, $f=\frac{\mu-4\pi r^2}{r}\kappa$ (see \eqref{eq:274}). Recalling the definition of $\psi$ (see the first of \eqref{eq:343}) as well as \eqref{eq:301} and \eqref{eq:313} we have
\begin{align}
  \label{eq:555}
  g(\tau,\tau)=(\tau+\Psi)e^{N_\ast}\frac{\psi_\ast}{2}(1+\beta),
\end{align}
which we rewrite as
\begin{align}
  \label{eq:556}
  g(\tau,\tau)=\left\{\frac{\psi_0}{2}\tau+\left(\frac{\psi_\ast}{2}-\frac{\psi_0}{2}\right)\tau+\frac{\psi_\ast}{2}\left[\left(e^{N_\ast}-1\right)\tau+e^{N_\ast}\Psi\right]\right\}(1+\beta).
\end{align}
Since
\begin{align}
  \label{eq:557}
  \beta-\frac{1}{2}\frac{d\chi_\ast}{d\tau}=\left(e^{\omega_\ast}-e^{\omega_0}\right)\frac{d\chi_\ast}{d\tau}=\bar{Q}_1\frac{d\chi_\ast}{d\tau},
\end{align}
we have
\begin{align}
  \label{eq:558}
  g(\tau,\tau)=\frac{\psi_0}{2}\left(1+\frac{1}{2}\frac{d\chi_\ast}{d\tau}\right)\tau+\bar{Q}_2+\bar{Q}_2\frac{d\chi_\ast}{d\tau}+R_0,
\end{align}
where
\begin{align}
  \label{eq:559}
  R_0:=\left[\left(e^{N_\ast}-1\right)\tau+e^{N_\ast}\Psi\right]\frac{\psi_\ast}{2}(1+\beta).
\end{align}
Using now the second of \eqref{eq:491} together with \eqref{eq:492} and the first of \eqref{eq:496} we get
\begin{align}
  \label{eq:560}
  |\Delta R_0|\leq C\tau^2(l_1+l_2),
\end{align}
which, together with \eqref{eq:460}, \eqref{eq:493}, yields
\begin{align}
  \label{eq:561}
  \left|\Delta\left[g(\tau,\tau)-\frac{\psi_0}{2}\left(1+\frac{1}{2}\frac{d\chi_\ast}{d\tau}\right)\tau\right]\right|\leq C\tau^2(l_1+l_2).
\end{align}

Now we look at
\begin{align}
  \label{eq:562}
  \int_0^\tau\frac{\partial g}{\partial u}(\tau,v)dv=I_1+I_2,
\end{align}
where (see \eqref{eq:316})
\begin{align}
  \label{eq:563}
  I_1:=\int_0^\tau\left(\frac{\p \phi}{\p u}fe^N\right)(\tau,v)dv,\qquad I_2:=\int_0^\tau \left[\phi \,e^N\left(\frac{\p f}{\p u}+f\frac{\partial N}{\partial u}\right)\right](\tau,v)dv.
\end{align}
We first look at $I_1$. We recall the function $F$ (see \eqref{eq:318})
\begin{align}
  \label{eq:564}
  F:=\frac{\partial \phi}{\partial u}\frac{\mu-4\pi r^2}{r}e^N
\end{align}
and rewrite
\begin{align}
  \label{eq:565}
  I_1=F_\ast\int_0^\tau \kappa(\tau,v)dv+\tilde{I}_1,
\end{align}
where
\begin{align}
  \label{eq:566}
  \tilde{I}_1:=\int_0^\tau\left[F(\tau,v)-F(\tau,\tau)\right]\kappa(\tau,v)dv.
\end{align}
We define $R_1$, $R_2$, $R_3$ by
\begin{align}
  \label{eq:567}
  R_1&:=\left(\frac{\partial\phi}{\partial u}\right)_\ast-\frac{1}{2}(1-\beta),\\
  R_2&:=\frac{\mu_\ast-4\pi r_\ast^2}{r_\ast}-\frac{\mu_0-4\pi r_0^2}{r_0},\\
  R_3&:=e^{N_\ast}-1.
\end{align}
We note that $R_i=\mathcal{O}(\tau):i=1,2,3$ and $R_2=\bar{Q}_1$. We have
\begin{align}
  \label{eq:568}
  F_\ast=\frac{1}{2}(1-\beta)\frac{\mu_0-4\pi r_0^2}{r_0}+R,
\end{align}
where
\begin{align}
  \label{eq:569}
  R=R_1\frac{\mu_\ast-4\pi r_\ast^2}{r_\ast}+\frac{1}{2}(1-\beta)R_2e^{N_\ast}+\left(\frac{\partial \phi}{\partial u}\right)_\ast\frac{\mu_0-4\pi r_0^2}{r_0}R_3+R_1R_2R_3.
\end{align}
From (see \eqref{eq:500})
\begin{align}
  \label{eq:570}
  \kappa(\tau,v)=\kappa_\ast (v)e^{-K(\tau,v)},
\end{align}
together with (see \eqref{eq:313} with \eqref{eq:328})
\begin{align}
  \label{eq:571}
  \kappa_\ast=\frac{1}{2a_{-0}}\left(1+\frac{1}{2}\frac{d\chi_\ast}{d\tau}\right)+\bar{Q}_1+\bar{Q}_1\frac{d\chi_\ast}{d\tau},
\end{align}
we deduce
\begin{align}
  \label{eq:572}
  \kappa(\tau,v)=\frac{1}{2a_{-0}}\left(1+\frac{1}{2}\frac{d\chi_\ast}{d\tau}(v)\right)+\bar{\kappa}(\tau,v),
\end{align}
where
\begin{align}
  \label{eq:573}
  \bar{\kappa}(\tau,v):=\bar{Q}_1(v)+\bar{Q}_1(v)\frac{d\chi_\ast}{d\tau}(v)-\kappa_\ast(v)\left(1-e^{-K(\tau,v)}\right).
\end{align}
We note that $\bar{\kappa}=\mathcal{O}(\tau)$. Substituting \eqref{eq:568}, \eqref{eq:572} into \eqref{eq:565} it follows
\begin{align}
  \label{eq:574}
  I_1&=\frac{\mu_0-4\pi r_0^2}{4a_{-0}r_0}\left(1-\frac{1}{2}\frac{d\chi_\ast}{d\tau}\right)\left(\tau+\frac{\chi_\ast}{2}\right)\notag\\
&\qquad+\frac{\mu_0-4\pi r_0^2}{2r_0}\left(1+(\bar{Q}_1-\tfrac{1}{2})\frac{d\chi_\ast}{d\tau}\right)\int_0^\tau\bar{\kappa}(\tau,v)dv+\bar{Q}_2\frac{d\chi_\ast}{d\tau}+R\int_0^\tau\kappa(\tau,v)dv+\tilde{I}_1.
\end{align}

What is now needed are estimates for the differences $\Delta R$, $\Delta\bar{\kappa}$ and $\Delta \tilde{I}_1$. We start with $|\Delta R|$. From \eqref{eq:569}, using proposition \ref{proposition8} (recall that $(\partial\phi/\partial u)_\ast=\nu_\ast\zeta_\ast$),
\begin{align}
  \label{eq:575}
  |\Delta R|&\leq C\left\{|\Delta R_1|+|\Delta R_2|+|\Delta R_3|+\tau\left[(l_1+l_2)+|\Delta\beta|+|\Delta N_\ast|\right]\right\}\notag\\
&\leq C\left\{|\Delta R_1|+|\Delta R_2|+|\Delta R_3|+\tau(l_1+l_2)\right\},
\end{align}
where for the second inequality we used \eqref{eq:492} and the first of \eqref{eq:496}. Using $\partial \phi/\partial u=\nu\zeta$ together with \eqref{eq:320}, \eqref{eq:321} we have (recall that $X=1-\rho_\ast=\mathcal{O}(\tau^2)$)
\begin{align}
  \label{eq:576}
  R_1=\frac{1}{2}(1-\beta)\left(\sqrt{1+\frac{2\beta X}{1-\beta}}-1\right).
\end{align}
Therefore,
\begin{align}
  \label{eq:577}
  |\Delta R_1|\leq C\left(\tau^2|\Delta\beta|+|\Delta X|\right).
\end{align}
From (using \eqref{eq:460})
\begin{align}
  \label{eq:578}
  |\Delta X|=|\bar{Q}_2|\leq C \tau|\Delta\chi_\ast|\leq C\tau^2(l_1+l_2),
\end{align}
we obtain
\begin{align}
  \label{eq:579}
  |\Delta R_1|\leq C\tau^2(l_1+l_2).
\end{align}
For the difference of $R_2$ we have, using the first of \eqref{eq:491}
\begin{align}
  \label{eq:580}
  |\Delta R_2|= |\Delta \bar{Q}_1|\leq C|\Delta\chi_\ast|\leq C\tau(l_1+l_2),
\end{align}
while for the difference of $R_3$ the first of \eqref{eq:496} implies
\begin{align}
  \label{eq:581}
  |\Delta R_3|\leq C|\Delta N_\ast|\leq C\tau (l_1+l_2).
\end{align}
Therefore,
\begin{align}
  \label{eq:582}
  |\Delta R|\leq C\tau(l_1+l_2).
\end{align}
Together with $|\Delta \kappa|\leq C(l_1+l_2)$, we obtain
\begin{align}
  \label{eq:583}
  \left|\Delta\left(R\int_0^\tau\kappa(\tau,v)dv\right)\right|\leq C\tau^2(l_1+l_2).
\end{align}
From \eqref{eq:573},
\begin{align}
  \label{eq:584}
  |\Delta\bar{\kappa}|&\leq C\left\{|\Delta\bar{Q}_1|+\tau\left|\frac{d\Delta\chi_\ast}{d\tau}\right|+\tau|\Delta\chi_\ast|+|\Delta K|\right\}\notag\\
&\leq C\tau(l_1+l_2),
\end{align}
where for the second inequality we used \eqref{eq:460}, \eqref{eq:493}, the first of \eqref{eq:491} and the second of \eqref{eq:496}. Now we turn to $\Delta\tilde{I}_1$. We write
\begin{align}
  \label{eq:585}
  F(\tau,v)-F(\tau,\tau)=\int_\tau^v\frac{\partial F}{\partial v}(\tau,v')dv'.
\end{align}
Now, $\partial F/\partial v$ can be expressed, through the basic equations of section 6 of \cite{I} (and also using the expression for $N$ as given by \eqref{eq:204}), in terms of $r$, $m$, $\nu$, $\kappa$, $\zeta$, $\eta$ and $N$. Therefore, by proposition \ref{proposition8} we obtain
\begin{align}
  \label{eq:586}
  \left|\Delta\left[F(\tau,v)-F(\tau,\tau)\right]\right|\leq C\tau(l_1+l_2).
\end{align}
Hence
\begin{align}
  \label{eq:587}
  |\Delta\tilde{I}_1|\leq C\tau^2(l_1+l_2).
\end{align}
Using now \eqref{eq:583}, \eqref{eq:584} and \eqref{eq:587} in \eqref{eq:574} together with \eqref{eq:460} and the first of \eqref{eq:491} we find
\begin{align}
  \label{eq:588}
  \left|\Delta\left[I_1-\frac{\mu_0-4\pi r_0^2}{4a_{-0}r_0}\left(1-\frac{1}{2}\frac{d\chi_\ast}{d\tau}\right)\left(\tau+\frac{\chi_\ast}{2}\right)\right]\right|\leq C\tau^2(l_1+l_2).
\end{align}

We now look at $I_2$. Since $f$ as well as $\partial f/\partial u$ can be expressed in terms of $r$, $m$, $\nu$, $\kappa$, $\zeta$, and $\eta$ we have, by proposition \ref{proposition8},
\begin{align}
  \label{eq:589}
  |\Delta f|\leq C(l_1+l_2),\qquad \left|\frac{\partial\Delta f}{\partial u}\right|\leq C(l_1+l_2).
\end{align}
From the second of these we obtain
\begin{align}
  \label{eq:590}
  \left|\frac{\partial\Delta N}{\partial u}(u,v)\right|\leq\int_0^v\left|\frac{\partial \Delta f}{\partial u}\right|(u,v')dv'\leq Cv(l_1+l_2),
\end{align}
together with \eqref{eq:589} and the first and third of \eqref{eq:496} we obtain
\begin{align}
  \label{eq:591}
  |\Delta I_2|\leq C\tau^2(l_1+l_2).
\end{align}

In view of the estimates \eqref{eq:561}, \eqref{eq:588}, \eqref{eq:591} and the expression for $dM_\ast^{[2]}/d\tau$ (see the first line of \eqref{eq:333}) it follows
\begin{align}
  \label{eq:592}
  \left|\frac{d\Delta\{M_\ast\}_3}{d\tau}\right|\leq C\tau^2(l_1+l_2).
\end{align}

We now turn to $|d\Delta\{N_\ast\}_3/d\tau|$. Let us recall the functions $\psi$ and $G$:
\begin{align}
  \label{eq:593}
  \psi:=\frac{\mu-4\pi r^2}{ra_-},\qquad G:=-\frac{\nu}{r^2}\left[4\pi \zeta^2r^2(1-4\pi r^2)-2\mu\right].
\end{align}
We recall \eqref{eq:335}:
\begin{align}
  \label{eq:608}
  \frac{dN_\ast}{d\tau}(\tau)&=f(\tau,\tau)+\int_0^\tau \frac{\p f}{\p u}(\tau,v)dv.
\end{align}
The first term is equal to $(\psi_\ast/2)(1+\beta)$, while the second term is equal to $\int_0^\tau (G\kappa)(\tau,v)dv$ (see \eqref{eq:336}, \eqref{eq:337}). We rewrite \eqref{eq:608} as
\begin{align}
  \label{eq:594}
  \frac{dN_\ast}{d\tau}=\frac{1}{2}\psi_\ast(1+\beta)+G_\ast\int_0^\tau \kappa(\tau,v)dv+I_3,
\end{align}
where
\begin{align}
  \label{eq:595}
  I_3:=\int_0^\tau\left[G(\tau,v)-G(\tau,\tau)\right]\kappa(\tau,v)dv.
\end{align}
We note that
\begin{align}
  \label{eq:596}
  \psi_\ast-\left(\psi_0+\left(\frac{\partial\psi}{\partial\tau}\right)_0\tau+\left(\frac{\partial\psi}{\partial\chi}\right)_0\chi_\ast\right)&=\bar{Q}_2,\\
e^{\omega_\ast}-\left(e^{\omega_0}+\left(\frac{\partial e^\omega}{\partial\tau}\right)_0\tau+\left(\frac{\partial e^\omega}{\partial\chi}\right)_0\chi_\ast\right)&=\bar{Q}_2.
\end{align}
Therefore
\begin{align}
  \label{eq:597}
  \frac{1}{2}\psi_\ast(1+\beta)&=\frac{1}{2}\left[\psi_0+\left(\frac{\partial\psi}{\partial\tau}\right)_0\tau+\left(\frac{\partial\psi}{\partial\chi}\right)_0\chi_\ast\right]\left[1+\left(e^{\omega_0}+\left(\frac{\partial e^\omega}{\partial \tau}\right)_0\tau+\left(\frac{\partial e^\omega}{\partial \chi}\right)_0\chi_\ast\right)\frac{d\chi_\ast}{d\tau}\right]\notag\\
&\qquad+\bar{Q}_2+\bar{Q}_2\frac{d\chi_\ast}{d\tau}.
\end{align}
We recall \eqref{eq:342}:
\begin{align}
  \label{eq:598}
  G_\ast=-\frac{2\pi(1-4\pi r_\ast^2)}{a_{-\ast}}(1+\beta-2\rho_\ast\beta)+\frac{\mu_\ast a_{-\ast}}{r_\ast^2}(1-\beta).
\end{align}
Therefore
\begin{align}
  \label{eq:599}
  G_\ast=\tilde{\psi}_0\left(1-\frac{1}{2}\frac{d\chi_\ast}{d\tau}\right)+\bar{Q}_1+\bar{Q}_1\frac{d\chi_\ast}{d\tau},
\end{align}
where (see the second of \eqref{eq:343})
\begin{align}
  \label{eq:610}
  \tilde{\psi}:=-\frac{2\pi(1-4\pi r^2)}{a_-}+\frac{\mu a_-}{r^2}.
\end{align}
Substituting \eqref{eq:572}, \eqref{eq:597}, \eqref{eq:599} into \eqref{eq:594}, using \eqref{eq:346} and taking into account the above estimates to estimate $d\Delta\chi_\ast/d\tau$ as well as \eqref{eq:460} we obtain
\begin{align}
  \label{eq:600}
  \left|\frac{d\Delta\{N\}_3}{d\tau}\right|\leq C\tau^2(l_1+l_2)+|\Delta I_3|+C\left|\Delta\int_0^\tau \kappa_\ast(v)\left(1-e^{-K(\tau,v)}\right)dv\right|.
\end{align}
By proposition \ref{proposition8} the last term in \eqref{eq:600} is bounded from above in absolute value by $C\tau^2(l_1+l_2)$. We conclude
\begin{align}
  \label{eq:601}
  \left|\frac{d\Delta\{N\}_3}{d\tau}\right|\leq C\tau^2(l_1+l_2)+|\Delta I_3|.
\end{align}

We now consider the difference of $I_3$. We rewrite
\begin{align}
  \label{eq:602}
  G(\tau,v)-G(\tau,\tau)=-\int_v^\tau\frac{\partial G}{\partial v}(\tau,v')dv'.
\end{align}
$\partial G/\partial v$ can be expressed through the basic equations of section 6 of \cite{I} in terms of $r$, $m$, $\nu$, $\kappa$, $\zeta$ and $\eta$. Therefore, from proposition \ref{proposition8},
\begin{align}
  \label{eq:603}
  \left|\Delta\left(G(\tau,v)-G(\tau,\tau)\right)\right|\leq C\tau(l_1+l_2),
\end{align}
which implies
\begin{align}
  \label{eq:604}
  |\Delta I_3|\leq C\tau^2(l_1+l_2).
\end{align}
We conclude
\begin{align}
  \label{eq:605}
  \left|\frac{d\Delta\{N\}_3}{d\tau}\right|\leq C\tau^2(l_1+l_2).
\end{align}
\end{proof}

From proposition \ref{proposition7} we deduce
\begin{align}
  \label{eq:606}
  |\Delta \{M_\ast\}_3|\leq C\tau^3(l_1+l_2),\qquad |\Delta \{N_\ast\}_3|\leq C\tau^3(l_1+l_2).
\end{align}
Together with the definitions of $l_1$, $l_2$, this yields
\begin{align}
  \label{eq:607}
l_1+l_2=\sup_{[0,\tau_1]}\left(\frac{|\Delta \{M_\ast\}_3|}{\tau^2}\right)+\sup_{[0,\tau_1]}\left(\frac{|\Delta \{N_\ast\}_3|}{\tau^2}\right)\leq C\tau_1(l_1+l_2),
\end{align}
which when $C\tau_1<1$ implies that $l_1+l_2=0$ which in turn implies $l_1=l_2=0$.

In view of propostition \ref{proposition8} together with the first of \eqref{eq:491} we conclude that for $\tau_1$ suitably small, the solutions $\mathscr{B'}$, $(r',m',\nu',\kappa',\zeta',\eta')$ and $\mathscr{B''}$, $(r'',m'',\nu'',\kappa'',\zeta'',\eta'')$ coincide on $[0,\tau_1]$, $\mathscr{U}(\tau_1)$. The uniqueness without a smallness condition follows immediately from the uniqueness in the large of the solution of the continuation problem, theorem 3.2 of \cite{II}. This proves theorem \ref{theorem_2}.

\bibliographystyle{plain}
\bibliography{bibliography}

\end{document}